\newcommand{\faith}[1]{{\color{red} #1}\normalcolor}
\newcommand{\sapta}[1]{{\color{blue} #1}\normalcolor}
\def\nats{{\mathbb N}}
\newtheorem{theorem}{Theorem}
\newtheorem{lemma}[theorem]{Lemma}
\newtheorem{observation}[theorem]{Observation}
\renewcommand{\paragraph}[1]{\medskip\noindent{\bf #1}:\ }
\newcommand{\remove}[1]{}
\newcommand{\ExAbsShrink}[1]{}
\newcommand{\AlgName}{{\sc CCReg}}
\newcommand{\knows}[3]{$\mbox{\it SysInfo}^{{#3}} \subseteq \mbox{\it Changes}_{{#1}}^{{#2}}$}
\begin{document}

\title{Simulating a Shared Register in a System that Never Stops Changing\thanks{
       A preliminary version of this paper appeared in DISC 2015~\cite{AttiyaCEKW2015}.
       This work is supported by the Israel Science Foundation (grants 1227/10 and 1749/14),
       by the Natural Science and Engineering Research Council of Canada,
       and by the US National Science Foundation grants 0964696 and 1526725.}}

\author{Hagit Attiya\footnotemark[1]
        \and Hyun Chul Chung\footnotemark[2]\thanks{
            Hyun Chul Chung is currently working at Epoch Labs, Inc. Austin, TX USA.}
        \and Faith Ellen\footnotemark[3] \\
        \and Saptaparni Kumar\footnotemark[2]
        \and Jennifer L.\ Welch\footnotemark[2]}

\maketitle

\renewcommand{\thefootnote}{\fnsymbol{footnote}}
\footnotetext[1]{Department of Computer Science, Technion}
\footnotetext[2]{Department of Computer Science and Engineering, Texas A\&M University}
\footnotetext[3]{Department of Computer Science, University of Toronto}
\renewcommand{\thefootnote}{\arabic{footnote}}

\begin{abstract}
Simulating a shared register can mask the intricacies of designing algorithms
for asynchronous message-passing systems subject to crash failures, since it allows them to run algorithms
designed for the simpler shared-memory model.
Typically such simulations replicate the value of the register in
multiple servers
and require
readers and writers
to communicate with a majority of servers.
The success of this approach for static systems,
where the set of
nodes (readers, writers, and servers)
is fixed,
has motivated several
similar simulations for dynamic systems,
where nodes may enter and leave.
However, existing simulations
need to assume that the system
eventually stops
changing
for a long enough period
or that the system size is bounded.

This paper presents the first simulation of an atomic 
read/write register in a crash-prone asynchronous system that can change size and 
withstand
nodes continually entering and leaving.
The simulation allows the system to keep changing,
provided that the
number of nodes entering and leaving during a fixed time interval
is at most a constant fraction of the current system size. The simulation 
also 
 tolerates node crashes
as long as the number of failed nodes in the system is at most 
a constant fraction of the current system size. 
\end{abstract} 

\section{Introduction}

Simulating a shared read/write register is a way to mask the intricacies of
designing algorithms for asynchronous message-passing systems subject
to crash failures, since it allows them to run algorithms designed for
the simpler shared-memory model.
Typically, such simulations replicate the value of the register 
in multiple servers and require
readers and writers to communicate with a majority of servers.

Most of the work in this area has focused on simulating  atomic shared registers.
 For example, the ABD simulation~\cite{AttiyaBD1995} replicates the value of the register in
server nodes. It assumes that a majority of the server nodes do not fail.
Consider the simplified case of a single writer and a single reader.
To write the value $v$, the writer sends $v$, tagged with a sequence number,
to all servers and waits for acknowledgements from a majority of them.
Similarly, to read, the reader contacts all servers,
waits to receive values from a majority of them, and then
returns the
value
with the highest sequence number.
This approach can be extended to the case of multiple writers and multiple
readers by having each operation consist of a read phase, used by a writer
to determine its sequence number and used by a reader to obtain the return
value, followed by a write phase, used by a writer to disseminate the value
(and sequence number) and used by a reader to announce the sequence number
of the value it is about to return~\cite{LynchS1997}.

The success of this approach for static systems, where the set of
readers, writers, and servers
is fixed, has motivated several
similar simulations for {\em dynamic} systems, where nodes may enter
and leave.
Change in system composition due to nodes entering and leaving is called {\em churn}.
However, existing simulations of atomic  
registers rely either on the assumption that churn eventually stops for a long enough period
(e.g., DynaStore~\cite{AguileraKMS2011} and RAMBO~\cite{LynchS2002})
or on the assumption that the system size is bounded (e.g.,~\cite{BaldoniBKR2009}). 
See Section~\ref{section:related} for a detailed discussion of related work.

In this paper, we take a different approach: {\em we allow
churn to continue forever, while still ensuring that read
and write operations complete and nodes can join and leave the system.}
Our churn model puts an upper bound on the number of nodes
that can enter or leave during any time interval of a certain length.
The upper bound is a constant 
fraction of the number of nodes that are present in the system at the
beginning of the time interval.
So, as the system size grows, the allowable number of changes to its composition
grows as well. Similarly, as the system size shrinks, the allowable
number of changes shrinks.


\remove{\faith{We assume an asynchronous model in which there is
an \emph{unknown} upper bound $D$ on the delay
of any message (between nonfaulty nodes).
The delay of a message can be an arbitrarily small positive value.}}

In more detail, our churn model relies on a parameter $D$ of the system model, which is an upper bound, unknown to the nodes,
 on the delay of any message (between nodes that have not crashed). It is important to note that we set no lower bound on the delay of
messages, so consensus cannot be solved in this model,
even in the static case with no nodes entering or leaving 
 and the possibility of one node crashing. We assume that, in any time interval of length $D$,
the number of nodes that can enter or leave the system 
is at most a constant 
fraction, $\alpha$, of the number of nodes in the system at the beginning of the interval. 
The constant $\alpha$ is known to all  nodes. For example, if the churn rate is $\alpha = 0.05$ and, at time $t$, the system contains $100$ nodes, at most $5$ nodes can enter or leave the system within the interval $[t, t+D]$.
\remove{
[[FE: This paragraph needs rewriting.  I have made a first
attempt at this below.  The paragraph also seems out of place. We should finish the discussion of churn before we start talking about crashes.]]
We model failures in a more flexible way than in the previous version
of this paper~\cite{AttiyaCEKW2015}, which assumed that the
number of failures is a constant fraction of the system size.
For the static case (ABD \cite{AttiyaBD1995}),
we know that, to ensure atomicity, 
 we need to maintain a majority of
correct nodes and, thus, at most a constant fraction of failures can be tolerated.
[[HA: I really don't understand what this is say, and it sounds apologetic.]]
In a general model with unbounded churn, it is impossible to tolerate a variable number of failures, because, if we increase the number of failures to say $F$, when the system size is big and then if the system size suddenly shrinks $(< 2F)$, we may not have majority correctness and the entire system fails miserably. 
 In our model, nodes may fail by crashing at any time but at all times the number of failed nodes can be at most a constant fraction $\Delta$ of the system size. So if at any time the number of failed nodes become equal to the allowed quota, an implicit restriction on the $leaves$ is imposed.  For example, if $\Delta$ is $0.25$ and at time $t$ the system has $100$ nodes, at most $25$ nodes may be crashed in the system at $t$. Also the system size at this point cannot go below $100$ as it would violate the failure assumption. If the adversary wants to crash more nodes, it can gradually flush out the failed nodes from the system in the form of \textit{forced leaves} which are then considered to be a part of the churn.
[[HA: Also this is stuck at the middle of discussing the churn model.]]

[[FE: I think that this paragraph should be removed. This discussion belongs in the model section. It needs justification, which is too long for the introduction. I have moved some of it earlier.]]
} 

We believe ours is an appealing and a reasonable churn model.
For instance, if each node has the same probability of leaving in a time interval,
then the number of leaves is expected to be a fixed fraction of the total number
of nodes.
(See \cite{KoHG2008} for a discussion of churn behavior in practice.)

Our algorithm tolerates crash failures of nodes, as well as churn.
In the preliminary version  of this paper~\cite{AttiyaCEKW2015}, we assumed that the
number of nodes 
that crash is bounded above by a fixed constant, $f$,
independent of the system size.
Here, we only require that the number of crashed nodes is always at most a constant fraction, $\Delta$, of the system size. The constant $\Delta$ is known to all nodes. 
As a consequence, if the number of crashed nodes at some point in an execution
is $\Delta$ times the system size, then no 
nodes can leave. To make the model more dynamic,
we allow, but do not require, the adversary (who is responsible for crashing nodes)
to notify a node, $p$, which has not crashed, that a crashed node, $q$, has left.
Then $p$ sends a leave message on behalf of node $q$ 
(analogously to~\cite{AguileraKMS2011} and~\cite{LynchS2002}).  
Such {\em forced leaves} are  counted as part of the churn.

Our algorithm, called \AlgName{} (for \emph{Continuous Churn Register}),
is intuitive, combining the simple static algorithm for multiple readers and multiple
writers outlined above with a joining protocol
and 
careful
estimations of the number of nodes from which
responses should be received for joining, reading, and writing.
In order to join, a newly entered node announces its entry,
waits to receive sufficiently many acknowledgements,
and then 
announces it has joined. Once a node has joined, it can perform reads and writes.
A node leaves the system by announcing its departure.
Each node maintains a set of changes to the composition of the system,
based on the announcements of nodes entering, joining and leaving.
This information is also propagated through appropriate echo messages
and by having each node append the set of changes it has seen
to its messages that echo enter announcements.

When a node first receives an acknowledgement of its entrance announcement
from a node that has already joined, it calculates the number of acknowledgements it needs
to join as a fraction (depending on $\alpha$ and $\Delta$, which it knows) of the number of nodes it
believes are in the system.
To ensure that information about the system composition is propagated properly,
it is crucial that before a node joins, it gets at least one acknowledgement from a node whose
information is up to date. This is ensured by requiring that that the number of acknowledgements
it receives before joining is sufficiently large, so that at least one of them is from a node that has been in the system for sufficiently long. 
\remove{A joining node
calculates the number of acknowledgements it needs as
a fraction (depending on $\alpha$ and $\Delta$) of the number of nodes it
believes are in the system when it first receives an acknowledgement
from a node that has already joined.
It is crucial that this number be large enough
to ensure that at least one acknowledgement is from a node $p$ that has been in
the system sufficiently long,
so that $p$ has up-to-date information.
This ensures that information about the system composition is propagated properly.}
The number of necessary acknowledgements must also be small enough
to ensure that the node will eventually receive enough
acknowledgements.
\remove{\faith{IN THE DESCRIPTION OF THE MODEL AND CCREG ABOVE, WE ONLY TALK ABOUT nodes,
NOT servers. SUDDENLY, IN THE NEXT PARAGRAPH, WE STARTED TALKING ABOUT servers.
I HAVE CHANGED THE 3 OCCURRENCES OF "servers" IN THE FOLLOWING PARAGRAPH
TO "nodes". IS THIS OKAY?}}

Each reader and writer keeps
track of the number of nodes that have joined,
but not left. We call these \emph{members}.
The read and write phases of operations
wait for responses from a constant fraction of
the nodes believed to be members.
As in the joining protocol, the number of responses must be
small enough so that termination is guaranteed.
To prove \AlgName{} is atomic, 
 we consider two cases:
If a read occurs shortly after a write, then we must ensure that
the sets of nodes contacted by the two operations are intersecting.
This is analogous to the situation in the static, majority algorithm.
If operations are farther apart in time, then,
as in the join protocol, we ensure that information about writes
to the register is propagated properly.

 Our churn model has the pleasing property that it is 
algorithm-independent:  It only refers to nodes that enter or leave
irrespective of whether an entered node completes the join protocol.

\section{Model}
\label{section:model}

We consider an asynchronous message-passing system, with nodes running
\emph{client} threads (\emph{reader} threads or \emph{writer} threads)
and \emph{server}
threads.  Nodes do not have clocks, so they cannot determine the
current time, nor directly measure how much time has elapsed since some
event.  Each node runs exactly one server thread, at most one reader
thread, and at most one writer thread.

Nodes can enter and leave the system during an execution.  We model
this behavior by assuming the existence of an adversary that generates
{\sc Enter}$(p)$ and {\sc Leave}$(p)$ signals to indicate that $p$
should enter or leave; only $p$ experiences these signals (with one possible
exception explained below).  For each node $p$, there is at most
one {\sc Enter}$(p)$ signal and at most one {\sc Leave}$(p)$ signal.
Thus a node that leaves the system cannot re-enter the system.  (This
restriction is easy to remove by giving a new name to a node that
wants to re-enter.)

We say that a node is {\em present} at time $t$ if it has entered the
system (i.e., an {\sc Enter}$(p)$ signal has occurred) but has not
left by time $t$ (i.e., no {\sc Leave}$(p)$ signal has occurred so
far).  We let $N(t)$ denote the number of servers whose nodes are
present at time $t$; $N(t)$ is called the \emph{system size}.  We
assume that there are always at least $N_{min}$ servers whose nodes
are present in the system, i.e., at all times $t$, $N(t) \geq
N_{min}$.  Let $S_0$ denote the set of nodes that are present
initially, i.e.,~at time $0$.  Note that $|S_0| = N(0)$.

Nodes are subject to crash failures.  The adversary generates a
{\sc Crash}$(p)$ signal at node $p$ to indicate that $p$ has crashed.
There can be at most one {\sc Crash}$(p)$ signal for each node $p$.  A
crashed node does not take any more steps, does not send any more
messages, and no more messages are delivered to it.
We say that a node is {\em active} at time $t$ if it is present at time $t$
and has not crashed by time $t$. 

Nodes communicate through a broadcast service that provides a
mechanism to send the same message to all nodes in the system.
If a
server wants to send a message to a single client, it can do so by
broadcasting the message and indicating that the message should be
ignored by the other clients.
Message delays are bounded above by a
system parameter, $D$, that is unknown to the nodes.
In more detail, a
message that is broadcast by a node $p$ at time $t$ is guaranteed to
arrive at each node $q \neq p$ within $D$ units of time, provided that
$q$ is active throughout the interval $[t, t+D]$. 
If $q$ is active for
some, but not all, of $[t,t+D]$, then $q$ might or might not receive the
message.
Nodes that enter after time $t+D$ do not (directly) receive
the message.
A message can take a different amount of time to reach different nodes. 
All messages broadcast by $p$ are received by $q$ in the same
order in which $p$ sent them.  In addition to the maximum transmission
delay, $D$ includes the maximum time for handling the message at both
the sender and the receiver.  There is no lower bound on the actual
length of time it takes for a message to be transmitted, nor on the
amount of time to perform local computation at a node (i.e.,~they
could take an arbitrarily small amount of time). 
An execution in which all messages satisfy these constraints is called {\em valid}. 

Since there is no bound on the ratio between the fastest and slowest
messages, the system is essentially asynchronous and consensus cannot
be solved in our system model.  In fact, any problem that can be
solved in our model, even if the bound $D$ is known by all nodes, can
be solved in the standard asynchronous message passing model, where
there is no upper bound on message delivery time.  To see why,
consider an algorithm $\mathbb{A}$ that is designed to work when there
is a known upper bound $D$ on message delivery time.  Now consider any
execution $e$ of this algorithm in the standard asynchronous message
passing model. We compress it into an execution $e'$ by changing the
real time of the occurrence of the $i^{th}$ event to be $1 -
2^{-i}$. Then every message that is received by a node in $e$ is
received within time $D=1$ of when it was sent in $e'$.  Moreover, if,
in the original execution, messages are received along a link in the
order they were sent, then the same is true in this timed execution.
Since $e'$ is a valid execution of $\mathbb{A}$ in our model with
$D=1$, $e$ is a 
valid 
execution of $\mathbb{A}$, with each node
having the same local execution history in both executions.

We assume the set of nodes that are present does not change too
quickly: There is a constant $\alpha < 1$, known to all nodes, called
the \emph{churn rate}, such that for all times $t$, at most $\alpha
\cdot N(t)$ nodes enter or leave (i.e, experience {\sc Enter} or {\sc
  Leave} signals) during the interval $[t,t+D]$.  This is a constraint
on the adversary.

There is also a constant $\Delta < 1$, known to all nodes, called the
\emph{failure fraction}, such that, at any time $t$, at most $\Delta
\cdot N(t)$ of the nodes present at time $t$ have crashed (i.e., have
experienced a {\sc Crash} signal previously).  This is another
constraint on the adversary.
If $\Delta \cdot N(t)$ nodes are crashed at time $t$, then
active nodes cannot leave the system. 

At any time, 
the adversary may choose to let some of the crashed nodes leave the
system.
Since a crashed node $p$ cannot take any actions on its own behalf,
this is accomplished by the adversary generating a {\sc Leave}$(p)$
signal at an active node $q$.
We call these {\em forced leaves} and they
contribute to the churn quota.  That is, for any time $t$, the sum of
the number of {\sc Enter} and {\sc Leave} signals (including a
{\sc Leave}$(p)$ 
signal 
that occurs at a node other than $p$) that
happen in $[t,t+D]$ is at most $\alpha \cdot N(t)$.  This mechanism works even when the adversary generates a {\sc Leave}$(p)$ signal at more than one active nodes. However all {\sc Leave}$(p)$ signals for a crashed node $p$ contribute to only one forced leave.
It is important
to note that forcing crashed nodes to leave is entirely optional
for the adversary. 
However, letting the adversary have this ability allows
active nodes to leave the system or 
more crashes to be accommodated
if $\Delta \cdot N(t)$ nodes are crashed at time $t$, 
even if no more nodes
enter 
after time $t$
.\footnote{This mechanism could reflect the output of some
failure detection method that is outside the scope of our paper.} 
\remove{\faith{There is one difficulty with accommodating failure detection in our model:
What if more than one node detects that node $p$ has crashed and generates
LEAVE($p$) signals? Does our algorithm still work if the adversary can generate
a LEAVE($p$) signal at more than one active node for the same crashed node $p$?
If so, we should say so. SAPTA: Discuss}}

In this model, we want to simulate a multi-reader multi-writer atomic
read-write register.  A user application at each node determines when
read and write operations are invoked at the node, subject to two
constraints.  The first constraint is that no read or write is invoked
on a node unless the node is active and it is ready to accept
operations. 
The node indicates that it is ready by 
broadcasting a {\em joined} message. 
The second constraint is that the user does not invoke
a read or write operation at a node if there is a previous read or
write operation that it has invoked but is not yet completed.

\remove{\faith{WHAT DOES IT MEAN THAT A NODE EXECUTES A JOIN EVENT? WHY IS THIS PART OF THE MODEL?
  IN THE PRECEDING PARAGRAPH, I CHANGED executing a {\em join} event." TO "broadcasting a {\em joined} message." OR WOULD IT BE BETTER TO
  JUST REMOVE THE SENTENCE " The node indicates that it is ready by executing a {\em join} event." ?}}

To sum up, the adversary determines when nodes
enter, leave, and crash, 
the network 
(which is controlled by the adversary) 
determines when messages are delivered,
and the users 
(which are also controlled by the adversary) 
determine when reads and writes are invoked, subject to
the constraints discussed above.  The algorithm running at the nodes
is responsible for joining and generating responses to read and write
invocations.

We consider an algorithm to be {\em correct} if every execution of the
algorithm in the model just described satisfies the following conditions:
\begin{itemize}
\item Every active process that does not leave or crash eventually joins.
\item Every read or write that is invoked at a process that does not
      leave or crash eventually completes.
\item The read and write operations satisfy atomicity:  
 there is an ordering of all completed reads and writes and some subset of the
      uncompleted writes such that
      every read returns the value of the latest preceding write and, if 
      an operation
      $op_1$ finishes before another operation $op_2$ begins, then $op_1$
      is ordered before $op_2$.
\end{itemize}

\makeatletter
\newcommand{\setalglineno}[1]{%
  \setcounter{ALC@line}{\numexpr#1-1}}
\makeatother

\section{The \AlgName{} Algorithm}

The algorithm combines a mechanism for tracking the composition of
the system, with a simple algorithm, very similar
to~\cite{LynchS1997},
for reading and writing the register,
which associates a unique timestamp with each value that is written. 

\remove{\faith{NOTE THAT THE TIMESTAMP OF A REGISTER VALUE IS USED 2 PARAGRAPHS LATER,
SO I THOUGHT IT MIGHT BE GOOD TO INTRODUCE IT HERE.}}

In order to track the composition of the system
(Algorithm~\ref{algo:Common}),
each node $p$ maintains a set of events, $Changes_p$,
concerning the nodes that have entered the system.
When a node $q$ enters, it adds $enter(q)$ to $Changes_q$
and broadcasts an enter message requesting information about prior events.
We say that $q$ enters or the $enter(q)$ event occurs at the time, $t_q^e$, when this broadcast is sent. 
When a node $p$ finds out that $q$ has entered the system,
either by receiving this message or by learning indirectly from another node,
it adds $enter(q)$ to $Changes_p$.
When $q$ has received sufficiently many messages in response to its request,
it knows relatively accurate information about prior events and the value of
the register.
(Setting the \emph{join bound}, denoted by $\gamma$,
on the number of messages that should be received
is a key challenge in the algorithm.)
When this happens, $q$ adds $join(q)$ to $Changes_q$,
sets its $is\_joined_q$ flag to $true$,
and broadcasts a message saying that it has joined.
We say that $q$ {\em joins} 
or the $join(q)$ event occurs at the time, $t_q^j$, 
when this broadcast is sent.
When $p$ finds out that $q$ has joined,
either by receiving this message or by learning indirectly from another node,
it adds $join(q)$ to $Changes_p$.
When $q$ leaves, it simply broadcasts a leave message
or another node broadcasts a leave message on $q$'s behalf. 
We say that the $leave(q)$ event occurs at the time, $t_q^\ell$, when this broadcast is sent. 
When $p$ finds out that $q$ has left the system,
either by receiving this message or by learning indirectly from another node,
it adds $leave(q)$ to $Changes_p$.

When a node $p$ receives an enter message from a node $q$,
it responds with an enter-echo message containing $Changes_p$,
its current estimate of the register value (together with its timestamp),
$is\_joined_p$ (indicating whether $p$ has joined yet), and $q$.
When $q$ receives an enter-echo in response (i.e., that ends with $q$),
it increments its {\em join-counter}.
The first time $q$ receives such an enter-echo from a joined node,
it computes $join\_bound$,
the number of enter-echo messages it needs in response before it can join.

\remove{\faith{I CHANGED THE ORDER OF LINES 1 AND 2 IN THE CODE FOR ALGORITHM 1, TO BE CONSISTENT WITH THE OTHER CASES AND THE DESCRIPTION IN THE TEXT. I HOPE THIS DOESN'T REQUIRE ANYTHING TO BE CHANGED IN THE PROOF. I ALSO REMOVED "combined with next variable to make a unique timestamp for the write;" BECAUSE THIS IS
NOW EXPLAINED IN THE ACCOMPANYING TEXT.}}

\begin{algorithm*}[tb]
\begin{algorithmic}[1]
	\small
	\item[] {\bf Local Variables:}
	\item[] $is\_joined$ \COMMENT{Boolean to check if $p$ has joined the system; initially $false$}
	\item[] $join\_counter$ \COMMENT{for counting the number of enter-echo messages received by $p$; initially $0$}
	\item[] $join\_bound$ \COMMENT{if non-zero, the number of enter-echo $p$ should receive before joining; initially $0$}
	\item[] $\mbox{\it Changes}$ \COMMENT{set of 
	$enter$, $leave$,  and $join$ events} known by $p$; 
	\item[] \hspace*{.5in}\COMMENT{initially 
	$\{enter(q) ~|~ q \in S_0\} \cup \{ join(q) ~|~ q \in S_0\}$, 
	 if $p \in S_0$, and $\emptyset$, otherwise}
        \item[] $val$ \COMMENT{latest register value known to $p$; initially $\perp$}
        \item[] $seq$ \COMMENT{sequence number of latest value known to $p$; initially 0}
        \item[] $id$ \COMMENT{id of node that wrote latest value known to $p$;
                initially $\perp$}
	\item[]
	\item[] {\bf Derived Variable:}
	\item[] $\mbox{\it Present}	 = \{q ~|~ enter(q) \in \mbox{\it Changes} \wedge leave(q) \not\in \mbox{\it Changes}  \}$
	\item[] \hrulefill	
\begin{multicols*}{2}
	\item[] {\bf When $p$ receives {\sc Enter}(p) signal:}
	\STATE add $enter(p)$ to $\mbox{\it Changes}$
	\STATE bcast $\langle$``enter'', $p\rangle$

	\item[]	
	
	\item[] {\bf When $\langle$``enter'', $q\rangle$ is received:}
        \STATE add $enter(q)$ to $\mbox{\it Changes}$
	\STATE bcast $\langle$``enter-echo'', $\mbox{\it Changes}$,\\
            ~~~$(val,seq,id)$, $is\_joined$, $q\rangle$	
	\item[]	
	
	\item[] {\bf When $\langle$``enter-echo'', $C$, $(v,s,i)$, $j$, $q\rangle$}\\
            ~~~{\bf is received:}
	\IF{$(s,i) > (seq,id)$}            \label{line:new value in enter?}
		\STATE $(val,seq,id) := (v,s,i)$ \label{line:update value in enter}
	\ENDIF
	\STATE $\mbox{\it Changes} := \mbox{\it Changes} \cup C$

	\IF{$\neg is\_joined \wedge (p = q)$}
		\IF{$(j = true)\!\wedge\!(join\_bound = 0)$}
			\STATE 	$join\_bound :=  \gamma \cdot | \mbox{\it Present} | $\label{line:calculate join bound}
		\ENDIF
		\STATE $join\_counter$++
		\IF{$join\_counter\!\geq\! join\_bound\! >\! 0$} \label{line:check if enough enter echoes}
			\STATE $is\_joined := true$
			\STATE add $join(p)$ to $\mbox{\it Changes}$
			\STATE bcast $\langle$``joined'', $p\rangle$
		\ENDIF
	\ENDIF

	\columnbreak
	
	\item[] {\bf When $\langle$``joined'', $q\rangle$ is received:}	
	\STATE add $join(q)$ to $\mbox{\it Changes}$
	\STATE add $enter(q)$ to $\mbox{\it Changes}$
	\STATE bcast $\langle$``joined-echo'', $q\rangle$
	\item[]	
		
	\item[] {\bf When $\langle$``joined-echo'', $q\rangle$ is received:}	
	\STATE add $join(q)$ to $\mbox{\it Changes}$
	\STATE add $enter(q)$ to $\mbox{\it Changes}$
	\item[]	
			
	\item[] {\bf When $p$ receives {\sc Leave}(q) signal:}
	\STATE bcast $\langle$``leave'', $q\rangle$	
	\IF{$p=q$}
	\STATE halt
	\ENDIF
	\item[]	
	
	\item[] {\bf When $\langle$``leave'', $q\rangle$ is received:}
	\STATE add $leave(q)$ to $\mbox{\it Changes}$
	\STATE bcast $\langle$``leave-echo'', $q\rangle$	
	\item[]
	
	\item[] {\bf When $\langle$``leave-echo'', $q\rangle$ is received:}	
	\STATE add $leave(q)$ to $\mbox{\it Changes}$	
	
\end{multicols*}
\end{algorithmic}
\caption{\AlgName---Common code managing the \mbox{\it Changes} variable, for node $p$.}
\label{algo:Common}
\end{algorithm*}

Once a node has joined, its reader and writer threads can handle
read and write operations.
A node is a \emph{member} at time $t$
if it has joined, but not left, by time $t$.
Initially, $Changes_p = \{enter(q)~|~ q \in S_0\} \cup  \{join(q) ~|~ q \in S_0\}$,
if $p \in S_0$, and $\emptyset$ otherwise.
A node $p$ also maintains the set
$Present_p = \{q ~|~ enter(q) \in Changes_p \wedge leave(q) \not\in Changes_p\}$
of nodes that $p$ thinks are present,
i.e.,~nodes that have entered, but have not left, as far as $p$ knows.
\remove{\faith{Present AND Changes ARE ONLY USED BY THE NODE (Algorithm 1), NOT BY THE READER AND WRITER THREADS
(Algorithm 2) OR THE SERVER THREAD (Algorithm 3).}}
The client at node $p$ maintains the derived variable $Members_p = \{q ~|~ join(q) \in Changes_p \wedge leave(q) \not\in Changes_p\}$
of nodes that $p$ thinks are members.

The client thread treats read and write operations in a similar manner
(Algorithm~\ref{algo:rw_client}). We assume that the code segment that is executed in response to each event executes without interruption.  
Both operations start with a read phase,
which requests 
the current value of the register, using a query message,
followed by a write phase, using an update message.
A write operation broadcasts the new value it wishes to write,
together with a timestamp, which consists of a sequence number
that is one larger than the largest sequence number it has seen
and its id that used to break ties.
A read operation just broadcasts the value it is about to return,
keeping its sequence number. As in~\cite{AttiyaBD1995},
write-back is needed to ensure the atomicity of read operations. 
Both the read phase and the write phase wait to receive
sufficiently many response messages.
(Again, setting the \emph{quorum bound}, denoted $\beta$,
on the number of messages that should be received
is a key challenge in the algorithm.)


A client $p$ maintains a sequence number,
$tag$,
which it increments at the beginning of each read phase.
This is used to identify responses
belonging to its current 
read or write phase.

The server thread is simple (Algorithm~\ref{algo:Server}).
The nodes uses the variables {\em val}, {\em seq}, and {\em id} to store the latest value of the register it knows about (in {\em val}) and that value's associated timestamp (in {\em seq} and {\em id}).
When 
the server
receives an update message with a 
larger timestamp, 
it updates the 
value 
and the timestamp. 
(Note that 
timestamps, which consist of $(seq,id)$ pairs, are ordered 
lexicographically.)
When a server receives a query, it responds with the
value and its timestamp.

\begin{algorithm*}[tb]
\begin{algorithmic}[1]
\setalglineno{30}
	\small
	\item[] {\bf Local Variables:}
	\item[] $temp$ \COMMENT{temporary storage for the value being read or written; initially $0$}
	\item[] $tag$ \COMMENT{used to uniquely identify read and write phases of an operation; initially $0$}
	\item[] $quorum\_size$ \COMMENT{stores the quorum size for a read or write phase; initially $0$}
	\item[] $heard\_from$ \COMMENT{the number of responses/acks received for a read/write phase; initially $0$}
	\item[] $rp\_pending$ \COMMENT{Boolean indicating whether a read phase
             is in progress; initially $false$}	
	\item[] $wp\_pending$ \COMMENT{Boolean indicating whether a write phase
             is in progress; initially $false$}	
	\item[] $read\_pending$ \COMMENT{Boolean indicating whether a read is in progress; initially $false$}
	\item[] $write\_pending$ \COMMENT{Boolean indicating whether a write is in progress; initially $false$}
	\item[] {\bf Derived Variable:}
	\item[] 
	$\mbox{\it Members} = \{q ~|~ join(q) \in \mbox{\it Changes} \wedge leave(q) \not\in \mbox{\it Changes}\}$
	\item[] \hrulefill	
\begin{multicols*}{2}
	\item[] {\bf When READ is invoked:}
	\STATE $read\_pending := true$
	\STATE call BeginReadPhase()
	\item[]

	\item[] {\bf When WRITE($v$) is invoked:}
	\STATE $write\_pending := true$
	\STATE $temp := v$
	\STATE call BeginReadPhase()
	\item[]
		
	\item[] {\bf Procedure} BeginReadPhase()	
	\STATE $tag$++
	\STATE bcast $\langle$``query'', $tag$, $p\rangle$  \label{line:bcast query}
	\STATE $quorum\_size :=  \beta |\mbox{\it Members}|$ 
	\STATE $heard\_from := 0$
	\STATE $rp\_pending := true$	
	\item[]	
	
	\item[] {\bf When $\langle$``response'', $(v,s,i), rt
	, q
	\rangle$}\\
            ~~~{\bf  is received:}
	\IF{$rp\_pending  \wedge (rt = tag) 
	\wedge (q=p)
	$}
		\IF{$(s,i) > (seq,id)$}
			\STATE $(val,seq,id) := (v,s,i)$
		\ENDIF
		\STATE $heard\_from$++                \label{line:inc heard from resp}
		\IF{$heard\_from \ge quorum\_size$}   \label{line:quorum reached}
			\STATE $rp\_pending := false$
			\STATE call BeginWritePhase()
		\ENDIF
	\ENDIF
	
	\columnbreak	
	
	\item[] {\bf Procedure} BeginWritePhase()
    \IF{$write\_pending$}		
		\STATE  $val := temp$  \label {line:new timestamp1} 
		\STATE $seq$++\label {line:new timestamp}  

		\STATE $id := p$ \label {line:new timestamp2}  
	    
	\ENDIF
	\IF{$read\_pending$}
	\STATE $temp:= val$
	\ENDIF
	\STATE bcast $\langle$``update'', $(temp,seq,id)$,$tag, p \rangle$ \label{line:bcast update1} \
	\STATE $quorum\_size :=  \beta |\mbox{\it Members}| $
	\STATE $heard\_from := 0$
	\STATE $wp\_pending := true$
	\item[]
	
	\item[] {\bf When $\langle$``ack'', $wt
	, q
	 \rangle$ is received:}
	\IF{$wp\_pending \wedge (wt = tag)
	\wedge (q=p)
	$}
		\STATE $heard\_from$++                \label{line:inc heard from ack}
		\IF{$heard\_from \ge quorum\_size$}
			\STATE $wp\_pending := false$
			\IF{$read\_pending$}
				\STATE $read\_pending := false$
				\STATE RETURN $temp$
			\ENDIF			
			\IF{$write\_pending$}
				\STATE $write\_pending := false$
				\STATE ACK			
			\ENDIF
		\ENDIF
	\ENDIF
	
\end{multicols*}
\end{algorithmic}
\caption{\AlgName---Client code, for node $p$.}
\label{algo:rw_client}
\end{algorithm*}

\remove{\faith{I CHANGED send TO bcast AND ADDED IN THE PROCESS TO WHOM THE MESSAGE IS SENT.
THIS IS CONSISTENT WITH THE SECOND SENTENCE OF PARAGRAPH 5 IN SECTION 2.}}

\begin{algorithm*}[tb]
\begin{algorithmic}[1]
\setalglineno{70}
\begin{multicols*}{2}
	\item[] {\bf When $\langle$``update'', $(v,s,i), wt, q \rangle$}\\
            ~~~{\bf is received:}
	\IF{$(s,i) > (seq,id)$}        \label{line:new value?}
		\STATE $(val,seq,id) := (v,s,i)$ \label{line:update value}
	\ENDIF
	\IF{$is\_joined$}
	        \STATE 
	        bcast
	         $\langle$``ack'', $wt
	         , q
	         \rangle$
	\ENDIF
	\STATE bcast $\langle$``update-echo'', $(val,seq,id)\rangle$

	\columnbreak

	\item[] {\bf When $\langle$``query'', $rt$, $q\rangle$ is received:}
	\IF{$is\_joined$}
		\STATE  
		bcast
		 $\langle$``response'', $(val,seq,id), rt
		 , q
		  \rangle$
	\ENDIF
	\item[]
	
	\item[] {\bf When $\langle$``update-echo'', $(v,s,i)\rangle$}\\
            ~~~{\bf is received:}	
	\IF{$(s,i) > (seq,id)$}
		\STATE $(val,seq,id) := (v,s,i)$
	\ENDIF
	
\end{multicols*}
\end{algorithmic}
\caption{\AlgName---Server code, for node $p$.}
\label{algo:Server}
\end{algorithm*}

\remove{
Every joined node maintains a failure detection module as described in Algorithm~\ref{algo:FailDet}. We define the notion of a $round$ at a node $p$: A $round$ starts when $p$ $joins$ the system, and continues until it has heard of  $((1+\alpha)^2 - 1)\cdot |Present_p|$ changes(enters/leaves) and then $p$ moves on to the next $round$.
Every joined node repeatedly runs $rounds$ of failure detection.  At the beginning of each round, the node saves a copy of the set of nodes it believes are present.  It then broadcasts a $ping$ message (with a unique sequence number) and waits until it learns about a certain number of change events.  This number should be large enough to ensure that at least $2D$ time has elapsed (round trip message delay for $ping$ and $ping\_ack$).  When (if) the calculated number of change events occur, the node then checks all the nodes that it recorded in its saved version of present nodes at the beginning of the round; if there is a node from which a $ping\_ack$ was not received and this node has not left in the meantime, then it is marked as failed and a $failed$ message is broadcasted on its behalf is. Information about failed nodes is propagated. 
}
\remove{
\begin{algorithm*}[tb]
\begin{algorithmic}[1]
\setalglineno{79}
	\item[]
	\item[] {\bf Local Variables:}
    \item[] $saved\_present$ \COMMENT{Set that stores the state of the system at the beginning of a failure detection round; initially $Present$}
	\item[] $suspect[q]$ \COMMENT{Boolean to store if a node $q$  is suspected; initially $true$ for all $ q \in Present$}
	\item[] $change\_counter$ \COMMENT{counts the number of changes from the start of a round; initially $0$}
	\item[] $ fd\_round\_no$ \COMMENT{Integer, keeps track of the current round number; initially 1}
	\item[]
	\item[] {\bf Derived Variable:}
	\item[] $change\_number $\COMMENT{Integer, calculates the number of changes a FD round should wait for; \\ initially $ g(|Present|) ;g = ((1+\alpha)^2 - 1)\cdot \frac{(1 + \alpha)^2}{(1 - \alpha)^2}$}
	\item[] \hrulefill	
\begin{multicols*}{2}
	\item[] {\bf When $p$ joins the system, in addition to \\everything else:}
	\STATE bcast $\langle ping, p, fd\_round\_no  \rangle$
	\item[]	
	
	\item[] {\bf When $\langle ping, q, s \rangle$ is received:}  \COMMENT{even if not joined}
	 \STATE send $\langle ping\_ack, q, s, p \rangle$ to $q$
	 \item[]
	 \item[] {\bf When $\langle ping\_ack, p, s, q \rangle$ is received:}
 		\IF{$s = fd\_round\_no$}
 		\item[] \COMMENT{this is an ack for current fd round}
 			\STATE $suspect[q] := false$
 		\ENDIF
 	\item[]
 	\item[] {\bf When enter or leave with one parameter is received: }
 	\item[]\COMMENT{ in addition to prior code, do this: }
		\STATE $ change\_counter++$
		\IF{event is leave}
		\item[] \COMMENT{don't suspect a left node}
 			\STATE $suspect[q] := false$
 		\ENDIF

		\IF{$change\_counter \geq change\_number$}
		\item[] \COMMENT{at least 2D time has elapsed since \\ current failure-detector round
       began; check which nodes returned an ack to the $ping$}
 			\FORALL{ $q \in saved\_present $ }
 				\IF{$suspect[q] = true$}
 					\STATE add $leave(q)$ to $Changes$
 					\STATE broadcast $\langle failed(q) \rangle$
 				\ENDIF
 			\ENDFOR
			\item[] \COMMENT{start next round of failure detection}
			\STATE $saved\_present := Present$		
			\STATE $suspect[q] = true \forall  q \in Present$	
			\STATE $change\_number = g \cdot (|Present|) $
			\STATE $change\_counter = 0$
			\STATE $fd\_round\_no++$
			\STATE bcast $\langle ping, p, fd\_round\_no  \rangle$
 			
 		\ENDIF
 	\item[]	
 	\item[] {\bf When $\langle failed(q) \rangle$ is received: }
 		\IF{$leave(q)$ is not in $Changes$}
 			\STATE add $leave(q)$ to $Changes$
 			\STATE broadcast $\langle failed(q) \rangle$
 		\ENDIF
\end{multicols*}
\end{algorithmic}
\caption{\AlgName---\sc Code for failure detection at joined node $p$.}
\label{algo:FailDet}
\end{algorithm*} 		
}	

The correctness of \AlgName{} relies on the
following assumption about the churn rate
\begin{align}
\alpha  &\leq 1 - 2^{-1/4} \approx 0.159 \label{parameters:G} 
\end{align}
and the following relation between the system parameters 
$\alpha$, $\Delta$ and $N_{min}$:
\begin{align}
1 & < \left((1- \alpha)^3 -\Delta(1+\alpha)^3\right)N_{min} \label{parameters:D}                  												\end{align}
%
Note that Assumption~(\ref{parameters:G}) restricts the churn rate $\alpha$ during an interval of length $D$
to less than 16\%. We believe this is reasonable, since $D$, the maximum message delay, will typically be quite small. 


The algorithm parameters $\gamma$ and $\beta$ must satisfy the following 
constraints:
\begin{align}															
\gamma & \geq   \frac{1}{N_{min} (1-\alpha)^3}+(1+\Delta) \frac{(1+\alpha)^3}{(1-\alpha)^3}-1                          		
															\label{parameters:H}\displaybreak[0]\\
\gamma &\leq  \frac{(1-\alpha)^3}{(1+\alpha)^3}-\Delta                \label{parameters:B}\displaybreak[0]\\
 \beta & \leq \frac{(1-\alpha)^3}{(1+\alpha)^2} - \Delta (1+\alpha)
                                                      \label{parameters:C}\displaybreak[0]\\
 \beta &>\frac{(1+\alpha)^5 - 1}{(1 - \alpha)^4}      \label{parameters:E}\displaybreak[0]\\
\beta &>\frac{ (1+\Delta)(1+\alpha)^3 - (1-\alpha)^3 + 1}{(2-2\alpha +\alpha^2)(1-\alpha)^2(1+\alpha)^{-2} }\label{parameters:F}
\end{align}
Table~\ref{table:assumptions} gives a few sets of values for which 
the above assumptions are satisfied. 
\begin{table}[bt]
\begin{center}
\begin{tabular}{ |c|c|c|c|c| }
\hline
\multicolumn{3}{|c|}{system parameters} &\multicolumn{2}{c|}{derived algorithm parameters} \\\hline
churn rate ($ \alpha$) & failure fraction ($\Delta$) & minimum system size ($N_{min}$)
& join bound ($\gamma$) & quorum size ($\beta$) \\\hline
$0$ & $0.33$ & N/A & N/A& $0.665$ \\ \hline
$0.01$ & $0.26$ & $7$& $0.67$ & $0.684$ \\ \hline
$0.04$ & $0.06$ & $9$& $0.72$ & $0.737$ \\ \hline
\end{tabular}
\end{center}
\caption{Sets of values for which the assumptions on system parameters $(\alpha, \Delta, N_{min})$ and derived algorithm parameters $(\gamma, \beta)$ are satisfied.  }\label{table:assumptions}
\end{table}

\remove{
\begin{center}
\begin{tabular}{ |c|c|c|c|c| }
\hline
Churn Rate ($ \alpha$) & Failure Factor ($\Delta$) & Join Bound ($\gamma$) & Minimum System Size ($N_{min})$& Quorum Size ($\beta$) \\ \hline
$0.0001$ & $0.16$ & $0.679$ & $6$& $0.661$ \\ \hline
$0.001$ & $0.16$ & $0.674$ & $6$& $0.664$ \\ \hline
$0.01$ & $0.13$ & $0.68$ & $6$& $0.668$ \\ \hline
$0.02$ & $0.10$ & $0.69$ & $7$& $0.675$ \\ \hline
$0.03$ & $0.05$ & $0.7$ & $6$& $0.66$ \\ \hline
$0.04$ & $0.05$ & $0.68$ & $8$& $0.7$ \\ \hline

\end{tabular}
\end{center}
}

\remove{
\begin{align}
-1/\log_2(1- \alpha) &\geq 4 \label{parameters:G}
\end{align}}

\section{Correctness Proof} \label{section:proof}

We will show that {\sc \AlgName{}} satisfies the three properties listed
at the end of Section~\ref{section:model}.  
Lemmas~\ref{lem:size} through~\ref{lem:knowswhenjoined} 
are used to prove
Theorem~\ref{thm:joins}, 
which states that every node eventually joins, provided it does not
crash or leave.  Lemmas~\ref{lem:present-2D} through~\ref{lem:joined-for-D}
are used to prove
Theorem~\ref{thm:ops-live}, which states
that every operation invoked by a node that remains active eventually
completes.  Lemmas~\ref{lem:lin1} through~\ref{lem:lin4} are used to prove Theorem~\ref{thm:atomicity}, which
states that atomicity is satisfied.

Consider any execution.
We use $Changes_p^t$, $Present_p^t$, and $Members_p^t$ to denote the sets $Changes_p$, $Present_p$, and $Members_p$, respectively, at time $t$ of the execution. 
We begin by bounding the number of nodes that enter
during an interval of time
and the number of nodes that are present at the end of the
interval, as compared to the number present at the beginning.

\begin{lemma}\label{lem:size}
For all $i \in \nats$ and all $t \geq 0$, at most $((1 + \alpha)^i-1)N(t)$ nodes enter during $(t,t+Di]$ and $(1 - \alpha)^i N(t) \leq N(t+Di) \leq (1+\alpha)^iN(t)$.
\end{lemma}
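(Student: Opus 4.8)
The plan is to prove both claims simultaneously by induction on $i$, because the bound on the number of nodes entering in a given block of length $D$ depends on the system-size \emph{upper} bound at the start of that block; the two statements are coupled and cannot be separated.

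First I would record the single-step facts that follow directly from the churn constraint. Fix any time $s$. During the interval $[s,s+D]$ at most $\alpha\cdot N(s)$ signals of type {\sc Enter} or {\sc Leave} occur in total, so in particular the number of nodes entering is at most $\alpha\cdot N(s)$ and the number leaving is at most $\alpha\cdot N(s)$. Since $N(s+D)$ equals $N(s)$ plus the number of nodes entering during $(s,s+D]$ minus the number leaving during $(s,s+D]$, and each of those counts is at most $\alpha\cdot N(s)$, we obtain $(1-\alpha)N(s)\le N(s+D)\le (1+\alpha)N(s)$. This is exactly the $i=1$ case, the entering bound being $\alpha N(s)=((1+\alpha)^1-1)N(s)$.

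For the base case $i=0$ the interval $(t,t]$ is empty, so no node enters, matching $((1+\alpha)^0-1)N(t)=0$, and trivially $(1-\alpha)^0 N(t)\le N(t)\le (1+\alpha)^0 N(t)$. For the inductive step, assume the lemma holds for $i$ and split $(t,t+D(i+1)]$ into $(t,t+Di]$ and $(t+Di,t+D(i+1)]$. By the induction hypothesis, at most $((1+\alpha)^i-1)N(t)$ nodes enter during the first block and $N(t+Di)\le(1+\alpha)^i N(t)$. Applying the single-step entering bound at $s=t+Di$, at most $\alpha N(t+Di)\le \alpha(1+\alpha)^i N(t)$ nodes enter during the second block, so the total number entering during $(t,t+D(i+1)]$ is at most $((1+\alpha)^i-1)N(t)+\alpha(1+\alpha)^iN(t)=((1+\alpha)^{i+1}-1)N(t)$, which is the desired telescoping identity. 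For the size bounds I would chain the single-step inequalities at $s=t+Di$ with the induction-hypothesis size bounds to get $(1-\alpha)^{i+1}N(t)\le(1-\alpha)N(t+Di)\le N(t+D(i+1))\le(1+\alpha)N(t+Di)\le(1+\alpha)^{i+1}N(t)$.

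I do not expect a genuine obstacle here; the argument is a clean multiplicative induction. The two points requiring care are exactly the coupling noted above (the entering bound for block $i+1$ invokes the size \emph{upper} bound at $t+Di$, so both statements must be carried through the induction together) and the endpoint bookkeeping: the churn assumption is stated for the closed interval $[s,s+D]$, whereas the blocks are half-open, but since each half-open block is contained in the corresponding closed interval the relevant counts only decrease, so all inequalities are preserved.
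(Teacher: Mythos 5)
Your proof is correct and takes essentially the same approach as the paper's: a joint induction on $i$ that splits off the final block of length $D$, applies the churn assumption at time $t+Di$ together with the inductive size bounds, and telescopes. The only cosmetic difference is that the paper bounds $N(t+D(i+1))$ above by $N(t)$ plus the total number of enters, while you chain $N(t+D(i+1)) \le (1+\alpha)N(t+Di)$; both give $(1+\alpha)^{i+1}N(t)$.
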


\begin{proof}
The proof is by induction on $i$.
For $i = 0$ and all $t \geq 0$, $(t,t+Di]$ is empty,
and hence, $0 = ((1+\alpha)^i-1)N(t)$ nodes enter during this interval and
\[ N(t+iD) = N(t) =  (1+\alpha)^iN(t) = (1 - \alpha)^i N(t) . \]
Now let $i \geq 0$ and $t \geq 0$.
Suppose
at most $((1 + \alpha)^i-1)N(t)$ nodes enter during $(t,t+Di]$ and
$(1 - \alpha)^i N(t) \leq N(t+Di) \leq (1+\alpha)^iN(t)$.

Let $e\geq 0$ and $\ell\geq 0$ be the number of nodes that enter and leave,
respectively, during $(t+Di,t+D(i+1)]$.
By the churn assumption, $e+\ell\leq \alpha N(t+Di)$, so
$e,\ell \leq \alpha N(t+Di) \leq \alpha(1+\alpha)^iN(t)$.
The number of nodes that enter during $(t,t+D(i+1)]$ is at most
\[
((1 + \alpha)^i-1)N(t) + e
    \leq ((1 + \alpha)^i-1)N(t) + \alpha(1+\alpha)^iN(t)
    = ((1+\alpha)^{i+1}-1)N(t) . \]
Hence,
\[ N(t + D(i+1)) \leq N(t) +  ((1+\alpha)^{i+1}-1)N(t) = (1+\alpha)^{i+1}N(t) .\]
Furthermore,
\[ N(t + D(i+1)) \geq N(t+Di) -\ell
    \geq N(t+Di) - \alpha N(t+Di) = (1 - \alpha)N(t+Di)
    \geq (1-\alpha)^{i+1}N(t) . \]
By induction, the claim is true for all $i \in \nats$.
\end{proof}


We are also interested in the number of nodes that leave
during an interval of time.
In the proof of the next lemma, the calculation of the maximum number of nodes that leave during an interval
is complicated by the possibility of nodes entering during the interval,
allowing additional nodes to leave.

\begin{lemma}\label{lem:left2}
For $\alpha >0$, all nonegative integers $i \leq -1/\log_2(1-\alpha)$ and all $t \geq 0$, at most $(1-(1-\alpha)^i) N(t)$ nodes leave during $(t,t+Di]$.
\end{lemma}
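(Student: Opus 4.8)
The plan is to prove this by induction on $i$, exactly parallel to the structure of the proof of Lemma~\ref{lem:size}, peeling off the \emph{first} subinterval $(t,t+D]$ rather than the last. First I would rewrite the hypothesis in a more usable form: since $0 < 1-\alpha < 1$, the condition $i \leq -1/\log_2(1-\alpha)$ is equivalent to $(1-\alpha)^i \geq 1/2$, and this in turn gives $(1-\alpha)^{i-1} = (1-\alpha)^i/(1-\alpha) > 1/2$. I will carry the statement ``for all $t \geq 0$'' through the induction so that I can reapply the hypothesis starting at time $t+D$. The base case $i=0$ is immediate: the interval $(t,t]$ is empty and $1-(1-\alpha)^0 = 0$.

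For the inductive step, fix $t$ and let $e$ and $\ell$ be the numbers of nodes that enter and leave, respectively, during $(t,t+D]$. The churn assumption gives $e + \ell \leq \alpha N(t)$. The key exact identity I would establish is $N(t+D) = N(t) + e - \ell$: every {\sc Enter} signal adds a present node and every {\sc Leave} removes one, and a node that both enters and leaves within the subinterval contributes $+1$ to $e$ and $+1$ to $\ell$ with net zero effect on $N(t+D)$, so the raw signal counts account for the change exactly. The total number of leaves in $(t,t+Di]$ is $\ell$ plus the number of leaves in $(t+D, t+Di]$; applying the induction hypothesis at time $t+D$ with $i-1$ subintervals (valid since $i-1 < i \leq -1/\log_2(1-\alpha)$) bounds the latter by $b \cdot N(t+D)$, where I write $b = 1-(1-\alpha)^{i-1}$. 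Substituting the identity for $N(t+D)$ gives the upper bound $b\,N(t) + (1-b)\ell + b\,e$ on the total.

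It then remains to maximize $(1-b)\ell + b\,e$ subject to $e+\ell \leq \alpha N(t)$ and $e,\ell \geq 0$. This is where the hypothesis does its work: because $(1-\alpha)^{i-1} > 1/2$ we have $b < 1/2$, hence $1-b > b$, so the maximum is attained at $\ell = \alpha N(t)$, $e = 0$, i.e.\ the adversary does best by only leaving and never entering. The bound becomes $b\,N(t) + (1-b)\alpha N(t) = N(t)\big(b(1-\alpha)+\alpha\big)$, and the routine simplification $b(1-\alpha)+\alpha = (1-(1-\alpha)^{i-1})(1-\alpha)+\alpha = 1-(1-\alpha)^i$ finishes the step. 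I expect the main obstacle to be recognizing precisely this tradeoff: entering nodes raise $N(t+D)$ (and hence the future leave budget) but consume churn quota in the present subinterval, and the hypothesis $i \leq -1/\log_2(1-\alpha)$ is exactly the condition guaranteeing that, over only $i$ subintervals, leaving rather than entering is the adversary's better choice, so that the clean bound $(1-(1-\alpha)^i)N(t)$ holds. (Indeed, when $(1-\alpha)^{i-1} \leq 1/2$ the maximization would instead favor entering and the stated bound would fail, confirming the hypothesis is needed.)
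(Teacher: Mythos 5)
Your proof is correct and follows essentially the same route as the paper's: induction on $i$ peeling off the first subinterval $(t,t+D]$, the exact identity $N(t+D)=N(t)+e-\ell$, and the hypothesis $i \leq -1/\log_2(1-\alpha)$ used precisely to make the leave-only extreme the worst case. The only difference is presentational — you phrase the final step as an explicit maximization over $(e,\ell)$, whereas the paper substitutes $e+\ell\leq\alpha N(t)$ and then $\ell\leq\alpha N(t)$ sequentially, noting the coefficient $2(1-\alpha)^i-1\geq 0$; the two computations are algebraically identical.
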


\begin{proof}
The proof is by induction on $i$. When $i = 0$, the interval is empty,
so $0 = (1-(1-\alpha)^0) N(t)$ nodes leave during the interval.
Now let $i \geq 0$, let $t \geq 0$, and suppose
at most $(1-(1-\alpha)^i) N(t+D)$ nodes leave during $(t+D,t+D(i+1)]$.

Let $e\geq 0$ and $\ell\geq 0$ be the number of nodes that enter and leave, respectively, during $(t,t+D]$.
By the churn assumption, $e+\ell \leq \alpha N(t)$, so $\ell \leq  \alpha N(t)$ and
$N(t+D) = N(t)+ e - \ell  =  N(t) + (\ell + e) - 2\ell \leq (1 + \alpha) N(t) - 2\ell$.
The number of nodes that leave during $(t,t+D(i+1)]$
is the number that leave during $(t,t+D]$ plus the number that leave during $(t+D,t+D(i+1)]$, which
is at most
\begin{eqnarray*}
\ell + (1-(1-\alpha)^i) N(t+D)
&\leq & \ell + (1-(1-\alpha)^i) [(1 + \alpha) N(t) - 2\ell]\\
&= & (1-(1-\alpha)^i) (1 + \alpha) N(t) + ( 2(1-\alpha)^i-1) \ell\\
& \leq & (1-(1-\alpha)^i) (1 + \alpha) N(t) + ( 2(1-\alpha)^i-1)\alpha N(t)\\
& = & (1-(1-\alpha)^{i+1}) N(t).
\end{eqnarray*}
Note that $2(1-\alpha)^i -1 \geq 0$, since $i \leq -1/\log_2(1-\alpha)$.
By induction, the claim is true for all $i \in \nats$.
\end{proof}

Recall 
 that a node is {\em active} at time $t$ if it has entered by time $t$,
but has not left or crashed by time $t$.
The next lemma shows that some node remains active throughout any interval of length $3D$.

\begin{lemma}
\label{lem:gen0}
For every $t > 0$,
at least one node
is active throughout $[\max\{0,t -2D\},t +D]$.
\end{lemma}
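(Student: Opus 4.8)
The plan is to start from the set of nodes that are present at time $s := \max\{0,\,t-2D\}$ and show that enough of them both survive (do not leave) and stay uncrashed throughout the entire interval $[s,\,t+D]$, which has length at most $3D$. Since $N(s)\ge N_{min}$, it suffices to produce a lower bound on the number of nodes active throughout $[s,\,t+D]$ that is strictly positive, and indeed assumption~(\ref{parameters:D}) is tailored precisely so that the final bound exceeds $1$.

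First I would bound the leaves. By assumption~(\ref{parameters:G}) we have $1-\alpha \ge 2^{-1/4}$, so $-1/\log_2(1-\alpha)\ge 4 \ge 3$; hence Lemma~\ref{lem:left2} applies with $i=3$ and shows that at most $(1-(1-\alpha)^3)N(s)$ nodes leave during $(s,\,s+3D]$. Because $(s,\,t+D]\subseteq (s,\,s+3D]$ and leaves accumulate monotonically over subintervals, at most this many of the nodes present at $s$ leave during $(s,\,t+D]$. Consequently at least $(1-\alpha)^3 N(s)$ nodes are present throughout all of $[s,\,t+D]$.

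Next I would discount the crashed nodes and relate the system sizes. Since crashes are permanent, a node that is present throughout $[s,\,t+D]$ fails to be active throughout exactly when it has crashed by time $t+D$; such a node is still present at $t+D$, so it is counted among the crashed nodes present at time $t+D$, of which there are at most $\Delta N(t+D)$ by the failure-fraction constraint. Thus the number of nodes active throughout $[s,\,t+D]$ is at least $(1-\alpha)^3 N(s) - \Delta N(t+D)$. Whether $t\ge 2D$ (so $t+D=s+3D$) or $t<2D$ (so $s=0$ and $(0,\,t+D]\subseteq (0,\,3D]$), Lemma~\ref{lem:size} with $i=3$ gives $N(t+D)\le (1+\alpha)^3 N(s)$. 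Substituting and using $N(s)\ge N_{min}$ yields the lower bound $\bigl((1-\alpha)^3-\Delta(1+\alpha)^3\bigr)N_{min}$, which is $>1$ by assumption~(\ref{parameters:D}); hence at least one node is active throughout $[s,\,t+D]$.

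The main obstacle is the bookkeeping for crashed nodes: the failure-fraction bound $\Delta N(\tau)$ holds pointwise at each time $\tau$, whereas crashes accumulate, so one must argue that evaluating it at the \emph{right} endpoint $t+D$ (where every relevant present-throughout node is still present) correctly captures all such nodes that crash anywhere in the interval, and that it is the right endpoint, not the left, that is binding. The only other care needed is confirming that $i=3$ lies in the valid range for Lemma~\ref{lem:left2}, which is exactly what assumption~(\ref{parameters:G}) guarantees.
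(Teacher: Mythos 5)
Your proposal is correct and follows essentially the same route as the paper's proof: bound the leaves from the set $S$ of nodes present at $\max\{0,t-2D\}$ via Lemma~\ref{lem:left2} and Assumption~(\ref{parameters:G}), bound the crashed nodes by $\Delta(1+\alpha)^3|S|$ using Lemma~\ref{lem:size} and the failure fraction at time $t+D$, and conclude with Assumption~(\ref{parameters:D}). The only difference is that you make explicit two points the paper leaves implicit—the restriction from the full $3D$ interval to the possibly shorter $(s,t+D]$, and the use of permanence of leaves and crashes to upgrade ``active at $t+D$'' to ``active throughout''—which is a matter of care, not of approach.
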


\begin{proof}
Let $S$ be the set of nodes present at time
$t' = \max\{0,t -2D\}$,
so $|S| =  N(t') \geq N_{min}$.
By Lemma~\ref{lem:size}, at most $((1+\alpha)^3-1)|S|$ 
nodes enter during $(t',t+D]$,
so there are at most  $(1+\alpha)^3|S|$ nodes present at time $t+D$ and 
at most $ \Delta(1+\alpha)^3|S|$ nodes have crashed by time $t+D$.
Assumption~(\ref{parameters:G}) implies that
$-1/\log_2(1- \alpha) \geq 
4 \geq 
 3$. So,
by Lemma~\ref{lem:left2},
at most $(1-(1-\alpha)^3) |S|$ nodes leave during
$(t',t+D]$ 
and there are at least $(1-\alpha)^3 |S|$ nodes present at time $t+D$. 
Thus, at least
\begin{equation}
((1-\alpha)^3   - \Delta(1+\alpha)^3)|S|  \geq ((1-\alpha)^3   -\Delta(1+\alpha)^3 ) N_{min} 
\end{equation}
nodes in $S$ are active at time $t+D$.
By Assumption~(\ref{parameters:D}),
$ \left((1- \alpha)^3 -\Delta(1+\alpha)^3\right)N_{min}  > 1$,
so at least one node in $S$ is still active at time
$t+D$.
\end{proof}

\remove{\faith{enter, leave, and join events WERE NOT DEFINED, NOR WAS THE NOTATION $t_q^e$, $t_q^\ell$, and $t_q^j$.
I HAVE ADDED DEFINITIONS IN THE SECOND PARAGRAPH OF SECTION 3.}}

We define the set of all enter, join, and leave events that occur
during time interval $I$ to be
\[
\mbox{\it SysInfo}^I =
    \{ enter(q) ~|~t_q^e \in I\}
    \cup \{ join(q) ~|~t_q^j \in I\}
    \cup \{ leave(q) ~|~t_q^\ell \in I\} . \]
In particular,
$\mbox{\it SysInfo}^{[0,0]} = \{ enter(q) ~|~q \in S_0\}
                        \cup  \{ join(q) ~|~q \in S_0\}$.

Since a node $p$ that is active throughout $[t_p^e,t+D]$ directly
receives all enter, joined, and leave messages broadcast during $[t_p^e,t]$,
within $D$ time, we have:

\begin{observation}
\label{obs:S0}
For every node $p$ and all times $t \geq t_p^e$, if $p$ is active at time $t + D$, then\\
\knows{p}{t+D}{[t_p^e, t]}.
\end{observation}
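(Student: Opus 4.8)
The plan is to verify the inclusion $\mbox{\it SysInfo}^{[t_p^e,t]} \subseteq \mbox{\it Changes}_p^{t+D}$ one event at a time, relying on the bounded message delay together with the fact that nodes only ever add events to their $\mbox{\it Changes}$ set and never remove them. First I would record a preliminary strengthening of the hypothesis: since a crashed node stays crashed and a node that has left never returns, if $p$ is active at time $t+D$ then $p$ has neither crashed nor left by any earlier time, and since $p$ entered at $t_p^e$, it is in fact active throughout the entire interval $[t_p^e, t+D]$. This is exactly the form of the hypothesis that the message-delay guarantee of the model requires.

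Next I would fix an arbitrary event $x \in \mbox{\it SysInfo}^{[t_p^e,t]}$ and split into cases according to the node $q$ to which $x$ refers and the time $\tau$ of the associated broadcast, where $\tau \in [t_p^e,t]$ by the definition of $\mbox{\it SysInfo}$. If $q \ne p$, then the enter, joined, or leave message that generates $x$ is broadcast by some node at time $\tau$, and because $p$ is active throughout $[\tau,\tau+D] \subseteq [t_p^e,t+D]$, the delay bound guarantees that $p$ receives this message by time $\tau + D \le t+D$. The corresponding handler in Algorithm~\ref{algo:Common} then inserts the matching $enter$, $join$, or $leave$ event for $q$ into $\mbox{\it Changes}_p$, and since events are never deleted, $x \in \mbox{\it Changes}_p^{t+D}$.

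The remaining cases concern events $p$ does not learn by receiving a message. If $q = p$, then $x$ is either $enter(p)$, which $p$ added to $\mbox{\it Changes}_p$ at $t_p^e$, or $join(p)$, which $p$ added when it joined at $t_p^j \le t$; the event $leave(p)$ cannot belong to the set, since $p$ is still active at $t+D$. Finally, the time-$0$ events $\{enter(q), join(q) : q \in S_0\}$ lie in $\mbox{\it SysInfo}^{[t_p^e,t]}$ only when $t_p^e = 0$, i.e.\ when $p \in S_0$, and these are placed in $\mbox{\it Changes}_p$ by its initialization. In every case $x$ is present at time $t+D$, which gives the desired inclusion.

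I expect the only real subtlety to be the bookkeeping of which events $p$ records locally versus by message receipt: making sure the ``active throughout $[\tau,\tau+D]$'' hypothesis is genuinely in force so the delay bound applies, and checking that the self-generated events ($enter(p)$ and $join(p)$) and the initial $S_0$ events are covered by the code's local additions and its initialization rather than by delivery. One corner to watch is a leave broadcast for $q$ issued by $p$ itself (a forced leave), which the common code does not add to $p$'s own $\mbox{\it Changes}$ directly; this must be argued separately (e.g.\ it is picked up via echoes), though it does not affect the main delivery argument. No quantitative churn estimates from Lemmas~\ref{lem:size}--\ref{lem:left2} are needed here; the message-delay guarantee and the monotonicity of $\mbox{\it Changes}$ do all the work.
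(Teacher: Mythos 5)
Your proposal is correct and takes essentially the same approach as the paper, which justifies this observation with a single sentence: a node $p$ that is active throughout $[t_p^e, t+D]$ directly receives, within $D$ time, every enter, joined, and leave message broadcast during $[t_p^e, t]$, and the corresponding events are added to $\mbox{\it Changes}_p$. Your case analysis (events about $p$ itself handled by local additions, the $S_0$ events by initialization, and the flagged forced-leave corner) is simply a more careful elaboration of that same delivery-plus-monotonicity argument than the paper itself spells out.
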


By assumption, for every node $p \in S_0$,
$\mbox{\it SysInfo}^{[0,0]}  \subseteq \mbox{\it Changes}_p^0$,
and hence Observation~\ref{obs:S0} implies:

\begin{observation}
\label{obs:S1}
For every node $p \in S_0$, if $p$ is active at time $t \geq 0$, then\\
\knows{p}{t}{[0, \max\{0,t-D\}]}.
\end{observation}

The purpose of Lemmas~\ref{lem:gen1}, \ref{lem:gen2},
and~\ref{lem:knowswhenjoined}
is to show that information about
nodes entering, joining, and leaving is propagated properly, via the
{\em Changes} sets.



\remove{
\begin{figure}[htbp!]
\centering
\fbox{
\begin{tikzpicture}
\draw[black, ultra thin] (-6,2) -- (6,2);
\draw[black, ultra thin] (-6,-1) -- (6,-1);
\draw[black, ultra thin] (-6,-4) -- (6,-4);
\node [left] at (-6,2) {$p$};
\node [left] at (-6,-1) {$Q$};
\node [left] at (-6,-4) {$r$};

\draw [fill] (-4,-1) circle [radius=0.05];
\draw [fill] (-2,2) circle [radius=0.05];
\draw [fill] (-1,2) circle [radius=0.05];
\draw [fill] (1,2) circle [radius=0.05];
\draw [fill] (2,2) circle [radius=0.05];
\draw [fill] (3,2) circle [radius=0.05];
\draw [fill] (4,2) circle [radius=0.05];
\draw [fill] (-4,2) circle [radius=0.05];
\draw [dashed] (1,2) -- (1,-4);
\draw [dashed] (3.4,2) -- (3.4,-1);

\node [above] at (-4,-1) {$U$};
\node [above] at (-2,2) {$T$};
\node [above] at (-1,2) {$t_p^e$};
\node [above] at (1,2) {$T+D$};
\node [above] at (2,2) {$T''$};
\node [above] at (4,2) {$T+2D$};
\node [above] at (-4,2) {$T''-2D$};


\draw[ultra thick,color=black!60!green] (-1,1.9) -- (4,1.9);
\draw[ultra thick,color=black!60!green] (-4,-1.1) -- (1,-1.1);
\draw[ultra thick,color=black!60!green] (-3.2,-4.1) -- (-2.7,-4.1);

\draw [->] (-2.7,-4) to [out=180,in=0] (-4,-5);
\node [left] at (-4,-4.8) {\faith{msg about change}};
\node [left] at (-4, -5.2) {\faith{to node $r$ is bcast}};

\draw [->] (-1,2) to [out=0,in=180] (1,4);
\node [right] at (1,4.4) {$p$ enters and bcasts };
\node [right] at (1,4) { $\langle enter\rangle$ msg};


\draw [fill] (-.2,-1) circle [radius=0.05];
\draw [fill] (.4,-1) circle [radius=0.05];
\draw [fill] (3.4,-1) circle [radius=0.05];
\node [below] at (-.2,-1) {$T'$};
\node [above] at (.4,-1) {$v$};
\node [below] at (3.4,-1) {$v+D$};

\draw [->] (.4,-1) to [out=0,in=180] (3,-3);
\node [right] at (3,-2.6) {$Q$ receives msg \faith{about} $r$};
\node [right] at (3,-3) {and bcasts echo msg };

\draw[draw=red,thick,-triangle 45,fill=red] (-.2,-1) -- (2,2);
\draw[draw=red,thick,-triangle 45,fill=red]  (-1,2) -- (-.2,-1);

\node [left,red] at (-.7,0.5) {$\langle enter\rangle $};
\node [right,red] at (-.3,1) {$\langle echo \rangle$};
\node [right,blue] at (1.4,0) {$\langle echo \rangle$};

\draw [fill] (-2.7,-4) circle [radius=0.05];
\node [below] at (-2.7,-4) {$\hat{t}$};
\draw [dashed] (-2.7,2) -- (-2.7,-4);
\draw[draw=blue,thick,-triangle 45,fill=blue] (-2.7,-4) -- (.4,-1);
\draw[draw=blue,thick,-triangle 45,fill=blue] (.4,-1) -- (3,2);

\draw[ultra thick,color=black!60!green] (3,-6) -- (4,-6);
\draw [black] (2.95,-6.5) rectangle (4.05,-5.5);
\node [right] at (4,-5.7) {Represents interval};
\node [right] at (4,-6) {when the node is };
\node [right] at (4,-6.4) {surely active};


\end{tikzpicture}
}
\caption{An Illustration of Case 2 in the proof of Lemma~\ref{lem:gen1}} \label{fig:gen1}
\end{figure}
}

\begin{lemma}
\label{lem:gen1}
Suppose that, at time $T''$, a node $p \notin S_0$ receives an enter-echo message from a node $Q$ sent at time $T'$ in response to an enter message from $p$.
Let $T$ be any time such that $\max\{0,T''-2D\} \le T \le t_p^e$.
Suppose $p$ is active at time $T + 2D$ and
$Q$ is active throughout $[U,T+D]$, where $U \le \max\{0,T''-2D\}$.
Then {\em SysInfo}$^{(U,T]} \subseteq$ {\em Changes}$_p^{T+2D}$.

\remove{Suppose a node $p \notin S_0$ receives an enter-echo message at time
$t''$ from a node $Q$ that sends it at time $t'$ in response to an enter
message from $p$.
Let $t$ be any time such that $\max\{0,t''-2D\} \le T \le t_p^e$.
Suppose $p$ is active at time $T + 2D$ and
$q$ is active throughout $[u,t+D]$, where $u \le t $. 
Then {\em SysInfo}$^{(u,t]} \subseteq$ {\em Changes}$_p^{t+2D}$.}
\end{lemma}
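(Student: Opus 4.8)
The plan is to show that every event in $\mbox{\it SysInfo}^{(U,T]}$ lands in $Changes_p$ by time $T+2D$, by tracing how each such event reaches $p$ through $Q$. First I would pin down the timing. Since $Q$ cannot respond before receiving $p$'s enter message and $p$ cannot receive the echo before $Q$ sends it, $t_p^e \le T' \le T''$; together with the hypotheses $U \le \max\{0,T''-2D\} \le T \le t_p^e$ this gives the chain $U \le T \le t_p^e \le T' \le T'' \le T+2D$. In particular $T'' \le T+2D$, and since $p$ is present from $t_p^e$, is active at $T+2D$, and crashes and leaves are permanent, $p$ is active throughout $[t_p^e,T+2D]$. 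I will also use that each $Changes$ set only grows with time, so it suffices to deposit each event into $Changes_p$ at \emph{some} time in $[t_p^e,T+2D]$.

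Next, fix an arbitrary event $x \in \mbox{\it SysInfo}^{(U,T]}$, concerning some node $r$, whose defining broadcast (the enter, joined, or leave message) is sent at a time $t_x \in (U,T]$. Because $t_x > U$ and $t_x + D \le T+D$, we have $[t_x,t_x+D] \subseteq [U,T+D]$, throughout which $Q$ is active; hence, by the same message-delay reasoning as in Observation~\ref{obs:S0}, $Q$ receives this original broadcast at some time $v_x \le t_x + D \le T+D$ (the degenerate case where $Q$ is itself the broadcaster, which can occur only for a join of $Q$ or a forced leave issued by $Q$, is treated directly). On receipt, $Q$ adds $x$ to $Changes_Q$ and rebroadcasts the corresponding echo---an enter-echo carrying $Changes_Q$, a joined-echo, or a leave-echo---each of which conveys $x$ to its recipients; recall that the assignment $Changes := Changes \cup C$ performed when an enter-echo is received is unconditional, so even an enter-echo addressed to $r$ delivers $enter(r)$ to $p$.

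I would then split on whether $Q$ learned $x$ before or after answering $p$. If $v_x \le T'$, then $x \in Changes_Q^{T'}$, which is exactly the set $C$ carried by the enter-echo that $Q$ sends $p$ at $T'$; $p$ unions $C$ into $Changes_p$ on receipt at $T'' \le T+2D$, so $x \in Changes_p^{T+2D}$. If instead $v_x > T' \ge t_p^e$, then $p$ has already entered when $Q$ rebroadcasts the echo conveying $x$ at time $v_x$; since $v_x + D \le T+2D$ and $p$ is active throughout $[v_x,v_x+D] \subseteq [t_p^e,T+2D]$, $p$ receives that echo and records $x$ by time $T+2D$. In either case $x \in Changes_p^{T+2D}$, and as $x$ was arbitrary, $\mbox{\it SysInfo}^{(U,T]} \subseteq Changes_p^{T+2D}$.

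The main obstacle is exactly this second case. Because message delays have no positive lower bound, $Q$ may answer $p$'s enter request (at $T'$) before it has itself heard of an event that occurred in $(U,T]$, so the single enter-echo cannot be relied on to carry all of $\mbox{\it SysInfo}^{(U,T]}$. The fix is the observation that $Q$ is nevertheless guaranteed to receive the original broadcast of every such event by time $T+D$ and to re-echo it, and that $p$, having entered by $T'$, remains active long enough ($v_x + D \le T+2D$) to catch that later echo. A secondary point requiring care is the bookkeeping across echo types: one must check that leave-echoes and joined-echoes, which do not carry the full $Changes$ set, still suffice to transmit $x$, and verify the degenerate cases in which $Q$ is the broadcaster.
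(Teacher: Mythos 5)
Your proof is correct and follows essentially the same route as the paper's: the same case split on whether $Q$ learns of the event before or after sending its enter-echo to $p$ (the paper's $v \le T'$ versus $v > T'$), with the same timing bounds $v + D \le \hat{t} + 2D \le T+2D$ ensuring $p$ catches the later echo. The extra care you take with the degenerate broadcaster case and with the unconditional union of $C$ into $\mbox{\it Changes}_p$ for enter-echoes addressed to other nodes are details the paper's proof glosses over, but they do not change the structure of the argument.
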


\begin{proof}
Consider any node $r$ that enters, joins, or leaves at time $\hat{t} \in
(U,T]$.
Note that $Q$ 
directly receives the announcement of this event, 
since  $Q$ is active throughout $(U,T+D]$, which contains $[\hat{t},\hat{t}+D]$,
the maximum interval during which the announcement message is in transit.
We consider two cases, depending on
the time,  $v$, at which $Q$ receives this message. 
\begin{itemize}
\item[Case 1:] 
$v \leq T'$.
Since $Q$ receives the enter message from $p$ at $T'$,
information about this change to $r$ is in
{\em Changes}$_Q^{T'}$, which is part of the enter-echo message
that $Q$ sends to $p$ at time $T'$.
Thus, this information is in 
{\em Changes}$_p^{T''} \subseteq$ {\em Changes}$_p^{T+2D}$.
\item[Case 2:] 
$v > T'$. 
Messages are not received before they
are sent, so $T' \ge t_p^e$.
Since $v \le \hat{t} + D$, it follows that $v + D \le \hat{t} + 2D \leq T+2D$.
Thus $[v,v+D]$ is contained in $[t_p^e,T+2D]$. 
Immediately after receiving the announcement about $r$, node $Q$ broadcasts an echo message in response.
Since $p$ is active throughout  this interval,
it directly receives this echo message.
\end{itemize}
In both cases,
the information about $r$'s change reaches $p$ by time $T + 2D$.
it follows that {\em SysInfo}$^{(U,T]} \subseteq$ {\em Changes}$_p^{T+2D}$.
\remove{
Consider a node $r$ that enters, joins, or leaves at time $\hat{t} \in (u,t]$.
Note that $q$ receives the enter / join / leave
message from $r$ since $[\hat{t},\hat{t}+D]$,
the maximum interval during which the message is in transit, is contained
within $(u,t+D]$, during which $q$ is active.

If $q$ receives the message about this change from $r$ by time $t'$,
when $q$ receives the enter message from $p$,
then the change is in {\em Changes}$_q^{t'}$,
which is part of the enter-echo message that $q$ sends to $p$ at time $t'$.
Thus, the information is in
{\em Changes}$_p^{t''} \subseteq$ {\em Changes}$_p^{t+2D}$.

Otherwise, $q$ receives the message from $r$
after the enter message from $p$, say at time $v > t'$,
causing $q$ to send an echo.
We now show that $p$ is active throughout $[v,v+D]$,
ensuring that $p$ receives the echo message about $r$ from $q$.
Since $v > t'$ and messages are not received before they are sent,
$t' \ge t_p^e$.
Since $v \le \hat{t} + D$, it follows that $v + D \le \hat{t} + 2D$,
which is at most $t + 2D$ by choice of $\hat{t}$.
Thus $[v,v+D]$ is contained in $[t_p^e,t+2D]$ during which $p$ is active.

In both cases, $p$ receives the information about $r$'s change
by time $t + 2D$, and hence,
{\em SysInfo}$^{(u,t]} \subseteq$ {\em Changes}$_p^{t+2D}$.}
 \end{proof}

\remove{

\begin{figure}[h]
\centering
\fbox{
\begin{tikzpicture}[scale=.7]
\draw[black, ultra thin] (-8,2) -- (9,2);
\draw[black, ultra thin] (-8,-2) -- (9,-2);
\node [left] at (-8,2) {$p$};
\node [left] at (-8,-2) {$q$};
\draw[ultra thick,color=black!60!green] (0,1.9) -- (7.5,1.9);
\draw[ultra thick,color=black!60!green] (-6,-2.1) -- (3,-2.1);
\draw [dashed] (0,2) -- (0,-2);
\draw [dashed] (-6,2) -- (-6,-2);
\draw [dashed] (-2,2) -- (-2,-2);
\draw [dashed] (-1,2) -- (-1,-2);
\draw [dashed] (5,2) -- (5,-2);
\draw [line width=6, red] (-1,-.5) -- (0,-.5);
\draw [->] (-0.2,-.65) to [out=180,in=0] (-2.9,-5.5);
\node [left] at (-2.9,-5.5) {Lemma~\ref{lem:gen1} covers};
\node [left] at (-2.9,-6) {this interval};

\draw [fill] (0,2) circle [radius=0.05];
\draw [fill] (4,2) circle [radius=0.05];
\draw [fill] (6,2) circle [radius=0.05];
\draw [fill] (7.5,2) circle [radius=0.05];

\draw [fill] (-6,2) circle [radius=0.05];
\draw [fill] (-2,2) circle [radius=0.05];

\draw [fill] (0,-2) circle [radius=0.05];
\draw [fill] (2,-2) circle [radius=0.05];
\draw [fill] (-1,-2) circle [radius=0.05];
\draw [fill] (3,-2) circle [radius=0.05];
\draw [fill] (5,-2) circle [radius=0.05];

\node [above] at (0,2) {$t_p^e$};
\node [above] at (4,2) {$t''$};
\node [above] at (6,2) {$t_p^e+2D$};
\node [below] at (7.5,2) {$t$};
\node [above] at (-6,2) {$t_p^e-2D$};
\node [above] at (-2,2) {$t''-2D$};

\node [below] at (-1,-2) {$u$};
\node [below] at (-1,-2.5) {$=t'-D$};
\node [below] at (2,-2) {$t'$};
\node [above] at (3,-2) {$t_p^e+D$};
\node [below] at (5,-2) {$t'+D$};

\draw [->] (0,2) to [out=180,in=00] (-2,4);
\node [left] at (-2,4.3) {$p$ enters and bcasts };
\node [left] at (-2,3.8) { $enter$ message};


\draw[draw=red,thick,-triangle 45,fill=red] (0,2) -- (2,-2);
\draw[draw=red,thick,-triangle 45,fill=red] (2,-2) -- (4,2);

\node [right,red] at (.3,1.5) {$\langle enter\rangle $};
\node [right,red] at (3,0) {$\langle echo \rangle $};

\draw [->] (-6,-2) to [out=180,in=0] (-7,-4);
\node [left] at (-7,-4) {$q$ enters before this time };


\draw[ultra thick,color=black!60!green] (3,-4) -- (4,-4);
\draw [black] (2.95,-4.5) rectangle (4.05,-3.5);
\node [right] at (4,-3.6) {Represents interval};
\node [right] at (4,-4) {when the node is };
\node [right] at (4,-4.5) {surely active};


\end{tikzpicture}
}
\caption{An Illustration \faith{of the} proof of Lemma~\ref{lem:gen2}} \label{fig:fgen2} 
\end{figure}
}


\begin{lemma}
\label{lem:gen2}
For every node $p$, if $p$ is active at time $t \geq t_p^e + 2D$,  then \knows{p}{t}{[0,t-D]}.

\end{lemma}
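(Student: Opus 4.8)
The plan is to induct on $p$'s entry time $t_p^e$ (formally, on $\lceil \max\{0,t_p^e\}/(2D)\rceil$, which strictly decreases along the chain of witnesses defined below), with base case $p \in S_0$. When $p \in S_0$ we have $t_p^e = 0$, so the hypothesis $t \ge t_p^e + 2D$ gives $t \ge 2D > D$, and Observation~\ref{obs:S1} immediately yields \knows{p}{t}{[0,\max\{0,t-D\}]}, which is \knows{p}{t}{[0,t-D]}. So the work is entirely in the step where $p \notin S_0$, i.e. $t_p^e > 0$.

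First I would dispose of the recent events. Since $p$ is active throughout $[t_p^e,t]$ and $t-D \ge t_p^e$, I apply Observation~\ref{obs:S0} with the endpoint $t-D$ (using that $p$ is active at $(t-D)+D = t$) to get \knows{p}{t}{[t_p^e,t-D]}. It then remains to cover $[0,t_p^e]$. For this I invoke the witness of Lemma~\ref{lem:gen0}: some node $w$ is active throughout $[\max\{0,t_p^e-2D\},\,t_p^e+D]$. Because $w$ is active when $p$'s enter message arrives (by time $t_p^e+D$), it sends an enter-echo to $p$ at some time $T' \in [t_p^e, t_p^e+D]$ carrying $\mbox{\it Changes}_w^{T'}$; as $p$ is active throughout $[T',T'+D] \subseteq [t_p^e,t]$, it receives this echo by $T'+D \le t_p^e+2D \le t$, so $\mbox{\it Changes}_w^{T'} \subseteq \mbox{\it Changes}_p^t$.

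Next I would apply Lemma~\ref{lem:gen1} to this $w$ (as $Q$), taking $T = t_p^e$ and $U = \max\{0,t_p^e-2D\}$. All its hypotheses hold: the interval $[U,T+D]$ is exactly the activity interval of $w$ from Lemma~\ref{lem:gen0}; $p$ is active at $T+2D = t_p^e+2D \le t$; and since the echo is received at $T'' \ge t_p^e$ with $T'' \le t_p^e+2D$, we have $\max\{0,T''-2D\} \le t_p^e = T$ and $U \le \max\{0,T''-2D\}$. This gives $\mbox{\it SysInfo}^{(\max\{0,t_p^e-2D\},\,t_p^e]} \subseteq \mbox{\it Changes}_p^{t_p^e+2D} \subseteq \mbox{\it Changes}_p^t$. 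The only remaining segment is the initial one, $[0,\max\{0,t_p^e-2D\}]$, which I extract from what $w$ already knew: $w$ is active at $T' \ge t_w^e + 2D$ (because $t_w^e \le \max\{0,t_p^e-2D\} \le T'-2D$ when $t_p^e > 2D$, and $w \in S_0$ otherwise), so the inductive hypothesis applied to $w$ — namely Lemma~\ref{lem:gen2} for $w$, or Observation~\ref{obs:S1} directly when $w \in S_0$ and $T' < 2D$ — gives $\mbox{\it SysInfo}^{[0,\max\{0,T'-D\}]} \subseteq \mbox{\it Changes}_w^{T'} \subseteq \mbox{\it Changes}_p^t$. Since $T' \ge t_p^e$ implies $\max\{0,T'-D\} \ge \max\{0,t_p^e-2D\}$, the three intervals $[0,\max\{0,T'-D\}]$, $(\max\{0,t_p^e-2D\},t_p^e]$, and $[t_p^e,t-D]$ together cover $[0,t-D]$, completing the induction.

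The main obstacle is precisely this reach-back to time $0$. A single witness, through Lemma~\ref{lem:gen1}, only relays events within a window of length roughly $2D$ behind $p$'s entry, so the argument must chain witnesses backward — each entering strictly earlier than the previous — until it grounds out at an $S_0$ node governed by Observation~\ref{obs:S1}. The discretized rank $\lceil \max\{0,t_p^e\}/(2D)\rceil$ is what makes this chaining well-founded, and the grounding $S_0$ witness is also what guarantees the initial events $\mbox{\it SysInfo}^{[0,0]}$ are captured, since its enter-echo carries its initial \mbox{\it Changes} set. The remaining care in writing this up is bookkeeping on the interval endpoints to verify the three pieces genuinely tile $[0,t-D]$ with no gap.
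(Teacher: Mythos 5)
Your proof is correct and follows essentially the same route as the paper's: induction on entry time (the paper inducts on entry order, which is equivalent here), a witness from Lemma~\ref{lem:gen0} whose enter-echo delivers its $\mbox{\it Changes}$ set, the inductive hypothesis (or Observation~\ref{obs:S1}) covering $[0,\max\{0,T'-D\}]$, Lemma~\ref{lem:gen1} covering the window up to $t_p^e$, and Observation~\ref{obs:S0} covering $[t_p^e,t-D]$. If anything, your instantiation of Lemma~\ref{lem:gen1} with $U=\max\{0,t_p^e-2D\}$ is cleaner than the paper's stated choice of parameters, since it verifiably satisfies the hypothesis $U \le \max\{0,T''-2D\}$.
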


\begin{proof}
The proof is by induction on the order in which nodes enter the system.
If $p \in S_0$, then $t_p^e = 0$, so \knows{p}{t}{[0,t-D]} follows from Observation~\ref{obs:S1}.

Now consider any node $p \not\in S_0$ and suppose that the claim is true for all nodes that
enter earlier than $p$. Suppose $p$ is active at  time $t \geq t_p^e+ 2D$.
By Lemma~\ref{lem:gen0}, there is at least one node $q$ that is active throughout $[\max\{0,t_p^e -2D\},t_p^e +D]$.
Node $q$ receives an enter message from $p$ at some time $t' \in [t_p^e, t_p^e +D]$
and sends an enter-echo message back to $p$.
This message is received by $p$ at some time $t'' \in [t',t'+D]$.

If $q \in S_0$, then
\knows{q}{t'}{[0,\max\{0,t'-D\}]}, by Observation~\ref{obs:S1}.
If $q \not\in S_0$, then $0 < t_q^e \leq \max\{0,t_p^e -2D\}$, so
$t_q^e \leq  t_p^e - 2D$. Therefore
$t_q^e + 2D \leq t_p^e \leq t'$.
Since $q$ entered earlier than $p$, it follows from the induction hypothesis
that  \knows{q}{t'}{[0,t'-D]}.
Thus, in both cases, \knows{q}{t'}{[0,\max\{0,t'-D\}]}.
At time $t''\leq t$, $p$ receives the enter-echo message from $q$, so
\knows{p}{t''}{[0,\max\{0,t'-D\}]} $\subseteq \mbox{\it Changes}_p^{t}$.




Applying Lemma~\ref{lem:gen1} with $Q = q$, $U = \max\{0,t_p^e -D\}$, $T = t_p^e$, $T' =t'$ and $T'' = t''$ gives
{\em SysInfo}$^{(\max\{0,t'-D\},t_p^e]} \subseteq$ {\em Changes}$_p^{t_p^e + 2D}$. 
Since $t \geq t_p^e+2D$, $\mbox{\it Changes}_p^{t_p^e+2D}$ is a subset of $\mbox{\it Changes}_p^{t}$. Observation~\ref{obs:S0} implies \knows{p}{t}{[t_p^e,t-D]}. Hence,  \knows{p}{t}{[0,t-D]}.
\end{proof}

\remove{
\begin{figure}[h]
\centering
\fbox{
\begin{tikzpicture}[scale=.6]
\draw[black, ultra thin] (-10,4) -- (10,4);
\draw[black, ultra thin] (-10,0) -- (10,0);
\draw[black, ultra thin] (-10,-4) -- (10,-4);
\node [left] at (-10,4) {$p$};
\node [left] at (-10,0) {$q$};
\node [left] at (-10,-4) {$q'$};
\draw[ultra thick,color=black!60!green] (0,3.9) -- (7,3.9);
\draw[ultra thick,color=black!60!green] (-2,-.1) -- (2,-.1);
\draw[ultra thick,color=black!60!green] (-6,-4.1) -- (6,-4.1);
\draw [dashed] (-1,4) -- (-1,-4);
\draw [dashed] (3,4) -- (3,-4);
\draw [dashed] (-6,4) -- (-6,-4);
\draw [dashed] (6,0) -- (6,-4);
\draw [line width=7, red] (-6,-2) -- (-1,-2);
\draw [->] (-2.5,-2.2) to [out=180,in=0] (-5.5,-7.5);
\node [left] at (-5.5,-7.5) { Lemma~\ref{lem:gen1} covers};
\node [left] at (-5.5,-8.1) {this interval };

\draw [fill] (-3.5,4) circle [radius=0.05];
\draw [fill] (-1,4) circle [radius=0.05];
\draw [fill] (0,4) circle [radius=0.05];
\draw [fill] (3.5,4) circle [radius=0.05];
\draw [fill] (3,4) circle [radius=0.05];
\draw [fill] (4.5,4) circle [radius=0.05];
\draw [fill] (5.5,4) circle [radius=0.05];
\draw [fill] (7,4) circle [radius=0.05];
\draw [fill] (8,4) circle [radius=0.05];
\draw [fill] (-6,4) circle [radius=0.05];

\node [above] at (-3.5,4) {$T'' -2D$};
\node [below] at (-1,4) {$t-2D$};
\node [above] at (0,4) {$t_p^e$};
\node [above] at (3.5,4) {$t''$};
\node [above] at (4.5,4) {$T ''$};
\node [above] at (5.5,4) {$t_p^j$};
\node [above] at (7,4) {$t$};
\node [below] at (8,4) {$t_p^e+2D$};

\draw [fill] (-6,0) circle [radius=0.05];
\draw [fill] (-2,0) circle [radius=0.05];
\draw [fill] (2,0) circle [radius=0.05];
\draw [fill] (4,0) circle [radius=0.05];
\draw [fill] (6,0) circle [radius=0.05];
\node [above] at (-6,0) {$u =t'-2D$};
\node [above] at (-2,0) {$t_q^j$};
\node [below] at (2,0) {$t'$};
\node [above] at (4,0) {$t_p^e+D$};
\node [below] at (6,0) {$t'+D$};

\draw [fill] (1.5,-4) circle [radius=0.05];
\node [below] at (1.5,-4) {$T'$};

\draw [->] (0,4) to [out=180,in=00] (-2,6);
\node [left] at (-2,6.3) {$p$ enters and bcasts };
\node [left] at (-2,5.7) { $enter$ message};

\draw [->] (3,4) to [out=180,in=0] (2.5,6.5);
\node [left] at (2.5,6.5) {$t-D$};

\draw [->] (3.5,4) to [out=0,in=180] (4.5,6.5);
\node [right] at (4.5,6.8) {first enter-echo message};
\node [right] at (4.5,6.2) {from joined node $q$ received};



\draw[draw=red,thick,-triangle 45,fill=red]  (0,4) -- (2,0);
\draw[draw=red,thick,-triangle 45,fill=red]  (2,0) -- (3.5,4);

\draw[draw=blue,thick,-triangle 45,fill=blue] (0,4) -- (1.5,-4);
\draw[draw=blue,thick,-triangle 45,fill=blue] (1.5,-4) -- (4.5,4);

\node [left,red] at (2.5,2.5) {$\langle enter \rangle$};
\node [right,red] at (2.5,3) {$\langle echo \rangle$};

\node [left,blue] at (1.4,-2.5) {$\langle enter \rangle$};
\node [right,blue] at (2.3,-1.5) {$\langle echo \rangle$};

\draw[ultra thick,color=black!60!green] (3,-6) -- (4,-6);
\draw [black] (2.95,-6.5) rectangle (4.05,-5.5);
\node [right] at (4,-5.4) {Represents interval};
\node [right] at (4,-6) {when the node is };
\node [right] at (4,-6.6) {surely active};

\end{tikzpicture}
}\caption{Illustration \faith{of the} proof of Lemma~\ref{lem:knowswhenjoined}}
\label{fig:fknowswhenjoined}
\end{figure}
}


\begin{lemma}
\label{lem:knowswhenjoined}
 For every node $p \not\in S_0$, if $p$ joins at time $t_p^j$ and
is active at time $t \geq  t_p^j$, then \knows{p}{t}{[0,\max\{0,t-2D\}]}.
\end{lemma}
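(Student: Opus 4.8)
The plan is to reduce to the case $t_p^j \le t < t_p^e + 2D$, since for $t \ge t_p^e + 2D$ Lemma~\ref{lem:gen2} already gives the stronger conclusion \knows{p}{t}{[0,t-D]}, and $[0,\max\{0,t-2D\}] \subseteq [0,t-D]$. In the remaining range one has $\max\{0,t-2D\} < t_p^e$, so \emph{every} event that must be accounted for happens strictly before $p$ enters; thus $p$ can only have learned of it indirectly, through the enter-echo messages it collected while joining. The whole argument therefore hinges on showing that, among those echoes, $p$ received (before $t_p^j$, hence before $t$) a message from a node that was already ``old and well-informed'' when it answered.

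Concretely, I would fix a node $q'$ that is active throughout $[\max\{0,t_p^e-2D\},\,t_p^e+D]$; such a node exists by Lemma~\ref{lem:gen0}, and it necessarily receives $p$'s enter message and returns an enter-echo, say sent at time $\tau \in [t_p^e,t_p^e+D]$ and received by $p$ at some time $T''$ (that $p$ receives it at all before joining is the crux, deferred to the next paragraph). Since $q'$ entered by $\max\{0,t_p^e-2D\} \le \tau - 2D$, Lemma~\ref{lem:gen2} (or Observation~\ref{obs:S1} if $q' \in S_0$) shows $q'$ knows every event in $[0,\max\{0,\tau-D\}]$ at time $\tau$, and this knowledge is carried in $q'$'s $\mbox{\it Changes}$ set inside the echo, so $p$ learns it by $T''$. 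I would then invoke Lemma~\ref{lem:gen1} with $Q=q'$, $U = \max\{0,T''-2D\}$, and $T = \max\{0,t-2D\}$: the hypotheses $T \le t_p^e$, ``$p$ active at $T+2D=t$'', and ``$q'$ active throughout $[U,T+D]$, which lies in $[\max\{0,t_p^e-2D\},t_p^e+D]$'' all hold in this range, yielding $\mbox{\it SysInfo}^{(U,T]} \subseteq \mbox{\it Changes}_p^{T+2D} \subseteq \mbox{\it Changes}_p^t$. Because $U = \max\{0,T''-2D\} \le \max\{0,\tau-D\}$, this interval abuts the one already covered by $q'$'s echo, and the two together give \knows{p}{t}{[0,\max\{0,t-2D\}]}. (When $t \le 2D$ the target is just $[0,0]$ and $q'$'s echo alone suffices, so the Lemma~\ref{lem:gen1} step is only needed once $t \ge 2D$.)

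The hard part --- and the only place the join bound $\gamma$ is really used --- is guaranteeing that $p$ actually received $q'$'s echo before it joined. This is a counting argument. Every node active throughout $[\max\{0,t_p^e-2D\},t_p^e+D]$ echoes $p$, and the estimate from the proof of Lemma~\ref{lem:gen0} (via Lemmas~\ref{lem:size} and~\ref{lem:left2} together with the failure fraction $\Delta$) lower-bounds the number of such ``old, surviving'' echoers, while Lemma~\ref{lem:size} upper-bounds the total number of nodes that can echo $p$ at all. Since $p$ does not join until it has collected $\gamma\cdot|Present_p^{T''}|$ distinct echoes, I would argue that if $\gamma$ is large enough then the number of echoes $p$ must hear strictly exceeds the number of echoers that are \emph{not} old-and-surviving, forcing at least one old-and-surviving node to be among them; that node serves as $q'$. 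Carrying this out reduces essentially to constraint~(\ref{parameters:H}): the term $(1+\Delta)(1+\alpha)^3/(1-\alpha)^3$ is the ratio of non-surviving to surviving nodes produced by Lemmas~\ref{lem:size},~\ref{lem:left2} and $\Delta$, and the additive $1/(N_{min}(1-\alpha)^3)$ supplies the one extra echo needed. The genuine obstacle is that this counting needs a lower bound on $|Present_p^{T''}|$ in terms of the true system size $N$, so that $\gamma\cdot|Present_p^{T''}|$ can be compared against the $N$-based churn bounds; controlling how badly $p$ can underestimate the set of present nodes is what I expect to be the most delicate ingredient, and is presumably supplied by a separate invariant.
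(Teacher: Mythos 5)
Your reduction and stitching steps are sound, and they follow the paper's route with one genuine variation: the paper obtains the ``old knowledge'' $\mbox{\it SysInfo}^{[0,\max\{0,t'-2D\}]}$ from the \emph{first joined} node $q$ whose enter-echo $p$ receives, whereas you obtain it from the old-and-surviving echoer $q'$ itself via Lemma~\ref{lem:gen2} or Observation~\ref{obs:S1}, which is legitimate because $q'$ entered at least $2D$ before sending its echo (and lies in $S_0$ when $t_p^e < 2D$). Your instantiation of Lemma~\ref{lem:gen1} with $Q=q'$, $U=\max\{0,T''-2D\}$, $T=\max\{0,t-2D\}$ also checks out, as does the interval bookkeeping that makes the two pieces of knowledge abut.

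The gap is exactly the one you flag in your last sentence, and it is fatal as written: the counting argument needs a lower bound on $|\mbox{\it Present}_p^{t''}|$ at the moment $t''$ when $join\_bound$ is fixed (the first receipt of an enter-echo from a joined node $q$, sent at $t'$), namely $|\mbox{\it Present}_p^{t''}| \geq (1-\alpha)^3 N(\max\{0,t'-2D\})$, and this is \emph{not} supplied by any separate invariant. The natural candidate, Lemma~\ref{lem:present-2D}, is proved after this lemma, is proved \emph{from} it, and in any case applies only at times $\geq t_p^j$, while $t''$ precedes $p$'s join; invoking it would be circular. The paper closes the gap by making the whole lemma a strong induction on the order in which nodes join and applying the induction hypothesis to $q$: since $q$ joined before $p$ and is active at $t' \geq t_q^j$, \knows{q}{t'}{[0,\max\{0,t'-2D\}]}, and since $\mbox{\it Changes}_q^{t'}$ travels inside $q$'s enter-echo, $\mbox{\it Present}_p^{t''}$ contains every node present at $\max\{0,t'-2D\}$ except those that actually left, whence the bound follows from Lemma~\ref{lem:left2} and Assumption~(\ref{parameters:G}). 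So your proof cannot be completed in the order you propose: you must set the lemma up as an induction on join order so that the statement being proved can itself be invoked for the earlier-joined node $q$. Your Lemma~\ref{lem:gen2}-based treatment of $q'$ can stay, but the induction is indispensable for the counting step.
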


\begin{proof}
The proof is by induction on the order in which nodes join the system. Let $p \not\in S_0$ be a node that joins at time $t_p^j \leq t$ and suppose the claim holds for all nodes that join before $p$. If $t \geq t_p^e +2D$, then the claim follows by Lemma~\ref{lem:gen2}. So, assume that $t < t_p^e +2D$.

Before $p$ joins, it receives an enter-echo message from a joined node in response to its enter message. Suppose $p$ first receives such an enter-echo message at time $t''$ and this enter-echo was sent by $q$ at time $t'$. Then $t_p^e \leq t' \leq t'' \leq t_p^j$. 
If $q \in S_0$, then Observation \ref{obs:S1} implies that
\knows{q}{t'}{[0,\max\{0,t'-D\}]}. Otherwise, by the induction hypothesis,  \knows{q}{t'}{[0,\max\{0,t'-2D\}]}, 
since $q$ joined prior to $p$ and is active at time $t' \geq t_q^j$.
Note that  $\mbox{\it Changes}_q^{t'}$ 
 $\subseteq \mbox{\it Changes}_p^{t''} \subseteq \mbox{\it Changes}_p^{t} $.
If $t \leq 2D$, then $\max\{0,t-2D\} = 0$ and the claim is true. So, assume that $t > 2D$.

Let $S$ be the set of nodes present at time $\max\{0,t'-2D\}$, so  $|S|= N(\max\{0,t'-2D\})$. By Lemma~\ref{lem:left2} and Assumption~(\ref{parameters:G}), at most $(1-(1-\alpha)^3)|S|$ nodes leave during $(\max\{0,t'-2D\},t'+D]$. Since $t'' \leq t'+D$, it follows that $|\mbox{\it Present}_p^{t''}| \geq |S| - (1-(1-\alpha)^3)|S| = (1-\alpha)^3|S|$.
Hence, from lines~\ref{line:calculate join bound} 
and~\ref{line:check if enough enter echoes} 
of Algorithm~\ref{algo:Common}, $p$ waits until it has received at least $join\_bound = \gamma \cdot |\mbox{\it Present}_p^{t''}|  \geq \gamma \cdot (1-\alpha)^3|S| $ enter-echo messages before joining.

By Lemma~\ref{lem:size}, the number of nodes that enter during $(\max\{0,t'-2D\},t'+D]$ is at most $((1+\alpha)^3-1)|S|$.
Thus, at time $t'+D$, there are at most $ (1+\alpha)^3|S|$ nodes present and at most $ \Delta(1+\alpha)^3|S|$ nodes are crashed. 
\remove{\faith{WE KNOW THAT $t_p^e \leq t' \leq t_p^e +D$,  SO $t_p^e > t' -2D$ AND SINCE $p \not\in S_0$, $t_p^e > 0$.
THUS $p$ ENTERED DURING $[\max\{0,t'-2D\},t'+D]$, BUT
WAS NOT ACTIVE THROUGHOUT THIS INTERVAL. SO WHY WERE YOU SUBTRACTING AN EXTRA 1 FOR $p$?
PLEASE CHECK THIS.}}
Hence, the number of enter-echo messages $p$ receives before joining from nodes
that were active throughout $[\max\{0,t'-2D\},t'+D]$ is 
$join\_bound$ minus the total number of enters, leaves and crashes, 
which is at least 
\begin{multline}
\gamma \cdot (1-\alpha)^3 |S| - [((1+\alpha)^3 -1)|S| + (1-(1-\alpha)^3)|S| + \Delta(1+\alpha)^3 |S| ] \\
= [(1+\gamma) (1-\alpha)^3 - (1+\Delta)(1+\alpha)^3]|S| \\
\geq  [(1+\gamma) (1-\alpha)^3 - (1+\Delta)(1+\alpha)^3]N_{min} . \label{equation1}
\end{multline}
Rearranging Assumption~(\ref{parameters:H}), we get
$[(1+\gamma)(1- \alpha)^3 -(1+\Delta)(1+\alpha)^3N_{min}] \geq 1$, so
expression~(\ref{equation1}) is at least $1$. 
\remove{and we make sure that the denominator is positive by Assumption~(\ref{parameters:A}):
\begin{align*}
\frac{(1+\Delta)(1+\alpha)^3}{(1-\alpha)^3}-1 < \gamma
\end{align*}}
Hence $p$ receives an enter-echo message at some time $T'' \leq t_p^j$ from a node $q'$ that is active throughout  $[\max\{0,t'-2D\},t'+D] \supseteq [\max\{0,t'-2D\},t-D] $. Let $T'$ be the time that $q'$ sent its enter-echo message in response to the enter message from $p$. 
Applying Lemma~\ref{lem:gen1} with $Q = q'$, $U = \max\{0,t' -2D\}$, and $T = t-2D$
gives {\em SysInfo}$^{(\max\{0,t'-2D\},t-2D]} \subseteq$ {\em Changes}$_p^t$. 
Thus {\em SysInfo}$^{[0,t-2D]} =$ {\em SysInfo}$^{[0,\max\{0,t'-2D\}]} \cup$ {\em SysInfo}$^{(\max\{0,t'-2D\},t-2D]}
\subseteq$ {\em Changes}$_p^t$. 
\end{proof}

Next we prove that every node that remains active sufficiently long
after it enters succeeds in joining.

\begin{theorem}
\label{thm:joins}
Every node $p \not\in S_0$ that
is active at time $t_p^e + 2D$ joins by time $t_p^e + 2D$.
\end{theorem}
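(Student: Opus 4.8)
The plan is to argue by strong induction on the order in which nodes enter the system --- the same induction already used in Lemmas~\ref{lem:gen2} and~\ref{lem:knowswhenjoined}. Fix $p \notin S_0$ that is active at time $t_p^e + 2D$; since crashing and leaving are permanent, $p$ is in fact active throughout $[t_p^e, t_p^e + 2D]$. Reading lines~\ref{line:calculate join bound} and~\ref{line:check if enough enter echoes} of Algorithm~\ref{algo:Common}, node $p$ joins as soon as both of the following have happened: (i) it has received at least one enter-echo from a node that had \emph{already joined} (this is what first makes $join\_bound$ positive, setting it to $\gamma \cdot |Present_p|$ at that instant), and (ii) its $join\_counter$, which is incremented by \emph{every} enter-echo addressed to $p$, has reached $join\_bound$. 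So it suffices to show that (i) and (ii) both occur by time $t_p^e + 2D$.

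For (i), I would invoke Lemma~\ref{lem:gen0} at time $t = t_p^e$ to obtain a node $q$ active throughout $[\max\{0,t_p^e-2D\},\,t_p^e+D]$. Such a $q$ is already joined by the time it answers $p$: if $q \in S_0$ it is joined from the start; otherwise $q$ entered at $t_q^e \le t_p^e - 2D$, so $q$ is active at $t_q^e+2D \le t_p^e$ and the induction hypothesis (this theorem applied to the earlier-entering $q$) gives that $q$ joined by $t_q^e+2D \le t_p^e$. Being active throughout, $q$ receives $p$'s enter message at some $t' \in [t_p^e, t_p^e+D]$ and immediately broadcasts an enter-echo carrying $is\_joined = true$; since $p$ is active, $p$ receives it by $t'+D \le t_p^e + 2D$. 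Hence $join\_bound$ is set to $\gamma\cdot|Present_p^{t''}|$ at some time $t'' \le t_p^e + 2D$.

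For (ii), I would bound the number of enter-echoes $p$ collects by $t_p^e+2D$ from below and $join\_bound$ from above. Every node counted by Lemma~\ref{lem:gen0} is active throughout $[\max\{0,t_p^e-2D\},\,t_p^e+D]$, hence throughout $[t_p^e,t_p^e+D]$, so it receives $p$'s enter message, replies with an enter-echo, and the reply reaches the active $p$ by $t_p^e+2D$; writing $S$ for the set present at $\max\{0,t_p^e-2D\}$, this shows $join\_counter \ge ((1-\alpha)^3 - \Delta(1+\alpha)^3)|S|$ by time $t_p^e+2D$. For the upper bound I would aim to show $|Present_p^{t''}| \le (1+\alpha)^3|S|$, bounding the nodes $p$ believes present by those genuinely present near $t''$ (controlled by Lemma~\ref{lem:size}) together with nodes that have already left but whose departure $p$ has not yet recorded, the latter confined to recent leavers by the propagation guarantee of Lemma~\ref{lem:gen2} and counted via Lemma~\ref{lem:left2} (which also needs the churn bound~(\ref{parameters:G})). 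Dividing the two bounds, the inequality $join\_counter \ge \gamma\cdot|Present_p^{t''}| = join\_bound$ then reduces exactly to $\gamma \le \frac{(1-\alpha)^3}{(1+\alpha)^3} - \Delta$, which is Assumption~(\ref{parameters:B}).

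Once (i) and (ii) hold, the guard on line~\ref{line:check if enough enter echoes} fires no later than $t_p^e+2D$, so $p$ sets $is\_joined$, adds $join(p)$ to its $Changes$, and broadcasts its joined message by then, closing the induction. The hard part is the upper bound on $|Present_p^{t''}|$ in step (ii): because $p$ accumulates information only after it enters, it may still regard some already-departed nodes as present, so $|Present_p|$ can exceed the true system size. The crux is to use the propagation lemmas to confine this over-estimate to nodes that left within the last few message-delay windows and to verify that the resulting count still fits under $(1+\alpha)^3|S|$, i.e.\ that the slack left by Assumption~(\ref{parameters:B}) (together with~(\ref{parameters:G})) is exactly enough.
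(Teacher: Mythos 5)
Your overall skeleton is the paper's: induction on entry order, Lemma~\ref{lem:gen0} plus the induction hypothesis to produce a joined responder (so that $join\_bound$ gets set by time $t_p^e+2D$), and then a counting argument showing the guaranteed number of enter-echoes exceeds $join\_bound$, via Lemmas~\ref{lem:size} and~\ref{lem:left2} and Assumptions~(\ref{parameters:G}) and~(\ref{parameters:B}). Your step (i) is essentially the paper's argument. But step (ii) has a genuine gap, exactly at the point you yourself flag as the crux.

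To upper-bound $|\mbox{\it Present}_p^{t''}|$ you must show that $p$ already knows, at time $t''$, about all sufficiently old leave events, so that long-departed nodes do not inflate $\mbox{\it Present}_p^{t''}$. You propose to get this from ``the propagation guarantee of Lemma~\ref{lem:gen2}'', but that lemma applies only to a node that has been active for at least $2D$ \emph{after entering}: it cannot be applied to $p$ at time $t'' \le t_p^e+2D$, and it also cannot be applied to the sender $q'$ of the first joined enter-echo at its send time $t'$, because a node can join far less than $2D$ after entering, so $t' < t_{q'}^e+2D$ is possible. This is precisely the hole that Lemma~\ref{lem:knowswhenjoined} exists to fill: by a separate induction on the order of \emph{joins}, every joined node --- no matter how quickly it joined --- satisfies $\mbox{\it SysInfo}^{[0,\max\{0,t'-2D\}]} \subseteq \mbox{\it Changes}_{q'}^{t'}$ (Observation~\ref{obs:S1} covers $q' \in S_0$), and this knowledge is shipped to $p$ inside the enter-echo, giving $\mbox{\it SysInfo}^{[0,\max\{0,t'-2D\}]} \subseteq \mbox{\it Changes}_p^{t''}$; only then is $\mbox{\it Present}_p^{t''}$ confined to nodes present at $\max\{0,t'-2D\}$ plus later entrants. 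There is also a secondary quantitative slip: your two bounds are anchored at different times. The leave-knowledge cutoff is $\max\{0,t'-2D\}$, but your $S$ is the set present at $\max\{0,t_p^e-2D\}$, and the window from there to $t'' \le t_p^e+2D$ is $4D$ long; the argument then yields only $|\mbox{\it Present}_p^{t''}| \le (1+\alpha)^4|S|$, so the final inequality needs $\gamma \le \bigl((1-\alpha)^3/(1+\alpha)^3-\Delta\bigr)/(1+\alpha)$, strictly stronger than Assumption~(\ref{parameters:B}) --- your claim that it ``reduces exactly'' to~(\ref{parameters:B}) fails. The paper avoids this by anchoring \emph{both} counts at $\max\{0,t'-2D\}$ and noting that $[t_p^e,t_p^e+D] \subseteq [\max\{0,t'-2D\},t'+D]$, so every node present at the anchor that neither leaves nor crashes by $t'+D$ still answers $p$'s enter message.
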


\begin{proof}
The proof is by induction on the order in which nodes enter the system. 
Let $p \not\in S_0$ be a node that enters at time $t_p^e$ and is active at time $t_p^e + 2D$.
Suppose the claim is true for all nodes that enter before $p$.

By Lemma~\ref{lem:gen0}, there is a node $q$ that is active throughout
$[\max\{t_p^e -2D,0\},t_p^e +D]$.
If $q \in S_0$, then $q$ joins at time 0.
If not, then 
$t_q^e 
\leq 
  t_p^e-2D$,
so, by the induction hypothesis,
$q$ joins by  time
$t_q^e + 2D 
\leq 
 t_p^e$.
Since $q$ is active at time  $t_p^e+D$,
it receives the enter message from $p$ during $[t_p^e, t_p^e+D]$
and sends an enter-echo message in response.
Since $p$ is active at time $t_p^e+2D$,
it receives the enter-echo message from $q$ by time $t_p^e+2D$.
Hence, by time $t_p^e+2D$, $p$ receives at least one enter-echo message
from a joined node in response to its enter message.

Suppose the first enter-echo message $p$ receives from a joined node in response to its enter message is sent by node $q'$ at time $t'$ and received by $p$ at time $t''$. By Lemma~\ref{lem:knowswhenjoined}, \knows{q'}{t'}{[0,\max\{0,t'-2D\}]} $\subseteq \mbox{\it Changes}_p^{t''}$.

Let $S$ be the set of nodes present at time $\max\{0,t'-2D\}$.
Since $t'' \leq t'+D$, it follows from Lemma~\ref{lem:size} that at most $( (1+\alpha)^3 -1)|S|$
nodes enter during $(\max\{0,t'-2D\},t'']$. Thus, $|\mbox{\it Present}_p^{t''}| \leq|S| +( (1+\alpha)^3 -1)|S|  = (1+\alpha)^3 |S|$. 
From line~~\ref{line:calculate join bound} in Algorithm~\ref{algo:Common}, it follows that
$join\_bound \leq \gamma \cdot (1 +\alpha)^3|S|$. 


By Lemma~\ref{lem:left2} and Assumption~(\ref{parameters:G}),
at most $(1-(1-\alpha)^3)|S|$ nodes leave during $(\max\{0,t'-2D\},t'+D]$. Also, by Lemma~\ref{lem:size}, at most $(1+\alpha)^3|S|$ nodes are present in the system at $t'+D$ and so at most $\Delta(1+\alpha)^3|S|$ nodes are crashed at $t'+D$. 
Since $t_p^e \leq t' \leq t_p^e +D$, 
the nodes in $S$ that do not leave during $(\max\{0,t'-2D\},t'+D]$ and are not crashed at $t'+D$ 
are active throughout $[t_p^e,t_p^e+D]$ and send enter-echo messages in response to $p$'s enter message.
By time $t_p^e+2D$,  $p$ receives all these enter-echo messages.
There are at least $|S| - (1- (1-\alpha)^3) |S|-\Delta(1+\alpha)^3|S|= (1-\alpha)^3 |S|-\Delta(1+\alpha)^3|S|$
such enter-echo messages. 
By Assumption~(\ref{parameters:B}),
\begin{align*}
\frac{(1-\alpha)^3}{(1+\alpha)^3}-\Delta &\geq \gamma,
\end{align*}
so the value of  $join\_bound$ is 
at most 
$$\gamma \cdot (1 +\alpha)^3|S| 
\leq ( \frac{(1-\alpha)^3}{(1+\alpha)^3}-\Delta)  \cdot (1 +\alpha)^3|S|
= (1-\alpha)^3 |S|-\Delta(1+\alpha)^3|S|.$$
%
Thus, 
by time $t_p^e+2D$, 
 the condition in line~\ref{line:check if enough enter echoes} of Algorithm~\ref{algo:Common}
holds and node $p$ joins.
\end{proof}

Next, we show that all read and write operations terminate.
Specifically, we 
show that the number of responses for which an operation
waits is 
at most the number 
that it is guaranteed to receive.

Since $enter(q)$ is added to $\mbox{\it Changes}_p$ whenever $join(q)$
is, we get the following observation.

\begin{observation}
\label{obs:S2}
For every time $t \geq 0$ and  every node $p$ that is active at time $t$,
$\mbox{\it Members}_p^t \subseteq \mbox{\it Present}_p^t$.
\end{observation}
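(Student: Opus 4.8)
The plan is to unfold the definitions and reduce the claim to a single structural invariant about the $\mbox{\it Changes}$ sets. By definition,
$\mbox{\it Members}_p^t = \{q \mid join(q) \in \mbox{\it Changes}_p^t \wedge leave(q) \notin \mbox{\it Changes}_p^t\}$
and
$\mbox{\it Present}_p^t = \{q \mid enter(q) \in \mbox{\it Changes}_p^t \wedge leave(q) \notin \mbox{\it Changes}_p^t\}$.
These two sets share the condition $leave(q) \notin \mbox{\it Changes}_p^t$, so the containment $\mbox{\it Members}_p^t \subseteq \mbox{\it Present}_p^t$ reduces to establishing the implication: for every active node $p$, every time $t$, and every node $q$, if $join(q) \in \mbox{\it Changes}_p^t$ then $enter(q) \in \mbox{\it Changes}_p^t$.

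First I would prove this implication as an invariant maintained throughout the execution, by inspecting every point in Algorithm~\ref{algo:Common} at which a $join$ event is inserted into $\mbox{\it Changes}_p$. There are only a few such points. At initialization, for $p \in S_0$ we have $\mbox{\it Changes}_p^0 = \{enter(q) \mid q \in S_0\} \cup \{join(q) \mid q \in S_0\}$, so $enter(q)$ accompanies every $join(q)$; for $p \notin S_0$ the initial set is empty and the implication holds vacuously. In the handler for a received $\langle$``joined'', $q\rangle$ message and in the handler for a received $\langle$``joined-echo'', $q\rangle$ message, the code adds $enter(q)$ in the same (uninterruptible) code segment in which it adds $join(q)$, so both events are present afterward. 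The only remaining case is the self-join inside the enter-echo handler, where $p$ adds $join(p)$ to its own $\mbox{\it Changes}$; here $enter(p)$ is already present, because $p$ inserted $enter(p)$ upon receiving its {\sc Enter}$(p)$ signal, which necessarily precedes $p$ broadcasting its enter message and hence precedes any enter-echo addressed to $p$. Since $p$ is active at time $t$, it has entered, so this earlier insertion has occurred.

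Having established the invariant, the observation follows directly: given $q \in \mbox{\it Members}_p^t$, we have $join(q) \in \mbox{\it Changes}_p^t$ and $leave(q) \notin \mbox{\it Changes}_p^t$; the invariant yields $enter(q) \in \mbox{\it Changes}_p^t$, and together with $leave(q) \notin \mbox{\it Changes}_p^t$ this gives exactly $q \in \mbox{\it Present}_p^t$.

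I do not expect a genuine obstacle here; the observation is essentially immediate once the definitions are expanded, and the preceding sentence in the text (``Since $enter(q)$ is added to $\mbox{\it Changes}_p$ whenever $join(q)$ is'') already asserts the invariant. The only care needed is the bookkeeping in the self-join case, namely confirming that $enter(p)$ is in $\mbox{\it Changes}_p$ before $join(p)$ can be added, which follows from the ordering of the {\sc Enter} signal, the enter broadcast, and the arrival of enter-echo responses.
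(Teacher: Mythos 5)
Your overall approach is the same as the paper's: the observation rests on the invariant that $join(q) \in \mbox{\it Changes}_p$ implies $enter(q) \in \mbox{\it Changes}_p$, which the paper asserts in a single sentence (``Since $enter(q)$ is added to $\mbox{\it Changes}_p$ whenever $join(q)$ is''). However, your case analysis of the points where $join$ events are inserted is not exhaustive, and this is a genuine gap. Besides initialization, the ``joined'' and ``joined-echo'' handlers, and the self-join, there is a fourth insertion point: the assignment $\mbox{\it Changes} := \mbox{\it Changes} \cup C$ in the enter-echo handler of Algorithm~\ref{algo:Common}. This line can insert $join(q)$ events for arbitrary nodes $q$ into $\mbox{\it Changes}_p$ --- indeed it is the main mechanism by which $p$ ``learns indirectly from another node'' that $q$ has joined --- and nothing in your argument shows that $enter(q)$ accompanies an insertion made this way.

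The repair is routine but requires strengthening what you induct on: the invariant must be proved simultaneously for all nodes' $\mbox{\it Changes}$ sets and for the sets $C$ carried by in-flight enter-echo messages. Since the $C$ in an enter-echo message is a snapshot of the sender's $\mbox{\it Changes}$ at send time, it satisfies the invariant by the induction hypothesis (inducting over a global ordering of events), and the invariant is closed under union: if both $\mbox{\it Changes}_p$ and $C$ satisfy it, then so does $\mbox{\it Changes}_p \cup C$. With that additional step your argument goes through; the rest of your proof --- the reduction of the set containment to the invariant, the handling of the explicit insertion points, and the bookkeeping showing $enter(p)$ precedes $join(p)$ in the self-join case --- is correct.
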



Lemma \ref{lem:present-2D} relates
a node's current estimate of the number of nodes present to
the number of nodes that were present in the
system $2D$ time units earlier. 
Lemma \ref{lem:members-2D} relates
a node's current estimate of the number of nodes that are members to
the number of nodes that were present in the
system time $4D$ time units earlier. 
Lemma \ref{lem:present-2D} is used in the proof of Lemma~\ref{lem:joined-for-D}  and Lemma~\ref{lem:members-2D} is used in the proof of Theorem~\ref{thm:atomicity}. The proofs of Lemmas~\ref{lem:members-2D} and  \ref{lem:present-2D} are very similar to each other and are thus presented together. 


\begin{lemma}
\label{lem:present-2D}
For every node $p$ and every time $t\geq t_p^j$ at which $p$ is active,
\begin{align*}
(1-\alpha)^2 \cdot N(\max\{0,t-2D\}) \le
|\mbox{\it Present}_p^t| \le (1+\alpha)^2 \cdot N(\max\{0,t-2D\}).
\end{align*}
\end{lemma}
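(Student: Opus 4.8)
The plan is to pin down $|\mbox{\it Present}_p^t|$ by combining what $p$ is \emph{guaranteed} to know about the system with the fact that $p$ can never record an event that has not yet occurred. Write $t' = \max\{0,t-2D\}$, so that the target bounds read $(1-\alpha)^2 N(t') \le |\mbox{\it Present}_p^t| \le (1+\alpha)^2 N(t')$ and the interval $(t',t]$ has length at most $2D$. The first step is to record that \knows{p}{t}{[0,t']}: if $p \in S_0$ this follows from Observation~\ref{obs:S1}, which even gives \knows{p}{t}{[0,\max\{0,t-D\}]}, a superset; and if $p \notin S_0$ it is exactly Lemma~\ref{lem:knowswhenjoined}, which applies since $t \geq t_p^j$ and $p$ is active at $t$. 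I also note that $i = 2$ is admissible in Lemma~\ref{lem:left2}, because Assumption~(\ref{parameters:G}) gives $-1/\log_2(1-\alpha) \geq 4 \geq 2$; thus at most $(1-(1-\alpha)^2)N(t')$ nodes leave and, by Lemma~\ref{lem:size}, at most $((1+\alpha)^2-1)N(t')$ nodes enter during $(t',t] \subseteq (t',t'+2D]$.

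For the lower bound, I would take any node $q$ that is present at time $t'$ and does not leave during $(t',t]$. Since $q$ entered by time $t'$, we have $enter(q) \in \mbox{\it SysInfo}^{[0,t']} \subseteq \mbox{\it Changes}_p^t$; and since $q$ has not left by time $t$, the event $leave(q)$ has not yet occurred, so it cannot be in $\mbox{\it Changes}_p^t$. Hence each such $q$ belongs to $\mbox{\it Present}_p^t$. The number of such nodes is at least $N(t')$ minus the number leaving during $(t',t]$, which by Lemma~\ref{lem:left2} is at least $N(t') - (1-(1-\alpha)^2)N(t') = (1-\alpha)^2 N(t')$.

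For the upper bound, I would show $\mbox{\it Present}_p^t$ is contained in the set of nodes that entered by time $t$ and had not left by time $t'$. The first restriction is causality: if $enter(q) \in \mbox{\it Changes}_p^t$, then $q$ broadcast its enter announcement at some $t_q^e \leq t$, since $p$ records $enter(q)$ only after receiving, directly or transitively, a message originating from that broadcast. The second restriction uses the guaranteed knowledge: if $q$ had left by $t'$, then $leave(q) \in \mbox{\it SysInfo}^{[0,t']} \subseteq \mbox{\it Changes}_p^t$, excluding $q$ from $\mbox{\it Present}_p^t$. Splitting this set by entry time, the nodes that entered by $t'$ and had not left by $t'$ are precisely the $N(t')$ nodes present at $t'$, and the nodes entering during $(t',t]$ number at most $((1+\alpha)^2-1)N(t')$ by Lemma~\ref{lem:size}, so $|\mbox{\it Present}_p^t| \leq N(t') + ((1+\alpha)^2-1)N(t') = (1+\alpha)^2 N(t')$.

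The conceptual crux, and the step I expect to be the main obstacle, is the upper bound: unlike the lower bound it cannot appeal to $p$'s knowledge being \emph{complete}, so it instead relies on the fact that every node leaving by $t'$ is surely recorded by $p$ (ruling out stale survivors) together with the causality principle that $p$ cannot overcount by inventing entries for nodes that have not yet entered. Everything else reduces to substituting $i = 2$ into Lemmas~\ref{lem:size} and~\ref{lem:left2}. Finally, the companion statement in Lemma~\ref{lem:members-2D} follows by the same two-sided argument, tracking $join$ and $leave$ events over the wider $4D$ knowledge window instead.
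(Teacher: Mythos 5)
Your proof is correct and follows essentially the same route as the paper's: knowledge of all events up to $\max\{0,t-2D\}$ via Lemma~\ref{lem:knowswhenjoined}, then Lemma~\ref{lem:size} and Lemma~\ref{lem:left2} (with $i=2$, admissible by Assumption~(\ref{parameters:G})) to bound the enters and leaves over the remaining window of length at most $2D$. You merely make explicit two points the paper leaves implicit --- the causality argument that $\mbox{\it Changes}_p^t$ cannot contain events that have not occurred, and the $p \in S_0$ case handled via Observation~\ref{obs:S1} rather than Lemma~\ref{lem:knowswhenjoined}.
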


\begin{proof}
By Lemma~\ref{lem:knowswhenjoined}, \knows{p}{t}{[0,\max\{0,t-2D\}]}.
Thus {\em Present}$_p^t$ contains all nodes that are present at
time $\max\{0,t-2D\}$,
plus any nodes that  enter in $(\max\{0,t-2D\},t]$ which $p$ has learned
about,
minus any nodes that  leave in $(\max\{0,t-2D\},t]$ which $p$ has learned about.
Then, by Lemma~\ref{lem:size},
$|Present_p^t| \leq N(\max\{0,t-2D\}) +( (1+\alpha)^2-1) \cdot N(\max\{0,t-2D\})
= (1+\alpha)^2 \cdot N(\max\{0,t-2D\})$.
Similarly, by Lemma~\ref{lem:left2} and Assumption~(\ref{parameters:G}),
$|Present_p^t| \geq N(\max\{0,t-2D\}) -(1- (1-\alpha)^2) \cdot N(\max\{0,t-2D\})
= (1-\alpha)^2 \cdot N(\max\{0,t-2D\})$.
\end{proof}

\remove{  
THE FOLLOWING IS A STRONGER VARIANT OF THE PREVIOUS LEMMA,
WHEN THE LOWER BOUND ON $t$ IS LARGER.

\begin{lemma}
\label{lem:present-D}
For every node $p$ and every time $t\geq0$ at which $p$ is active,
if $p \in S_0$ or $t\geq t_p^e+2D$, then
\begin{align*}
(1-\alpha) \cdot N(\max\{0,t-D\}) \le
|\mbox{\it Present}_p^t| \le (1+\alpha) \cdot N(\max\{0,t-D\}).
\end{align*}
\end{lemma}

\begin{proof}
Suppose $p \in S_0$ or $t\geq t_p^e+2D$.
By Observation~\ref{obs:S1} or Lemma~\ref{lem:gen2}, \knows{p}{t}{[0,\max\{0,t-D\}]}.
Thus {\em Present}$_p^t$ contains all nodes that are present at
time $t-D$,
plus any nodes that entered in $(t-D,t]$ which $p$ has learned
about,
minus any nodes that left in $(t-D,t]$ which $p$ has learned about.
Then, by Lemma~\ref{lem:size},
$|Present_p^t| \leq N(t-D) +\alpha\cdot N(t-D)
= (1+\alpha) \cdot N(t-D)$.
Similarly, by Lemma~\ref{lem:left2} and Assumption~(\ref{parameters:G}),
$|Present_p^t| \geq N(t-D) -\alpha \cdot N(t-D)
= (1-\alpha) \cdot N(t-D)$.
\end{proof}
} 

\begin{lemma}
\label{lem:members-2D}
For every node $p$ and every time $t\geq t_p^j$ at which $p$ is active,
\begin{align*}
(1-\alpha)^4 \cdot N(\max\{0,t-4D\}) \le
|\mbox{\it Members}_p^t| \le (1+\alpha)^4 \cdot N(\max\{0,t-4D\}).
\end{align*}
\end{lemma}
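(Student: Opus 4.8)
The plan is to mirror the proof of Lemma~\ref{lem:present-2D}, but to track members (nodes that have joined but not left) rather than merely present nodes, which forces the reference time back from $2D$ to $4D$. The key structural fact I would exploit is the analogue of Lemma~\ref{lem:knowswhenjoined} for join information: an active node $p$ at time $t \geq t_p^j$ knows about all enter, join, and leave events that occurred up to time $\max\{0,t-2D\}$, i.e.~\knows{p}{t}{[0,\max\{0,t-2D\}]}. So $\mbox{\it Members}_p^t$ contains every node that had \emph{joined} but not left by time $\max\{0,t-2D\}$, plus or minus whatever later joins and leaves in $(\max\{0,t-2D\},t]$ that $p$ happens to have learned about.

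The main obstacle, and the reason for the extra $2D$, is relating the number of nodes that have joined by time $s := \max\{0,t-2D\}$ to the system size $N(\cdot)$, since $N$ counts present nodes, not members. Here I would invoke Theorem~\ref{thm:joins}: every node active at time $t_q^e + 2D$ has joined by then. Consequently, every node that was present (and did not leave or crash) at time $\max\{0,s-2D\} = \max\{0,t-4D\}$ has had at least $2D$ time to complete joining by time $s$, so it is a member at time $s$ (unless it left or crashed in the interim). This is exactly where the second shift by $2D$ enters, yielding the reference time $t-4D$. Counting gives a lower bound on the number of members present at time $s$ of roughly $N(\max\{0,t-4D\})$ minus the nodes that leave during $(\max\{0,t-4D\},s]$, and an upper bound of all nodes ever present through time $s$.

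For the quantitative bounds I would apply Lemma~\ref{lem:size} and Lemma~\ref{lem:left2} over the two successive intervals of length $2D$ each. For the upper bound: $|\mbox{\it Members}_p^t| \le |\mbox{\it Present}_p^t|$ by Observation~\ref{obs:S2}, and present nodes are themselves bounded by $(1+\alpha)^2 N(\max\{0,t-2D\}) \le (1+\alpha)^2 (1+\alpha)^2 N(\max\{0,t-4D\}) = (1+\alpha)^4 N(\max\{0,t-4D\})$ via Lemma~\ref{lem:size} applied across $[t-4D,t-2D]$. For the lower bound: using the join guarantee above, the number of members at time $s$ that are still present at time $t$ is at least the number of nodes present at $\max\{0,t-4D\}$ that survive (do not leave) through time $t$, which by two applications of Lemma~\ref{lem:left2} (each valid since Assumption~(\ref{parameters:G}) gives $-1/\log_2(1-\alpha) \geq 4 \geq 2$) is at least $(1-\alpha)^2 (1-\alpha)^2 N(\max\{0,t-4D\}) = (1-\alpha)^4 N(\max\{0,t-4D\})$. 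The delicate points to check are that $p$'s \emph{Changes} set actually reflects these surviving members (handled by \knows{p}{t}{[0,\max\{0,t-2D\}]} from Lemma~\ref{lem:knowswhenjoined}) and that crashed-but-not-left nodes do not spuriously inflate or deflate the count; since crashed nodes that have not left still contribute their $join$ event without a $leave$ event, they remain in $\mbox{\it Members}_p^t$, consistent with both bounds.
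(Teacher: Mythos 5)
Your overall plan is the paper's: use Lemma~\ref{lem:knowswhenjoined} to get \knows{p}{t}{[0,\max\{0,t-2D\}]}, use Theorem~\ref{thm:joins} to push the reference time back another $2D$ (nodes present at $\max\{0,t-4D\}$ that stay active have joined by $\max\{0,t-2D\}$, and $p$ knows it), and then count with Lemmas~\ref{lem:size} and~\ref{lem:left2}. Your upper bound is sound, though routed slightly differently from the paper: you go through Observation~\ref{obs:S2} and Lemma~\ref{lem:present-2D} and then stretch back $2D$ more with Lemma~\ref{lem:size}, whereas the paper counts directly with a single application of Lemma~\ref{lem:size} at $i=4$; both give $(1+\alpha)^4$.

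Your lower bound, however, has a genuine gap. You obtain $(1-\alpha)^4 N(\max\{0,t-4D\})$ by ``two applications of Lemma~\ref{lem:left2}'' with $i=2$, multiplying the survival factors $(1-\alpha)^2 \cdot (1-\alpha)^2$. Lemma~\ref{lem:left2} does not compose multiplicatively: it bounds the \emph{total number of leave events} in an interval by a fraction of the system size at the \emph{start of that interval}; it says nothing about what fraction of a designated subset survives. In the second window $(\max\{0,t-2D\},t]$, up to $\bigl(1-(1-\alpha)^2\bigr)N(\max\{0,t-2D\})$ nodes may leave, and the adversary can draw all of them from your set of first-window survivors; since $N(\max\{0,t-2D\})$ can be as large as $(1+\alpha)^2 N(\max\{0,t-4D\})$, the composed argument only yields
\begin{align*}
\Bigl[(1-\alpha)^2 - \bigl(1-(1-\alpha)^2\bigr)(1+\alpha)^2\Bigr]\, N(\max\{0,t-4D\}),
\end{align*}
which is strictly smaller than $(1-\alpha)^4 N(\max\{0,t-4D\})$ for every $\alpha>0$ (about $0.580\,N$ versus $0.656\,N$ at $\alpha=0.1$), so the claimed inequality does not follow. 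The fix is what the paper does: apply Lemma~\ref{lem:left2} \emph{once} with $i=4$ to the whole interval $(\max\{0,t-4D\},t]$, bounding the leaves by $\bigl(1-(1-\alpha)^4\bigr)N(\max\{0,t-4D\})$ directly; this is exactly why Assumption~(\ref{parameters:G}) demands $-1/\log_2(1-\alpha)\geq 4$ rather than just $\geq 2$ --- you even quote the ``$\geq 4$'' but then use it only to license $i=2$. One further caution: your remark that ``crashed nodes that have not left still contribute their $join$ event'' holds only for nodes that crashed \emph{after} joining; a node that crashes before completing the join protocol never generates a $join$ event (the paper's own proof glosses over this with its ``if it is still active'' qualifier, so this imprecision is shared rather than introduced by you).
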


\begin{proof}
By Lemma~\ref{lem:knowswhenjoined}, \knows{p}{t}{[0,\max\{0,t-2D\}]} and, by
Theorem~\ref{thm:joins}, every node that enters by time $\max\{0,t-4D\}$
joins by time $\max\{0,t-2D\}$ if it is 
still 
active.
Thus {\em Members}$_p^t$ contains all nodes that are present at time $\max\{0,t-4D\}$
plus any nodes that enter in $(\max\{0,t-4D\},t]$ which $p$ learns have joined,
minus any nodes that leave in $(\max\{0,t-4D\},t]$ which $p$ 
learns have left. 
Then, by Lemma~\ref{lem:size},
$|Members_p^t| \leq N(\max\{0,t-4D\}) +( (1+\alpha)^4-1) \cdot N(\max\{0,t-4D\})
= (1+\alpha)^4 \cdot N(\max\{0,t-4D\})$.
Similarly, by Lemma~\ref{lem:left2} and Assumption~(\ref{parameters:G}),
$|Members_p^t| \geq N(\max\{0,t-2D\}) -(1- (1-\alpha)^4) \cdot N(\max\{0,t-4D\})
= (1-\alpha)^4 \cdot N(\max\{0,t-4D\})$.
\end{proof}

The next lemma
provides 
a lower bound on the number of nodes that will
reply to an operation's query or update message.

\begin{lemma}
\label{lem:joined-for-D}
If node $p$ is active at time $t \geq t_p^j$, then the number of nodes that
join by time $t$ and are still active at time $t+D$ is at least
$\left[\frac{(1-\alpha)^3}{(1+\alpha)^2} - \Delta(1+\alpha)\right]\cdot|\mbox{\it Present}_p^t| $.
\end{lemma}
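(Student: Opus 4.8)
The plan is to count a convenient subset of the relevant nodes, namely those that were already present $2D$ time units before $t$ and survive until $t+D$, and then convert the resulting count of present nodes into one in terms of $|\mbox{\it Present}_p^t|$ using Lemma~\ref{lem:present-2D}. Set $t' = \max\{0,t-2D\}$ and let $S$ be the set of nodes present at time $t'$, so $|S| = N(t')$. First I would argue that every node $r \in S$ that is still active at time $t+D$ has already joined by time $t$: if $r \in S_0$ it joins at time $0$; otherwise $t' > 0$, which forces $t > 2D$ and $t_r^e \le t' = t-2D$, so $t_r^e + 2D \le t \le t+D$, and since activity is monotone backward in time, $r$ active at $t+D$ is in particular active at $t_r^e + 2D$, whence Theorem~\ref{thm:joins} gives that $r$ joins by $t_r^e + 2D \le t$. (When $t \le 2D$ we have $t' = 0$, so $S = S_0$ and this subcase is vacuous.) Consequently every such $r$ is counted by the lemma, and it suffices to lower bound the number of nodes of $S$ that are active at time $t+D$.

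Next I would bound how many nodes of $S$ fail to be active at $t+D$: each such node either leaves during $(t',t+D]$ or is crashed and still present at time $t+D$. Since $(t',t+D] \subseteq (t',t'+3D]$ and Assumption~(\ref{parameters:G}) yields $-1/\log_2(1-\alpha) \ge 4 \ge 3$, Lemma~\ref{lem:left2} (with $i=3$) bounds the number of leaves in this interval by $(1-(1-\alpha)^3)|S|$. For the crashes, the failure constraint allows at most $\Delta \cdot N(t+D)$ crashed nodes to be present at time $t+D$, and by Lemma~\ref{lem:size} at most $((1+\alpha)^3-1)|S|$ nodes enter during $(t',t+D]$, so $N(t+D) \le (1+\alpha)^3|S|$, giving at most $\Delta(1+\alpha)^3|S|$ crashed present nodes. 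Adding these two bounds (which only overcounts any node that both crashed and left) and subtracting from $|S|$, the number of nodes of $S$ active at $t+D$ is at least
\begin{equation*}
|S| - (1-(1-\alpha)^3)|S| - \Delta(1+\alpha)^3|S| = \left[(1-\alpha)^3 - \Delta(1+\alpha)^3\right]|S|.
\end{equation*}

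Finally, Assumption~(\ref{parameters:D}) guarantees $(1-\alpha)^3 - \Delta(1+\alpha)^3 > 0$, so this lower bound can only decrease when $|S|$ is replaced by a smaller quantity. Lemma~\ref{lem:present-2D} gives $|\mbox{\it Present}_p^t| \le (1+\alpha)^2 N(t') = (1+\alpha)^2 |S|$, i.e.\ $|S| \ge |\mbox{\it Present}_p^t|/(1+\alpha)^2$. Substituting,
\begin{equation*}
\left[(1-\alpha)^3 - \Delta(1+\alpha)^3\right]|S| \ge \frac{(1-\alpha)^3 - \Delta(1+\alpha)^3}{(1+\alpha)^2}\,|\mbox{\it Present}_p^t| = \left[\frac{(1-\alpha)^3}{(1+\alpha)^2} - \Delta(1+\alpha)\right]|\mbox{\it Present}_p^t|,
\end{equation*}
which is exactly the claimed bound.

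I expect the main obstacle to be the failure accounting rather than the arithmetic: one must argue carefully that the leave bound and the crash bound together cover every node of $S$ that is inactive at $t+D$ without an inequality going the wrong way (forced leaves allow a node to be simultaneously crashed and departed, but this only helps, since it means we are overcounting the inactive nodes), and one must be careful with the edge case $t < 2D$, where $t' = 0$ collapses $S$ to $S_0$ and makes the join step trivial. The remaining verification — that the coefficient is positive and that Lemma~\ref{lem:present-2D} converts the count correctly — is routine.
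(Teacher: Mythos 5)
Your proof is correct and takes essentially the same route as the paper's: count the nodes present at $\max\{0,t-2D\}$, subtract the leaves via Lemma~\ref{lem:left2} with Assumption~(\ref{parameters:G}) and the crashes via the failure fraction together with Lemma~\ref{lem:size}, convert to $|\mbox{\it Present}_p^t|$ via Lemma~\ref{lem:present-2D}, and invoke Theorem~\ref{thm:joins} to conclude these nodes joined by time $t$. Your write-up is in fact slightly more careful than the paper's, which leaves the Theorem~\ref{thm:joins} step and the positivity of the coefficient (needed to substitute the smaller quantity) implicit.
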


\remove{\faith{I CHANGED $t'$ BACK TO $\max\{0,t-2D\}$ THROUGH OUT THIS PROOF.
WITH $t'$ THE RELATIONSHIP TO $t$ IS LOST.}}

\begin{proof}
By Lemma~\ref{lem:left2} and Assumption~(\ref{parameters:G}),
the maximum number of nodes that 
leave during
$(\max\{0,t-2D\},t+D]$
is at most $(1-(1-\alpha)^3) \cdot N(\max\{0,t-2D\})$.
By Lemma~\ref{lem:size}, at most
$ ((1+\alpha)^3-1)\cdot N(\max\{0,t-2D\})$
nodes 
enter during $(\max\{0,t-2D\},t+D]$. So, at most $ \Delta(1+\alpha)^3\cdot N(\max\{0,t-2D\})$ nodes are crashed by $t+D$.
Thus, there are at least
$$N(\max\{0,t-2D\}) - (1-(1-\alpha)^3) \cdot N(\max\{0,t-2D\})-  \Delta(1+\alpha)^3\cdot N(\max\{0,t-2D\})$$  $$ = [(1-\alpha)^3 - \Delta(1+\alpha)^3] \cdot N(\max\{0,t-2D\})$$
nodes that were present at time
$\max\{0,t-2D\}$
and are still active at time $t+D$.
This number is bounded below by
$\left[\frac{(1-\alpha)^3}{(1+\alpha)^2} - \Delta(1+\alpha)\right]\cdot |\mbox{\it Present}_p^t| $
since, by Lemma~\ref{lem:present-2D},
$N(\max\{0,t-2D\}) \ge |\mbox{\it Present}_p^t|/(1+\alpha)^2$.
By Theorem~\ref{thm:joins}, all of these nodes are joined by time $t$.
\end{proof}

\begin{theorem}
\label{thm:ops-live}
Every read or write operation invoked by a node that remains active completes.
\end{theorem}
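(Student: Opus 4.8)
The plan is to establish the two termination conditions of Theorem~\ref{thm:ops-live} by verifying the claim stated just before Lemma~\ref{lem:present-2D}: the quorum bound that an operation waits for never exceeds the number of responses it is guaranteed to collect. Let $p$ be a node that remains active forever and consider an operation it invokes after joining. The operation first runs a read phase and, upon its completion, a write phase; since completion of the read phase triggers the write phase and completion of the write phase causes the operation to return or acknowledge, it suffices to handle each phase separately.

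\textbf{Read phase.} Let $t \ge t_p^j$ be the time at which $p$ broadcasts its query (line~\ref{line:bcast query}) and sets $quorum\_size = \beta\,|\mbox{\it Members}_p^t|$. By Lemma~\ref{lem:joined-for-D}, at least $\left[\frac{(1-\alpha)^3}{(1+\alpha)^2} - \Delta(1+\alpha)\right]\cdot|\mbox{\it Present}_p^t|$ nodes are joined by time $t$ and still active at time $t+D$; as observed in that lemma's proof, these nodes were present at time $\max\{0,t-2D\}$ and neither left nor crashed through $t+D$, so each is active throughout $[t,t+D]$. Consequently each such node receives $p$'s query within $D$ time, is already joined when it does so, and hence broadcasts a response. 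Since $p$ stays active, it receives all of these responses, each tagged with the current phase. By Observation~\ref{obs:S2}, $|\mbox{\it Members}_p^t| \le |\mbox{\it Present}_p^t|$, and by Assumption~(\ref{parameters:C}), $\beta \le \frac{(1-\alpha)^3}{(1+\alpha)^2} - \Delta(1+\alpha)$, so
\[
quorum\_size = \beta\,|\mbox{\it Members}_p^t| \le \beta\,|\mbox{\it Present}_p^t| \le \left[\frac{(1-\alpha)^3}{(1+\alpha)^2} - \Delta(1+\alpha)\right]\cdot|\mbox{\it Present}_p^t|.
\]
Thus the guaranteed number of responses is at least $quorum\_size$, the test in line~\ref{line:quorum reached} eventually holds, and the read phase ends.

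\textbf{Write phase.} The argument is symmetric. When the write phase begins at some time $t' \ge t$, $p$ broadcasts an update message and resets $quorum\_size = \beta\,|\mbox{\it Members}_p^{t'}|$. Applying Lemma~\ref{lem:joined-for-D} at $t'$ again yields at least $\left[\frac{(1-\alpha)^3}{(1+\alpha)^2} - \Delta(1+\alpha)\right]\cdot|\mbox{\it Present}_p^{t'}|$ nodes joined by $t'$ and active throughout $[t',t'+D]$; each receives the update while joined and replies with an ack, which $p$ receives. The identical chain of inequalities, using Observation~\ref{obs:S2} and Assumption~(\ref{parameters:C}), shows that at least $quorum\_size$ acks arrive, so the write phase completes and the operation terminates.

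The main obstacle is bookkeeping rather than a new idea: one must confirm that the guaranteed responders are \emph{already joined} at the instant they receive the query or update (so that the $is\_joined$ guards in the server code fire) and that they remain active long enough for their reply to reach $p$. Both follow from the precise form of Lemma~\ref{lem:joined-for-D}, which counts nodes present at $\max\{0,t-2D\}$ that survive to $t+D$: permanence of crashes then gives activity throughout $[t,t+D]$, and Theorem~\ref{thm:joins} gives that they have joined by $t$. The only quantitative point is relating the quorum, computed from $\mbox{\it Members}$, to the supply of responders, bounded in terms of $\mbox{\it Present}$, and this gap is bridged exactly by Observation~\ref{obs:S2} together with Assumption~(\ref{parameters:C}).
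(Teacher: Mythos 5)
Your proposal is correct and follows essentially the same route as the paper's own proof: both rest on Lemma~\ref{lem:joined-for-D} for the supply of guaranteed responders, and on the chain $quorum\_size = \beta\,|\mbox{\it Members}_p^t| \le \beta\,|\mbox{\it Present}_p^t| \le \left[\frac{(1-\alpha)^3}{(1+\alpha)^2} - \Delta(1+\alpha)\right]\cdot|\mbox{\it Present}_p^t|$ via Observation~\ref{obs:S2} and Assumption~(\ref{parameters:C}). The only cosmetic difference is that the paper treats the two phases uniformly as a single ``phase starting at time $t$,'' whereas you spell out the read and write phases separately and make explicit the bookkeeping (responders already joined, active throughout $[t,t+D]$) that the paper leaves implicit.
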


\begin{proof}
Each operation consists of a read phase and a write phase.
We show that each phase terminates
within $2D$ time, provided the client does not crash or leave.

Consider a phase of an operation by client $p$ that starts at time $t$.
Every node that joins by time $t$ and is still active at time $t+D$
receives $p$'s query or update message
and replies with a response or ack message
by time $t+D$.
By Lemma~\ref{lem:joined-for-D}, there are at least
$\left[\frac{(1-\alpha)^3}{(1+\alpha)^2} - \Delta(1+\alpha)\right]\cdot |\mbox{\it Present}_p^t| $
such nodes.

From 
Assumption~(\ref{parameters:C}) and Observation~\ref{obs:S2},
\begin{align*}
\left[\frac{(1-\alpha)^3}{(1+\alpha)^2} - \Delta(1+\alpha)\right]\cdot |\mbox{\it Present}_p^t| \geq
           \beta |\mbox{\it Present}_p^t|
  \geq \beta |\mbox{\it Members}_p^t|  = quorum\_size_p^t.
\end{align*}
Thus, by time $t+2D$, $p$ receives sufficiently many response or ack messages
to complete the phase.
\end{proof}


\remove{
A write operation $W$ 
by node $p$ consists of a read phase $r$ followed by a write phase $w$. Let $t_w$ denote the time at the beginning of its write phase. At time $t_w$, node $p$ broadcasts an update message (on line~\ref{line:bcast update1} or line~\ref{line:bcast update2} of Algorithm~\ref{algo:rw_client}) containing a triple $(v,s,i)$, where $value(W) = v$ is the {\em value written by} $W$
and $ts(w) = (s,i) = (seq_p^{t_w}, id_p^{t_w})$
is the {\em timestamp} of $w$.
}

Now we prove atomicity of the \AlgName{} algorithm. 
Let $\cal T$ be the set of read operations that complete and write operations that execute line~\ref{line:bcast update1} of Algorithm~\ref{algo:rw_client}.
For any node $p$, let $ts_p^t =(seq_p^t,id_p^t)$ denote the {\em timestamp} of 
the latest register value known to 
 node $p$ at time $t$.
Note that 
new 
 timestamps are created by write operations
(on lines~\ref{line:new timestamp}-\ref{line:new timestamp2}  of Algorithm~\ref{algo:rw_client})
and are sent via enter-echo, update, and update-echo messages.
Initially, $ts_p^0 = (0,\bot)$ for all nodes $p$.

For any read or write operation $o$ in $\cal T$ by $p$, 
 the {\em timestamp of its read phase},
$ts^{rp}(o)$, is
$ts_p^t$, where $t$ is 
the end of its read phase
(i.e., when the condition on line~\ref{line:quorum reached} of
Algorithm~\ref{algo:rw_client} evaluates to true).
The {\em timestamp of its write phase},
$ts^{wp}(o) $, is $ts_p^t$, where $t$ is  the beginning of its write phase
(i.e., when it broadcasts on line~\ref{line:bcast update1} 
 of Algorithm~\ref{algo:rw_client}). 
The {\em timestamp of a read operation}  in $\cal T$ is the timestamp of its read phase.
The {\em timestamp of a write operation}  in $\cal T$ is the timestamp of its write phase. 

Lemmas \ref{lem:lin1}--\ref{lem:lin4}
show that write phase information propagates properly through the system.
They are analogous to Observation \ref{obs:S1} and Lemmas \ref{lem:gen1}--\ref{lem:knowswhenjoined},
which concern the propagation of information about {\em enter}, {\em join}, and {\em leave} events. 

\begin{lemma}
\label{lem:lin1}
If $o$ is an operation in $\cal T$ whose write phase w starts at $t_w$,
node $p$ is active at time $t \geq t_w+D$, and
$t_p^e \leq t_w$, then $ts_p^t \geq ts^{wp}(o)$.
\end{lemma}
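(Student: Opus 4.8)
The plan is to combine two facts: that a node's timestamp $ts_p^t$ is monotonically non-decreasing in $t$, and that the hypotheses force $p$ to be active throughout the entire interval $[t_w, t_w+D]$ during which the write phase's update message is in transit. For monotonicity, I would inspect every line of Algorithms~\ref{algo:Common},~\ref{algo:rw_client}, and~\ref{algo:Server} that modifies $(seq,id)$: each either does so under an explicit guard $(s,i) > (seq,id)$ (in the enter-echo handler at line~\ref{line:update value in enter}, the response handler, the update handler at lines~\ref{line:new value?}--\ref{line:update value}, and the update-echo handler) or increments $seq$ in BeginWritePhase (lines~\ref{line:new timestamp1}--\ref{line:new timestamp2}). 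In both cases the new timestamp strictly exceeds the old one, so $ts_p^{t_1} \le ts_p^{t_2}$ whenever $t_1 \le t_2$ and $p$ is active throughout $[t_1,t_2]$.

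For the second fact, I would note that leaving and crashing are permanent and each node experiences at most one {\sc Leave} and one {\sc Crash} signal; hence activity is downward-closed in time, meaning that if $p$ is active at time $t$ with $t_p^e \le t$, then $p$ is active throughout $[t_p^e, t]$. Since $t_p^e \le t_w$ and $t \ge t_w + D$, this interval contains $[t_w, t_w+D]$, so $p$ is active throughout the window in which the update message broadcast at $t_w$ is guaranteed to be delivered.

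A short case analysis then finishes the proof. Let $q$ be the node performing $o$. If $p = q$, then at time $t_w$ node $p$ sets its own timestamp to $ts^{wp}(o)$: for a write this occurs via the increment of $seq$ on line~\ref{line:new timestamp}, and for a read's write-back the timestamp is unchanged and already equals $ts^{wp}(o) = ts_p^{t_w}$; either way $ts_p^{t_w} = ts^{wp}(o)$, and monotonicity gives $ts_p^t \ge ts^{wp}(o)$. If $p \ne q$, then since $p$ is active throughout $[t_w, t_w+D]$ it directly receives the update message broadcast on line~\ref{line:bcast update1} at time $t_w$, which carries the timestamp $ts^{wp}(o)$, by time $t_w + D \le t$. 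The server handler then sets $p$'s timestamp to at least $ts^{wp}(o)$ via lines~\ref{line:new value?}--\ref{line:update value}, and monotonicity again yields $ts_p^t \ge ts^{wp}(o)$.

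The step requiring the most care is the observation that activity is downward-closed in time, which upgrades the assumption ``active at $t$'' to ``active throughout $[t_w,t_w+D]$''; this is precisely what licenses the guaranteed-delivery property of the broadcast service and is what distinguishes this lemma from settings where $p$ enters late or is inactive during transit. The monotonicity claim itself is routine but must be checked against every handler, and one must not overlook the subcase $p=q$, since a node does not receive its own broadcast and must instead be handled through the local write on lines~\ref{line:new timestamp1}--\ref{line:new timestamp2}.
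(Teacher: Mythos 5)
Your proof is correct and follows essentially the same route as the paper's: the hypotheses guarantee $p$ is active throughout $[t_w,t_w+D]$, so it receives the update message and lines~\ref{line:new value?}--\ref{line:update value} of Algorithm~\ref{algo:Server} give $ts_p^t \geq ts^{wp}(o)$. Your version merely makes explicit what the paper leaves implicit — the monotonicity of timestamps, the downward-closure of activity, and the $p=q$ subcase (where $ts_p^{t_w} = ts^{wp}(o)$ by definition and monotonicity alone suffices) — all of which are sound.
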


\begin{proof}
Since $p$ is active throughout $[t_w,t_w+D]$, it directly receives the update message
broadcast by $w$ at time $t_w$.
Hence, from lines~\ref{line:new value?}--\ref{line:update value}
of Algorithm~\ref{algo:Server},
$ts_p^t \geq ts^{wp}(o)$.
\end{proof}

\begin{lemma}
\label{lem:lin2}
Suppose a node $p \not\in S_0$ receives an enter-echo message at time $t''$ from
a node $q$ that sends it at time $t'$ in response to an enter message from $p$.
If $o$ is an operation 
 whose write phase w starts at $t_w$,
$p$ is active at time $t \geq \max\{t'',t_w+2D\}$,
and $q$ is active throughout $[t_w,t_w + D]$,
then $ts_p^{t} \geq ts^{wp}(o)$.
\end{lemma}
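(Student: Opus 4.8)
The plan is to mirror the two-case argument of Lemma~\ref{lem:gen1}, but to track the propagation of the single timestamp $ts^{wp}(o)$ instead of the \emph{enter}/\emph{join}/\emph{leave} events. The one piece of information to chase is the update message that $w$ broadcasts at time $t_w$, which carries $ts^{wp}(o)$. Throughout I would use two background facts. First, for any fixed node the quantity $ts$ is nondecreasing in time, since every assignment to $(val,seq,id)$ in Algorithms~\ref{algo:Common}, \ref{algo:rw_client}, and~\ref{algo:Server} is either guarded by $(s,i) > (seq,id)$ or increments $seq$. Second, ``$p$ is active at time $t$'' entails that $p$ is active throughout $[t_p^e,t]$, because leaving and crashing are permanent and a node never re-enters.

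First I would observe that, since $q$ is active throughout $[t_w,t_w+D]$, it directly receives the update message broadcast by $w$ at $t_w$; let $v \in [t_w,t_w+D]$ be the time of receipt. By the update handler of Algorithm~\ref{algo:Server}, immediately afterwards $ts_q^v \geq ts^{wp}(o)$. I then split on whether $q$ sees this update before or after it answers $p$'s enter message at time $t'$.

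In Case~1, $v \le t'$, so by monotonicity $ts_q^{t'} \geq ts_q^v \geq ts^{wp}(o)$, and this timestamp is exactly the $(val,seq,id)$ carried in the enter-echo that $q$ sends at $t'$. When $p$ processes that enter-echo at time $t'' \le t$ (lines~\ref{line:new value in enter?}--\ref{line:update value in enter} of Algorithm~\ref{algo:Common}), it ends with a timestamp at least $ts^{wp}(o)$, so $ts_p^t \geq ts_p^{t''} \geq ts^{wp}(o)$. In Case~2, $v > t'$: here $q$ receives the update \emph{after} sending the enter-echo, and the update handler of Algorithm~\ref{algo:Server} has $q$ broadcast an update-echo at $v$ carrying a timestamp $\geq ts^{wp}(o)$. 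It then remains to verify that $p$ receives this update-echo by time $t$. Since a message is never received before it is sent and $q$ sent the enter-echo at $t'$ in response to $p$'s enter message, $t' \geq t_p^e$; hence $v > t' \geq t_p^e$, so $p$ has already entered. Moreover $v \le t_w+D$ gives $v+D \le t_w+2D \le t$, so $[v,v+D] \subseteq [t_p^e,t]$ and $p$ is active throughout it, which guarantees direct delivery of the update-echo by $v+D \le t$; processing it yields $ts_p^t \geq ts^{wp}(o)$.

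The main obstacle is the bookkeeping in Case~2, where one must confirm simultaneously that $p$ has already entered before the update-echo is sent (via $t' \ge t_p^e$) and that the delivery window $[v,v+D]$ closes by $t$ (via $t \ge t_w+2D$), so that $p$ is provably active for the entire transit interval. This is precisely where the hypothesis $t \ge \max\{t'',t_w+2D\}$ is consumed: the $t''$ term covers Case~1 and the $t_w+2D$ term covers Case~2.
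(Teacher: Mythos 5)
Your proposal is correct and follows essentially the same argument as the paper's proof: let $q$ receive $w$'s update at some time in $[t_w,t_w+D]$, then case-split on whether this happens before or after $q$ sends the enter-echo at $t'$, using the enter-echo's carried timestamp in the first case and the update-echo (delivered by $t_w+2D \le t$) in the second. Your explicit verification that $t' \ge t_p^e$ (so $p$ is active throughout the update-echo's transit interval) is a detail the paper leaves implicit, but it is the same proof.
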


\begin{proof}
Since $q$ is  active throughout
$[t_w,t_w + D]$, it receives the update message from $w$ at some time $\hat{t} \in [t_w,t_w + D]$,
so $ts_q^{\hat{t}} \geq ts^{wp}(o)$.
At time $t'' \leq t$, node $p$ receives the enter-echo sent by node $q$ at time $t'$,
so $ts_p^t \geq ts_p^{t''} \geq ts_q^{t'}$.
If $t' \geq \hat{t}$, then $ts_q^{t'} \geq ts_q^{\hat{t}}$, so
$ts_p^{t} \geq ts^{wp}(o)$.
If $\hat{t} > t'$, then $q$ sends an update-echo at time $\hat{t} \leq t_w+D$,
$p$ receives it by time $\hat{t}+D  \leq t_w+2D \leq t$, and, thus,
$ts_p^t \geq ts_q^{\hat{t}} \geq ts^{wp}(o)$.
\end{proof}

\remove{
If $q$ receives the update message from $w$
before the enter message from $p$, then
$ts_q^{t'} \geq ts(w)$ from Lines~\ref{line:new value?}-\ref{line:update value}
in Algorithm~\ref{algo:Server}.
By Lines 5-6 in Algorithm 1, $ts_p^{t''} \geq ts_q^{t'}$.
Since $t \geq t''$, if follows that $ts_p^{t} \geq ts(w)$.

Otherwise,  $q$ receives the update message from $w$
after the enter message from $p$ and sends an update-echo message
in response by time $t_w+D$.
Since $p$ receives this message from $q$ by time $t_w+2D \leq t$, it follows that
$ts_p^{t} \geq ts_p^{t_w+2D} \geq  ts(w)$.
} 

\begin{lemma}
\label{lem:lin3}
If $o$ is an operation in $\cal T$ whose write phase w starts at $t_w$ and
node $p$ is active at time $t \geq \max\{t_p^e+2D,t_w+D\}$,
then $ts_p^t \geq ts^{wp}(o)$.
\end{lemma}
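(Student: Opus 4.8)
The plan is to mirror the proof of Lemma~\ref{lem:gen2}, replacing "knowledge of \emph{enter}/\emph{join}/\emph{leave} events" with "knowledge of the write-phase timestamp $ts^{wp}(o)$" and using Lemmas~\ref{lem:lin1} and~\ref{lem:lin2} in place of Observation~\ref{obs:S1} and Lemma~\ref{lem:gen1}. I would argue by induction on the order in which nodes enter the system. Throughout, I would rely on the (easily checked, and already implicitly used in Lemma~\ref{lem:lin2}) monotonicity of timestamps: once a node holds a timestamp at least $ts^{wp}(o)$ it never drops below it, since every assignment to $(val,seq,id)$ in Algorithms~\ref{algo:Common}--\ref{algo:Server} replaces the current timestamp only by a strictly larger one. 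Hence $ts_p^t \ge ts_p^{t''}$ whenever $t \ge t''$.

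First I would dispose of the easy case $t_p^e \le t_w$, which also covers the base case $p \in S_0$ (where $t_p^e = 0$): here $p$ is present when $w$ broadcasts its update, so since $p$ is active at $t \ge t_w + D$, Lemma~\ref{lem:lin1} immediately gives $ts_p^t \ge ts^{wp}(o)$.

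For the inductive step with $p \notin S_0$ and $t_p^e > t_w$, I would invoke Lemma~\ref{lem:gen0} to obtain a node $q$ that is active throughout $[\max\{0,t_p^e-2D\},\,t_p^e+D]$; this $q$ receives $p$'s enter message at some $t' \in [t_p^e,\,t_p^e+D]$ and sends an enter-echo that $p$ receives at $t'' \le t'+D \le t_p^e+2D \le t$. I then split on where $t_w$ falls. If $t_w \ge \max\{0,t_p^e-2D\}$, then, since also $t_w \le t_p^e$, we have $[t_w,t_w+D] \subseteq [\max\{0,t_p^e-2D\},\,t_p^e+D]$, so $q$ is active throughout $[t_w,t_w+D]$; because $t \ge t''$ and $t \ge t_p^e+2D > t_w+2D$, Lemma~\ref{lem:lin2} applies directly and yields $ts_p^t \ge ts^{wp}(o)$. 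Otherwise $t_w < \max\{0,t_p^e-2D\}$, which forces $t_p^e > 2D$ and $t_w < t_p^e-2D$; here I first show $ts_q^{t'} \ge ts^{wp}(o)$ and then propagate it, using $ts_p^t \ge ts_p^{t''} \ge ts_q^{t'}$ since the enter-echo sent at $t'$ carries $ts_q^{t'}$ and $p$ adopts it if it is larger. To establish $ts_q^{t'} \ge ts^{wp}(o)$: if $q \in S_0$ then $t_q^e = 0 \le t_w$ and $t' \ge t_p^e > t_w+D$, so Lemma~\ref{lem:lin1} applies to $q$ at time $t'$; if $q \notin S_0$ then $t_q^e \le t_p^e-2D$, so $t' \ge t_p^e \ge \max\{t_q^e+2D,\,t_w+D\}$ and $q$ enters before $p$, so the induction hypothesis applies to $q$ at time $t'$.

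The main obstacle is precisely this last case, where $p$ enters long after the write: the witness $q$ supplied by Lemma~\ref{lem:gen0} need not itself have been present at $t_w$, so it may never have received the update directly, and one cannot appeal to Lemma~\ref{lem:lin2}. The resolution is to recurse on $q$ through the induction hypothesis, which already encapsulates timestamp propagation, and then push $q$'s timestamp to $p$ via the enter-echo. The remaining work is purely in verifying the timing inequalities, most importantly $t \ge t_w+2D$ (from $t \ge t_p^e+2D$ and $t_p^e > t_w$) and $t'' \le t$, so that each of Lemmas~\ref{lem:lin1}, \ref{lem:lin2} and the induction hypothesis is applicable.
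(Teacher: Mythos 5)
Your proposal is correct and takes essentially the same route as the paper's proof: induction on the order of entry, the easy case $t_p^e \le t_w$ (covering $S_0$) via Lemma~\ref{lem:lin1}, a witness node $q$ from Lemma~\ref{lem:gen0}, and a case split on whether $t_w \ge \max\{0,t_p^e-2D\}$ (apply Lemma~\ref{lem:lin2}) or $t_w < \max\{0,t_p^e-2D\}$ (establish $ts_q^{t'} \ge ts^{wp}(o)$ and propagate it to $p$ through the enter-echo). The only cosmetic difference is that in the second case you treat $q \in S_0$ separately via Lemma~\ref{lem:lin1}, whereas the paper applies the induction hypothesis uniformly to $q$; both are valid.
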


\begin{proof}
The proof is by induction on the order in which nodes enter the system.
Suppose the claim is true for all nodes that enter earlier than $p$.
If $t_p^e \leq t_w$, which is the case for all $p \in S_0$,
then the claim follows from Lemma~\ref{lem:lin1}.
So, suppose that $t_w < t_p^e$.

By Lemma~\ref{lem:gen0}, there is at least one node $q$ that is active throughout
$[\max\{0,t_p^e-2D\},t_p^e+D]$.
It receives an enter message from $p$ at some time $t' \in [t_p^e, t_p^e +D]$
and sends an enter-echo message containing $ts_q^{t'}$ back to $p$.
This message is received by $p$ at some time $t'' \leq t'+D \leq t_p^e +2D \leq t$, 
so 
$ts_q^{t'} \leq ts_p^{t''} \leq ts_p^t$.

The first case is when 
 $t_w \geq \max\{0,t_p^e-2D\}$.
Since $t_w + D < t_p^e +D$, it follows that $q$ is active throughout $[t_w,t_w+D]$. Furthermore,
$t \geq t_p^e+2D \geq \max\{t'', t_w+2D\}$.
Hence, Lemma~\ref{lem:lin2} implies that
$ts_p^{t} \geq ts^{wp}(o)$.

The second case is when 
 $t_w < \max\{0,t_p^e-2D\}$. 
 Since $t_w \geq 0$, it follows that
$t_p^e-2D > 0$, $t_q^e \leq \max\{0,t_p^e -2D\} = t_p^e -2D$,  and $t_w < t_p^e-2D \leq t'-2D$,
so $t' \geq \max\{t_q^e + 2D, t_w + D\}$.
Note that $q$ is active at time $t'$ and $q$ enters before node $p$, so, by the induction hypothesis,
$ts_q^{t'} \geq ts^{wp}(o)$. 
Hence, 
$ts_p^t \geq ts^{wp}(o)$.
\end{proof}

\begin{lemma}
\label{lem:lin4}
If $o$ is an operation in $\cal T$ whose write phase starts at $t_w$,
node $p \not\in S_0$ joins at time $t_p^j$, and
$p$ is active at time $t \geq \max\{t_p^j,t_w+2D\}$,
then $ts_p^t \geq ts^{wp}(o)$.
\end{lemma}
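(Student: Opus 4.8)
The plan is to mirror the proof of Lemma~\ref{lem:knowswhenjoined}, but to propagate the write-phase timestamp $ts^{wp}(o)$ instead of $\mbox{\it Changes}$ information, replacing the appeals to Observation~\ref{obs:S1} and Lemma~\ref{lem:gen1} by the corresponding timestamp lemmas (Lemmas~\ref{lem:lin1}--\ref{lem:lin3}). I would argue by induction on the order in which nodes join, and I will use throughout that $ts_p$ is nondecreasing in time at an active node (a node replaces its timestamp only by a strictly larger one), exactly as in the proofs of Lemmas~\ref{lem:lin2} and~\ref{lem:lin3}.

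First I would dispose of the case $t\ge t_p^e+2D$: since also $t\ge t_w+2D\ge t_w+D$, we have $t\ge\max\{t_p^e+2D,\,t_w+D\}$, so Lemma~\ref{lem:lin3} immediately gives $ts_p^t\ge ts^{wp}(o)$. Hence assume $t<t_p^e+2D$; together with $t\ge t_w+2D$ this yields $t_w\le t-2D<t_p^e$. Let $q$ be the node whose enter-echo, sent at $t'$ and received by $p$ at $t''$, is the first enter-echo from a joined node that $p$ receives in response to its enter message, so $t_p^e\le t'\le t''\le t_p^j\le t$; in particular $t_w<t_p^e\le t'$. Running the identical counting argument used in Lemma~\ref{lem:knowswhenjoined} (via Lemmas~\ref{lem:size} and~\ref{lem:left2} and Assumptions~(\ref{parameters:G}) and~(\ref{parameters:H})), I obtain a node $q'$, active throughout $[\max\{0,t'-2D\},\,t'+D]$, whose enter-echo reaches $p$ before $p$ joins, sent at some $T'$ and received at some $T''\le t_p^j$.

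I would then split on the position of $t_w$ relative to the window in which $q'$ is guaranteed active. If $t_w\ge\max\{0,t'-2D\}$, then since $t_w<t'$ the interval $[t_w,t_w+D]$ is contained in $[\max\{0,t'-2D\},t'+D]$, so $q'$ is active throughout $[t_w,t_w+D]$; applying Lemma~\ref{lem:lin2} with this $q'$ (noting $t\ge t_p^j\ge T''$ and $t\ge t_w+2D$, hence $t\ge\max\{T'',t_w+2D\}$) gives $ts_p^t\ge ts^{wp}(o)$. Otherwise $t_w<\max\{0,t'-2D\}$, which forces $\max\{0,t'-2D\}=t'-2D$ and hence $t'>t_w+2D\ (\ge 2D)$; here I route the timestamp through the \emph{first joined responder} $q$. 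If $q\in S_0$ then $t_q^e=0\le t_w$ and $t'\ge\max\{t_q^e+2D,\,t_w+D\}$, so Lemma~\ref{lem:lin3} gives $ts_q^{t'}\ge ts^{wp}(o)$; if $q\notin S_0$ then, since $q$ joined before $p$ and is active at $t'\ge\max\{t_q^j,\,t_w+2D\}$, the induction hypothesis gives $ts_q^{t'}\ge ts^{wp}(o)$. Because $q$'s enter-echo carries $ts_q^{t'}$ and is received by $p$ at $t''\le t$, monotonicity yields $ts_p^t\ge ts_p^{t''}\ge ts_q^{t'}\ge ts^{wp}(o)$.

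The main obstacle is precisely this second case: when $o$'s write phase is old relative to $p$'s arrival, none of the responders $q'$ produced by the counting argument need have been present at time $t_w$, so they cannot be relied on to have received the update message carrying $ts^{wp}(o)$. The resolution is to send the information instead through the first joined responder $q$ and the inductive hypothesis, and the delicate part is verifying the timing inequalities ($t_w<t'$ in the first case, and $t'>t_w+2D$ in the second) that make the hypotheses of Lemmas~\ref{lem:lin2} and~\ref{lem:lin3} and of the inductive call hold simultaneously.
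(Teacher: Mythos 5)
Your proposal is correct and takes essentially the same route as the paper's own proof: induction on join order, reduction to the case $t < t_p^e+2D$ via Lemma~\ref{lem:lin3}, the counting argument of Lemma~\ref{lem:knowswhenjoined} to produce a responder $q'$ active throughout $[\max\{0,t'-2D\},t'+D]$, and the identical case split on $t_w$ versus $\max\{0,t'-2D\}$, applying Lemma~\ref{lem:lin2} in the first case and routing the timestamp through the first joined responder $q$ (via Lemma~\ref{lem:lin1}/\ref{lem:lin3} or the induction hypothesis) in the second. The only cosmetic differences are that you note the subcase $t_p^e \le t_w$ cannot arise given $t \ge t_w+2D$ (the paper dispatches it, redundantly, with Lemma~\ref{lem:lin1}) and that you cite Lemma~\ref{lem:lin3} instead of Lemma~\ref{lem:lin1} when $q \in S_0$.
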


\begin{proof}
The proof is by induction on the order in which nodes enter the system.
Suppose the claim is true
for all nodes that
join before $p$. If $t \geq t_p^e +2D$, then the claim follows by Lemma~\ref{lem:lin3}. So, assume
that $t < t_p^e +2D$.
If $t_p^e \leq t_w$, then the claim follows by Lemma~\ref{lem:lin1}. So, assume
that $t_w < t_p^e$.

Before $p$ joins, it receives an enter-echo message from a joined node in response to its enter message.
Suppose $p$ first receives such an enter-echo message
at time $t''$ and this enter-echo was sent by $q$ at time $t'$. Then $t'' \leq t_p^j \leq t$
and $ts_q^{t'} \leq ts_p^{t''} \leq ts_p^t$.

\remove{\faith{YOU CAN'T SAY ``FROM THE PROOF OF LEMMA''. YOU HAVE TO RESTATE THE LEMMA
SO WHAT YOU WANT IS IN THE STATEMENT, OR REPROVE IT HERE.}}

Now we prove that $p$ receives an enter echo message from a node $q'$ that is active throughout $[\max\{0,t'-2D\}, t' +D]$.
Let $S$ be the set of nodes present at time $\max\{0,t'-2D\}$, so  $|S|= N(\max\{0,t'-2D\})$. By Lemma~\ref{lem:left2} and Assumption~(\ref{parameters:G}), at most $(1-(1-\alpha)^3)|S|$ nodes leave during $(\max\{0,t'-2D\},t'+D]$. Since $t'' \leq t'+D$, it follows that $|\mbox{\it Present}_p^{t''}| \geq |S| - (1-(1-\alpha)^3)|S| = (1-\alpha)^3|S|$.
Hence, from lines~\ref{line:calculate join bound} 
and~\ref{line:check if enough enter echoes} 
of Algorithm~\ref{algo:Common}, $p$ waits until it has received at least $join\_bound = \gamma \cdot |\mbox{\it Present}_p^{t''}|  \geq \gamma \cdot (1-\alpha)^3|S| $ enter-echo messages before joining.

By Lemma~\ref{lem:size}, the number of nodes that enter during $(\max\{0,t'-2D\},t'+D]$ is at most $((1+\alpha)^3-1)|S|$.
Thus, at time $t'+D$, there are at most $ (1+\alpha)^3|S|$ nodes present and at most $ \Delta(1+\alpha)^3|S|$ nodes are crashed. Hence, the number of enter-echo messages $p$ receives before joining from nodes
that were active throughout $[\max\{0,t'-2D\},t'+D]$ is 
$join\_bound$ minus the total number of enters, leaves and crashes, 
which is at least 
\begin{multline}
\gamma \cdot (1-\alpha)^3 |S| - [((1+\alpha)^3 -1)|S| + (1-(1-\alpha)^3)|S| + \Delta(1+\alpha)^3 |S| ] \\
\geq  [(1+\gamma) (1-\alpha)^3 - (1+\Delta)(1+\alpha)^3]N_{min} . \label{equation1}
\end{multline}
Rearranging Assumption~(\ref{parameters:H}), we get
$[(1+\gamma)(1- \alpha)^3 -(1+\Delta)(1+\alpha)^3N_{min}] \geq 1$, so
expression~(\ref{equation1}) is at least $1$.  Hence $p$ receives an enter-echo message at some time $T'' \leq t_p^j$ from a node $q'$ that is active throughout  $[\max\{0,t'-2D\},t'+D] \supseteq [\max\{0,t'-2D\},t-D] $. Let $T'$ be the time that $q'$ sent its enter-echo message in response to the enter message from $p$.
Then $ts_{q'}^{T'} \leq ts_p^{T''} \leq ts_p^t$.

Note that $t_w < t_p^e \leq t'$ so $t_w + D \leq t'+D$.
If $t_w \geq \max\{0,t'-2D\}$, then $q'$ is active throughout $[t_w,t_w+D]$.
Since $t \geq \max\{T'',t_w+2D\}$, it follows by Lemma~\ref{lem:lin2} that $ts_p^t \geq ts^{wp}(o)$.
So, assume that $t_w < \max\{0,t'-2D\}$.

Since
$t_w 
\geq 
 0$,
it follows that $t' > t_w + 2D$.
If $q \in S_0$, then $t_q^e = 0
\leq 
 t_w$, 
 so, by Lemma~\ref{lem:lin1},
$ts_q^{t'} \geq ts^{wp}(o)$.
If $q \not\in S_0$, then, by the induction hypothesis,
$ts_q^{t'} \geq ts^{wp}(o)$, since $q$ joins at time $t_q^j < t_p^j \leq t'$.
Thus, in both cases, $ts_p^{t} \geq ts^{wp}(o)$.
\end{proof}

\remove{Given an execution, we order all the read operations that complete and all write operations that execute Line~\ref{line:bcast update1} of Algorithm~\ref{algo:rw_client} as follows.
First, order the write operations in order of their timestamps.
Note that all write operations have different timestamps, since each write operation by node $p$
has a timestamp with second component $p$ and first component larger than any timestamp
$p$ has previously seen.
Then insert each read operation immediately following the write operation
with the same timestamp.
Break ties among read operations by the order in which they start.}

Lemma~\ref{lem:lin5} is the key lemma for proving atomicity of \AlgName{}. It shows that for two non-overlapping operations in $\cal T$, the timestamp of the read phase of the latter operation is at least as big as the timestamp of the write phase of the former. Lemma~\ref{lem:lin6} uses Lemma~\ref{lem:lin5} to show that the timestamps of two non-overlapping operations respect real time ordering. Theorem~\ref{thm:atomicity} uses Lemmas~\ref{lem:lin5} and \ref{lem:lin6} to complete the proof of atomicity.

\begin{lemma}\label{lem:lin5}
For any two operations $op_1$ and $op_2$ in $\cal T$, 
 if $op_1$ finishes before $op_2$ starts, then $ts^{wp}(op_1) \leq ts^{rp}(op_2)$.
\end{lemma}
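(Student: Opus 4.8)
The plan is to prove the statement in two cases, according to how far apart the two operations lie in time. Write $t_w$ for the start of $op_1$'s write phase, so that $op_1$ broadcasts its update message carrying $ts^{wp}(op_1)$ at $t_w$; write $t_f$ for the time $op_1$ completes, $t_s$ for the start of $op_2$'s read phase (which is when $op_2$ begins), and $t_r$ for the end of $op_2$'s read phase, so that $ts^{rp}(op_2) = ts_{p_2}^{t_r}$, where $p_2$ is the node running $op_2$. Since $op_1$ finishes before $op_2$ starts, $t_w \le t_f < t_s \le t_r$, and since $p_2$ must have joined before invoking $op_2$, we have $t_{p_2}^j \le t_s \le t_r$. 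I would split on whether $t_r \ge t_w + 2D$ (the operations are ``far apart'') or $t_r < t_w + 2D$ (they are ``close'').

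In the far-apart case I would simply invoke the propagation lemmas. If $p_2 \notin S_0$, then $t_r \ge \max\{t_{p_2}^j, t_w+2D\}$, so Lemma~\ref{lem:lin4} gives $ts_{p_2}^{t_r} \ge ts^{wp}(op_1)$; if $p_2 \in S_0$, then $t_{p_2}^e = 0 \le t_w$ and $t_r \ge t_w + D$, so Lemma~\ref{lem:lin1} gives the same (Lemma~\ref{lem:lin3} works too). Either way $ts^{rp}(op_2) = ts_{p_2}^{t_r} \ge ts^{wp}(op_1)$, as required.

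The close case is the quorum-intersection argument, and this is where the real work lies. Let $A$ be the set of nodes whose acknowledgements $op_1$ collected during its write phase, so $|A| \ge \beta\,|Members_{p_1}^{t_w}|$, and let $B$ be the set of nodes whose responses $op_2$ collected during its read phase, so $|B| \ge \beta\,|Members_{p_2}^{t_s}|$. The first observation is that a single common node already suffices: if $r \in A \cap B$, then $r$ received $op_1$'s update no later than $t_f$ (it had to ack before $op_1$ finished) and received $op_2$'s query strictly after $t_f$ (since $op_2$ broadcasts its query at $t_s > t_f$); because a server's stored timestamp is monotonically non-decreasing (Algorithm~\ref{algo:Server} only ever raises $(seq,id)$), the timestamp $r$ returns to $op_2$ is at least the one it adopted from $op_1$'s update, namely at least $ts^{wp}(op_1)$. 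As $op_2$'s read phase sets $ts_{p_2}$ to the maximum timestamp it receives, this yields $ts^{rp}(op_2) \ge ts^{wp}(op_1)$. Hence it remains only to show $A \cap B \neq \emptyset$.

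To force the intersection I would prove $|A| + |B| > |U|$ for a universe $U \supseteq A \cup B$. Because $t_s \le t_r < t_w + 2D$, every node of $A$ receives its update in $[t_w, t_w+D]$ and every node of $B$ receives its query in $[t_s, t_s+D] \subseteq [t_w, t_w + 3D)$, so both sets consist of nodes present at some instant of $[t_w, t_w+3D)$; bounding the number of such nodes from above via the enter bound of Lemma~\ref{lem:size}, and lower-bounding $|Members_{p_1}^{t_w}|$ and $|Members_{p_2}^{t_s}|$ via Lemma~\ref{lem:members-2D} (using Lemma~\ref{lem:left2} to relate the two operations' $4D$-lookback reference times, which differ by less than $2D$), reduces $|A|+|B| > |U|$ to an inequality in $\alpha$ and $\Delta$ alone. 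The main obstacle is carrying out this count tightly enough that it follows from the stated lower bounds on $\beta$, Assumptions~(\ref{parameters:E}) and~(\ref{parameters:F}): the two quorum sizes are computed from different, drifting membership estimates, and crashed nodes inflate the present-count in $U$ without ever responding, so the $\Delta$ terms must be charged against $U$ (equivalently, discounted from the guaranteed responders) in exactly the way that makes two $\beta$-fractions overcover the universe. Making these constants line up with~(\ref{parameters:E}) and~(\ref{parameters:F}) is the crux; the remainder is bookkeeping with Lemmas~\ref{lem:size}, \ref{lem:left2}, and~\ref{lem:members-2D} and Observation~\ref{obs:S2}.
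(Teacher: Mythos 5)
Your overall architecture mirrors the paper's proof of this lemma: a far/close case split, the timestamp-propagation lemmas for the far case, and a quorum-intersection count for the close case. Your far case is correct and is in fact \emph{simpler} than the paper's: the paper's Case~I (query broadcast more than $2D$ after $t_w$) spends Assumption~(\ref{parameters:E}) on a counting argument to exhibit a responder $q \in Q_r$ that joined by $2D$ before the read phase, and only then applies Lemma~\ref{lem:lin1} or~\ref{lem:lin4} to $q$; you instead apply Lemma~\ref{lem:lin4} (or Lemma~\ref{lem:lin1} when $p_2 \in S_0$) directly to the reader $p_2$ at the end of its read phase, which is legitimate because $p_2$ is active and joined there, that time is at least $\max\{t_{p_2}^j,\,t_w+2D\}$ under your case hypothesis, and $ts^{rp}(op_2)$ is by definition $p_2$'s timestamp at that moment. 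Your choice of intersection target in the close case, $A \cap B = Q_w \cap Q_r$, together with the monotonicity argument, is also sound (indeed it is cleaner than the paper's final step, which intersects $Q_r$ with a larger set $K$ of ``stable'' nodes).

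The genuine gap is exactly the step you set aside as ``the crux'': establishing $|A|+|B| > |U|$. This is not bookkeeping --- it \emph{is} the close case --- and with the universe you propose it cannot be completed from the paper's assumptions. Taking $U$ to be the nodes present at some instant of the window, the best available bounds are $|U| \le (1+\alpha)^2 N(t_w)$ (even after shrinking your $3D$ window to $2D$, which the close case permits) and $|Q_w|, |Q_r| \ge \beta(1-\alpha)^4(1+\alpha)^{-4}N(t_w)$ via Lemmas~\ref{lem:members-2D} and~\ref{lem:size}; note that $\Delta$ cannot enter anywhere, since the upper bound on $|U|$ must hold in crash-free executions, so there is no valid way to ``charge the $\Delta$ terms against $U$'' as you hope. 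The pigeonhole therefore requires $\beta > (1+\alpha)^6/\bigl(2(1-\alpha)^4\bigr)$, and this is \emph{not} implied by Assumptions~(\ref{parameters:E}) and~(\ref{parameters:F}): for $\alpha = 0.04$, $\Delta = 0.06$, $\beta = 0.737$ --- the last row of Table~\ref{table:assumptions}, which satisfies all of the paper's constraints --- your count needs $\beta > 0.745$ and fails. The paper closes the count asymmetrically instead: its universe is $J$, the set of \emph{potential responders to the query}, bounded by $(1+\alpha)^3$ times the system size $2D$ before the read phase starts; $Q_r$ enters the inequality intact; and $Q_w$ is replaced by the set $K$ of nodes surviving the window, lower-bounded by $|Q_w|$ minus the leave and crash quotas, so that churn and failure losses are charged once, against a single quorum, and both quorum bounds are anchored at nearby reference times. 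Assumption~(\ref{parameters:F}) is tailored to precisely that inequality, not to the symmetric ``two $\beta$-fractions cover all present nodes'' count you describe; to repair your proof you would have to adopt this $J$/$K$ accounting (equivalently, take $U = J \cup Q_w$ and bound $|Q_w \setminus J|$ by leaves plus crashes), at which point you are re-deriving the paper's Case~II.
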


\begin{proof}

Let $p_1$ be the node that invokes $op_1$, let $w$ denote the write phase of $op_1$,
let $t_w$ be the start time of $w$, and let $\tau_w = ts^{wp}(op_1) = ts^{t_w}_{p_1}$.
Similarly, let $p_2$ be the node that invokes $op_2$, let $r$ denote the read phase of $op_2$,
let $t_r$ be the start time of $r$, and let $\tau_r = ts^{rp}(op_2) = ts^{t_r}_{p_2}$.

Let $Q_w$ be the set of nodes that $p_1$ hears from during $w$ (i.e., that sent messages causing $p_1$ to increment $heard\_from$ on line~\ref{line:inc heard from ack} of Algorithm~\ref{algo:rw_client}) and $Q_r$ be the set of nodes that $p_2$ hears from during $r$ (i.e., that sent messages causing $p_2$ to increment $heard\_from$ on line~\ref{line:inc heard from resp} of Algorithm~\ref{algo:rw_client}). Let $P_w = |Present_{p_1}^{t_w}| $ and $M_w = |Members_{p_1}^{t_w}| $ be the sizes of the {\em Present} and {\em Members} sets belonging to $p_1$ at time $t_w$, and $P_r = |Present_{p_2}^{t_r}|  $ and $M_r  = |Members_{p_2}^{t_r}| $ be the sizes of the {\em Present} and {\em Members} sets belonging to $p_2$ at time $t_r$. 
\medskip

\noindent
{\bf Case I:} $t_r > t_w + 2D$.

We start by showing there exists a node $q$ in $Q_r$ such that $t_q^j \leq t_r-2D$. Each node $q \in Q_r$ receives and responds to $r$'s query, so it 
joined by time $t_r+D$. By Theorem~\ref{thm:joins}, the number of nodes that can join during $(t_r - 2D, t_r + D]$ 
is at most the number of nodes that can enter in $(\max\{0,t_r - 4D\}, t_r + D]$. By Lemma~\ref{lem:size}, the number of nodes that can enter during $(\max\{0,t_r - 4D\}, t_r + D]$ is at most $((1 + \alpha)^5 - 1) \cdot N(\max\{0,t_r - 4D\})$. By Lemma~\ref{lem:members-2D}, $N(\max\{0,t_r - 4D\}) \le M_r/(1-\alpha)^4$. From the code and Assumption~(\ref{parameters:E}), it follows that $|Q_r| \ge \beta M_r  > M_r (1 + \alpha)^5 - 1)/(1 - \alpha)^4 \geq (1 + \alpha)^5 - 1) \cdot N(\max\{0,t_r - 4D\})$, which is at most the number of nodes that can enter in $(\max\{0,t_r - 4D\}, t_r + D]$.
Thus, there is a node $q \in Q_r$ that joins by time $t_r-2D$.

Suppose $q$ receives $r$'s query message at time $t' \geq t_r \geq t_w + 2D $ If $q \in S_0$, then $t_q^j = 0 \leq t_w$, so, by Lemma~\ref{lem:lin1}, $ts_q^{t'} \ge ts^{wp}(op_1) =  \tau_w$. Otherwise,  $q \not\in S_0$, so $0 < t_q^j \leq t_r - 2D < t'$. Since $t_w+2D < t_r \leq t'$, Lemma~\ref{lem:lin4} implies that $ts_q^{t'} \ge  ts^{wp}(op_1) =  \tau_w$. In either case, 
$q$ responds to $r$'s query message with a timestamp at least as large as $\tau_w$ and, hence, $\tau_r \geq \tau_w$.

\medskip
\noindent
{\bf Case II:} $t_r \le t_w + 2D$.

Let 
$J = \{ p~|~t_p^j < t_r$ and $p$ is active at time $t_r\} \cup \{p~|~ t_r \leq t_p^j \leq t_r+D\}$, which contains the set of all nodes that reply to $r$'s query. By Theorem~\ref{thm:joins}, all nodes that are present at time
$\max\{0,t_r-2D\}$ join by time $t_r$ if they remain active.
Therefore all nodes in $J$ are either active at time $\max\{0,t_r-2D\}$
or enter during $(\max\{0,t_r-2D\},t_r+D]$.
By Lemma~\ref{lem:size}, $|J| \leq (1+\alpha)^3N(\max\{0,t_r-2D\})$.

Let $K$ be the set of all nodes that are present at time $\max\{0,t_r-2D\}$ and do not leave or crash during $(\max\{0,t_r-2D\},t_r+D]$.
Note that $K$ contains all the nodes in $Q_w$ that do not leave or crash during $[t_w, t_r + D] \subseteq [\max\{0,t_r -2D\},t_r+D]$.
By Lemma~\ref{lem:left2} and Assumption~(\ref{parameters:G}), at most $(1-(1-\alpha)^3)N(\max\{0,t_r-2D\})$ nodes leave during $[\max\{0,t_r -2D\},t_r+D]$. By Lemma~\ref{lem:size}, at most $ ((1+\alpha)^3-1) N(\max\{0,t_r-2D\})$ nodes enter during $[\max\{0,t_r -2D\},t_r+D]$. So, at most $ \Delta(1+\alpha)^3  N(\max\{0,t_r-2D\})$ nodes are crashed at $t_r+D$. 

From the code, $|Q_r| \geq \beta M_r $ and, by Lemma~\ref{lem:members-2D}, $M_r \geq (1-\alpha)^4  N(\max\{0,t_r-4D\})$. So,
\begin{align*}
|Q_r| \geq \beta (1-\alpha)^4  N(\max\{0,t_r-4D\}).
\end{align*}
Similarly,
\begin{align*}
|Q_w| \geq \beta M_w  \geq \beta (1-\alpha)^4  N(\max\{0,t_w-4D\}).
\end{align*}
Therefore, by the definition of $K$:
\remove{\faith{IT IS NOT THE DEFINITION OF $K$, BUT THE SENTENCE FOLLOWING THE DEFINITION OF $K$.
THIS COMMENT WOULD BE FINE IF $K$ WAS DEFINED TO BE THE SET OF ALL NODES THAT ARE PRESENT AT TIME
$\max\{0,t_r-2D\}$ AND DO NOT LEAVE OR CRASH DURING $(\max\{0,t_r-2D\},t_r+D]$.}}
\begin{align*}
|K| &\ge |Q_w| -(1-(1-\alpha)^3 + \Delta  (1+\alpha)^3)N ( \max\{0,t_r-2D\})
\end{align*}
\begin{align}
    &\ge (\beta (1-\alpha)^4 N(\max\{0,t_w-4D\})) - (1-(1-\alpha)^3 + \Delta (1+\alpha)^3) N ( \max\{0,t_r-2D\}). \label{inequality:k}
\end{align}
Since $t_r - t_w < 2D$, it follows that
$\max\{0,t_r - 4D\} - \max\{0,t_w - 4D\} < 2D$.
By Lemma~\ref{lem:size}, $N(\max\{0,t_r - 4D\}) \le (1+\alpha)^2 \cdot N(\max\}0,t_w - 2D\})$.
Thus we can replace $N(\max\{0,t_w - 4D\})$ in Formula (\ref{inequality:k}) with $(1+\alpha)^2 \cdot N(\max\{0, t_r - 4D\})$ and get:

\remove{Since $t_w < t_r \leq t_w + 2D$, it implies that 
$t_w-4D < t_r -4D\leq t_w + 2D-4D$. 
Since $t_r-4D$ is in the interval $(t_w-4D, t_w-2D]$,  the system size at $t_r-4D$, $N(\max\{0,t_r-4D\}) $ is upper bounded by the maximum system size at $t_w-2D$. By Lemma~\ref{lem:size}, we know that the system size at $t_w-2D$ is upper bounded by the inequality $N(\max\{0,t_w-2D\}) \leq (1+\alpha)^2 N(\max\{0,t_w-4D\}) $. Thus, using the above two relations we get $N(\max\{0,t_w-4D\}) \geq (1+\alpha)^{-2}N(\max\{0,t_r-4D\})$.  Thus,}

\begin{align*}
|Q_r|+ |K|
&\geq
   \beta (1-\alpha)^4 N(\max\{0,t_r-4D\}) \\
&\hspace*{.3in} +\beta (1-\alpha)^4 (1+\alpha)^{-2}
     N(\max\{0,t_r-4D\}) \\
&\hspace*{.3in}- (1-(1-\alpha)^3+ \Delta(1+\alpha)^3)N(\max\{0,t_r-2D\})\\
&= \beta (1-\alpha)^4 (1+\alpha)^{-2} (2+2\alpha+\alpha^2) N(\max\{0,t_r-4D\}) \\
&\hspace*{.3in}-(\Delta(1+\alpha)^3 - (1-\alpha)^3 + 1 )N(\max\{0,t_r-2D\}).
\end{align*}
 By Lemma~\ref{lem:size},
$N(\max\{0,t_r-4D\}) \geq 
(1-\alpha)^{-2}N(\max\{0,t_r-2D\})$. Thus,
\begin{align*}
|Q_r|+ |K|
&\geq \beta (1-\alpha)^2 (1+\alpha)^{-2} (2+2\alpha+\alpha^2) N(\max\{0,t_r-2D\})\\
&\hspace*{.3in}-(\Delta(1+\alpha)^3 - (1-\alpha)^3 + 1)N(\max\{0,t_r-2D\})\\
&= (\beta  (1-\alpha)^2 (1+\alpha)^{-2} (2+2\alpha+\alpha^2) -(\Delta(1+\alpha)^3 - (1-\alpha)^3 + 1))N(\max\{0,t_r-2D\}) \\
\end{align*}
By Assumption~(\ref{parameters:F}),
$\beta(1-\alpha)^2 (1+\alpha)^{-2} (2+2\alpha+\alpha^2) > (1+\Delta)(1+\alpha)^3 - (1-\alpha)^3 + 1$, so
\begin{align*}
{\rm  \hspace*{.1in}  } |Q_r| + |K|
&>  (((1+\Delta)(1+\alpha)^3 - (1-\alpha)^3 + 1)\\
&\hspace*{.4in}-(\Delta(1+\alpha)^3 -  (1-\alpha)^3 + 1))\cdot N(\max\{0,t_r-2D\})\\
&=  (1 + \alpha)^3N(\max\{0,t_r-2D\})\\
&\ge |J|.
\end{align*}
This implies that $K$ and $Q_r$ intersect, since $K,Q_r \subseteq J$.
For each node $p$ in the intersection, $ts_p \ge \tau_w$
when $p$ sends its response to $r$ and, thus, $\tau_w\le \tau_r$. 


\end{proof}

\begin{lemma}
\label{lem:lin6}
For any two operations $op_1$ and $op_2$ in $\cal T$ such that $op_1$ finishes before $op_2$ starts, 
\\$(a)$  $ts(op_1) \leq ts(op_2)$ if $op_2$ is a read, and\label{lem:lin6a}
 \\$(b)$  $ts(op_1) < ts(op_2)$ if $op_2$ is a write.\label{lem:lin6b}
\end{lemma}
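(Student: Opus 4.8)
The plan is to obtain Lemma~\ref{lem:lin6} as a short corollary of Lemma~\ref{lem:lin5}, which already supplies $ts^{wp}(op_1) \le ts^{rp}(op_2)$ whenever $op_1$ finishes before $op_2$ starts. All that remains is to connect the quantity $ts(op)$ appearing in the statement with the read-phase and write-phase timestamps used in Lemma~\ref{lem:lin5}, and this is read off directly from the client code in Algorithm~\ref{algo:rw_client}, using the modeling assumption that the code handling each event runs without interruption. In particular, $ts_p$ does not change between the end of a read phase and the start of the ensuing write phase except through the explicit assignments made in BeginWritePhase.

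First I would establish two code-level facts. Fact (i): for every operation $op \in \cal T$, $ts(op) = ts^{wp}(op)$. For a write this holds by definition of $ts$. For a read, BeginWritePhase executes only the $read\_pending$ branch and leaves $seq$ and $id$ untouched (lines~\ref{line:new timestamp1}--\ref{line:new timestamp2} run only when $write\_pending$ holds), so $ts^{wp}(op) = ts^{rp}(op) = ts(op)$. Fact (ii): if $op_2$ is a write, then $ts^{rp}(op_2) < ts^{wp}(op_2)$, because its write phase increments $seq$ (line~\ref{line:new timestamp}) and sets $id := p_2$ (line~\ref{line:new timestamp2}), producing a lexicographically strictly larger $(seq,id)$ pair than the one held at the end of the read phase.

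With these in hand both parts follow immediately. For part (a), where $op_2$ is a read, Fact (i) gives $ts(op_1) = ts^{wp}(op_1)$ and $ts(op_2) = ts^{rp}(op_2)$, so Lemma~\ref{lem:lin5} yields $ts(op_1) = ts^{wp}(op_1) \le ts^{rp}(op_2) = ts(op_2)$. For part (b), where $op_2$ is a write, Fact (i) again gives $ts(op_1) = ts^{wp}(op_1)$ and $ts(op_2) = ts^{wp}(op_2)$, and combining Lemma~\ref{lem:lin5} with Fact (ii) gives $ts(op_1) = ts^{wp}(op_1) \le ts^{rp}(op_2) < ts^{wp}(op_2) = ts(op_2)$, which is the desired strict inequality.

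There is essentially no deep obstacle here, since the real work — showing that the write-phase value of the earlier operation is visible to the read phase of the later one — is already done in Lemma~\ref{lem:lin5}. The only point requiring care is the strictness in part (b): I must verify that the sequence-number increment in the write phase genuinely produces a timestamp strictly above $ts^{rp}(op_2)$, which relies on the monotonicity of $seq$ and on the observation that, once $seq$ strictly increases, the tie-breaking by $id$ is irrelevant. I would also confirm that $op_2$ being a write in $\cal T$ means it executes line~\ref{line:bcast update1}, so that $ts^{wp}(op_2)$ is well defined.
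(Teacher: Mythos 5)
Your proof is correct and follows essentially the same route as the paper: both derive the lemma directly from Lemma~\ref{lem:lin5} by relating $ts(op)$ to the phase timestamps $ts^{rp}$ and $ts^{wp}$, and both obtain the strictness in part~$(b)$ from the $seq$ increment in the write phase. Your Facts (i) and (ii) simply make explicit the code-level observations the paper invokes in one line ("by definition" and "$ts^{wp}(op_2)=ts^{rp}(op_2)+1$"), which is a welcome bit of extra care but not a different argument.
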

\begin{proof}
By definition of the timestamp of an operation, $ts(op_1) \le ts^{wp}(op_1)$ and $ts^{rp}(op_2) \le ts(op_2)$.
By Lemma~\ref{lem:lin5}, $ts^{wp}(op_1) \le ts^{rp}(op_2)$, and thus part $(a)$ follows.  Part $(b)$ follows from the observation that when $op_2$ is a write, $ts^{wp}(op_2) = ts^{rp}(op_2) + 1$.
\end{proof}

\begin{theorem}
 \label{thm:atomicity}
\AlgName{} ensures atomicity.
\end{theorem}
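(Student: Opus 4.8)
The plan is to exhibit an explicit linearization of the operations in $\cal T$, sorted essentially by timestamp, and then check the two atomicity requirements from Section~\ref{section:model} against it. First I would record two structural facts. One is that all write operations in $\cal T$ have pairwise distinct timestamps: a write by node $p$ sets the $id$ component of its timestamp to $p$ and the $seq$ component to a value strictly larger than any it has previously seen (lines~\ref{line:new timestamp1}--\ref{line:new timestamp2} of Algorithm~\ref{algo:rw_client}), so two writes by the same node differ in $seq$ and two writes by different nodes differ in $id$. The other is that every timestamp other than the initial $(0,\bot)$ is created by exactly one write, which broadcasts it on line~\ref{line:bcast update1} and is therefore in $\cal T$, and that the associated value travels with the timestamp, so each timestamp is uniquely tied to one register value throughout the execution. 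I would then define the order $\pi$: list the writes in increasing timestamp order; insert each read in $\cal T$ immediately after the write with the same timestamp (placing any read whose timestamp is $(0,\bot)$, i.e.\ which returns the initial value, before all writes); and break ties among reads sharing a timestamp by their start times.

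Next I would verify the real-time ordering condition using Lemma~\ref{lem:lin6}. Suppose $op_1$ finishes before $op_2$ begins. If $op_2$ is a write, then part~(b) gives $ts(op_1)<ts(op_2)$, so $op_2$ is placed strictly after every operation carrying timestamp $ts(op_1)$, in particular after $op_1$. If $op_2$ is a read, part~(a) gives $ts(op_1)\le ts(op_2)$; when the inequality is strict the same argument applies, and when $ts(op_1)=ts(op_2)$ I split on $op_1$: if $op_1$ is a write it precedes all equal-timestamp reads by construction, and if $op_1$ is a read then both are reads with a common timestamp, so the start-time tie-break orders $op_1$ first because it finishes, hence starts, before $op_2$. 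In every case $op_1$ precedes $op_2$ in $\pi$.

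Then I would check the read-consistency condition. A read $r$ returns the value whose timestamp is $ts^{rp}(r)$. If this timestamp is $(0,\bot)$, $r$ sits before all writes and must return the initial value, which it does. Otherwise there is a unique write $w$ with $ts(w)=ts^{rp}(r)$, and $r$ is placed immediately after $w$ and after any other reads of the same timestamp; since all writes ordered after $w$ carry strictly larger timestamps, $w$ is the latest write preceding $r$ in $\pi$, and by the value/timestamp correspondence $r$ returns $value(w)$. This establishes that $\pi$ is a valid linearization, proving Theorem~\ref{thm:atomicity}.

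The genuinely load-bearing part of this argument is already discharged by Lemmas~\ref{lem:lin5} and~\ref{lem:lin6}, so the remaining obstacle is mostly bookkeeping: I must make the value/timestamp correspondence precise (arguing that a completed read's nonzero read-phase timestamp is always produced by an actual write in $\cal T$ and that the returned value is that write's value, never a mismatched pair), and I must confirm that the tie-break among equal-timestamp reads yields a genuine total order consistent with real time. A secondary deterministic tie-break (say by node id) may be needed when two reads share both a timestamp and a start time, but since such reads necessarily overlap, this choice is unconstrained by the real-time requirement and does not affect correctness.
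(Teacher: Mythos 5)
Your proposal is correct and follows essentially the same route as the paper's own proof: the identical linearization (writes sorted by their unique timestamps, reads inserted after the matching write, $(0,\bot)$ reads first, start-time tie-breaks), the same value/timestamp correspondence claim, and the same appeal to Lemma~\ref{lem:lin6} for the real-time ordering cases. Your closing observation about a secondary tie-break for equal-timestamp, equal-start-time reads is a minor refinement the paper leaves implicit, but it does not change the argument.
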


\begin{proof}
We show that, for every execution, there is a total order on the set of operations in $\cal T$ such that
 every read returns the value of the latest preceding write and, if an operation $op_1$ finishes before another operation $op_2$ begins, then $op_1$ is ordered before $op_2$.

Before describing the total order, we first argue that each write operation in $\cal T$ is assigned a unique timestamp.
  Recall that the timestamp of an operation $op$ executed by a node $p$ is the timestamp of $op$'s write phase, which is the ordered pair consisting of the values of $p$'s $seq$ and $id$ variables when $p$ executes Line~\ref{line:bcast update1} for $op$. 
   Note that the $id$ variable is equal to $p$ and the $seq$ variable is set to one greater than the largest sequence value observed during $op$'s read phase. These timestamps are unique because all nodes have unique ids, each node runs at most one write thread, and the writer remembers the last value that it assigned to $seq$. 

   {\em Claim:}  Consider any read $op_1$ in $\cal T$.  If the timestamp of $op_1$ is $(0,\bot)$, then $op_1$ returns $\bot$.  Otherwise there exists a write $op_2$ in $\cal T$ such that $ts(op_1) = ts(op_2)$ and the value returned by $op_1$ equals the value written by $op_2$.
\\This claim can be shown by a simple induction, based on the fact that every timestamp other than $(0,\bot)$ ultimately comes from Lines ~\ref{line:new timestamp}-\ref{line:new timestamp2} of Algorithm~\ref{algo:rw_client}.
  


The total order is constructed as follows.  First, order the write operations in order of their (unique) timestamps. Place each read with timestamp $(0,\bot)$ at the beginning of the total order. Place every other read immediately following the write operation it reads from.
 Finally, reorder all reads that are between the same two writes (or are before the first write or after the last write) according to the times when they start.
By the claim above, every read in the total order returns the value of the latest preceding write (or returns $\bot$ if there is no preceding write).




The rest of the proof shows that the total order respects the real-time order of non-overlapping operations in the execution.  Let $op_1$ and $op_2$ be two operations such that $op_1$ finishes before $op_2$ begins in the execution.  
 We do the proof in the following cases:
\begin{itemize}

\item \textbf{W-W:} If $op_1$ and $op_2$ are both writes, then Lemma~\ref{lem:lin6}$(b)$ implies that $ts(op_1) < ts(op_2)$ and thus the construction 
 orders $op_1$ before $op_2$.

\item \textbf{W-R:} Suppose $op_1$ is a write and $op_2$ is a read. By Lemma~\ref{lem:lin6}$(a)$ and the construction, $op_2$ is placed after the write $op_3$ that $op_2$ reads from. 
If $ts(op_1)  = ts(op_2)$ then $op_1 = op_3$ and $op_2$ is placed after $op_1$. If $ts(op_1) < ts(op_2)$ then $op_3$ is placed after $op_1$  as $ts(op_1) < ts(op_3)$ and thus $op_2$ is placed after $op_1$ in the total order.
 
\item \textbf{R-W:}  Suppose $op_1$ is a read and $op_2$ is a write.  By Lemma~\ref{lem:lin6}$(b)$, $ts(op_1) < ts(op_2)$.
Now, either $op_2$ is the first write in the execution and $op_1$'s timestamp is $(0,\bot)$ or there exists another write $op_3$ that $op_1$ reads from. 
   If $op_1$'s timestamp is $(0,\bot)$ then the construction orders $op_1$ before $op_2$. Otherwise, the construction orders $op_3$ before $op_2$.  Since $op_1$ is ordered after $op_3$ but before any subsequent write, $op_1$ precedes $op_2$ in the total order.
   
 
 


\item \textbf{R-R:} Finally, suppose that $op_1$ and $op_2$ are both reads. By Lemma~\ref{lem:lin6}$(a)$, $ts(op_1) \le ts(op_2)$. If $op_1$ and $op_2$ have the same timestamp, then they are placed after the same write (or before the first write) and the construction orders them based on their starting times. Since $op_1$ completes before $op_2$ starts, the construction places $op_1$ before $op_2$.  If $op_2$ has a timestamp greater than that of $op_1$, then $ts(op_2)$ cannot be $(0,\perp)$ and so, there is a write operation $op_3$ whose timestamp is greater than that of $op_1$ and equal to that of $op_2$. The construction places $op_1$ before $op_3$ and $op_2$ after $op_3$.

\end{itemize}

Thus CCREG ensures atomicity.

\remove{Given an execution, we order all the read operations that complete and all write operations that execute Line~\ref{line:bcast update1} of Algorithm~\ref{algo:rw_client} as follows.
First, order the write operations in order of their timestamps.
Note that all write operations have different timestamps, since each write operation by node $p$
\faith{ has a timestamp associated with its  write phase and it is} 
larger than any timestamp $p$ has previously seen, \faith{in its read phase}.
Then insert each read operation immediately following the write operation
with the same timestamp. Insert all read operations that return $\perp$ before the first write operation in the ordering according to the real times when they start.
Then, reorder all read operations that have been placed between the
     same two write operations according to their real start times.
     Finally, reorder all read operations that have been placed  after the last write according to the real times when they start.
By construction, 
every read in the ordering returns $\perp$ if no write has completed before the read completes or the value of the latest preceding write in the ordering. 
This is because, if a read $r$ has the same timestamp as a write $w$, then by the end of $r$'s read phase it learns about $w$'s timestamp and so, $r$ returns the value that $w$ writes. 
It remains to show that, 
for any two operations $op_1$ and $op_2$ in the ordering, 
if $op_1$ finishes before $op_2$ starts, then the construction orders $op_1$
before $op_2$.


By Lemma~\ref{lem:lin5}, $ts(op_1) \leq ts(op_2)$.
If $op_1$ and $op_2$ are both writes, then $op_1$ precedes $op_2$ in the total order since all writes have different timestamps. 
 If $op_1$ is a write and $op_2$ is a read, the construction places $op_2$ after $op_1$. 
 If $op_1$ is a read and $op_2$ is a write,  then $op_2$ increments the timestamp from its read phase by one on line~\ref{line:new timestamp} of Algorithm~\ref{algo:rw_client} (i.e., in its write phase). As a result $op_2$ has a timestamp larger than that of $op_1$ and the construction orders $op_2$ after $op_1$.
Suppose $op_1$ and $op_2$ are both reads.  If $op_1$ and $op_2$ have the same
     timestamp, then the construction orders them based on their starting times.  Since $op_1$ completes before $op_2$ starts, the construction places $op_1$ before $op_2$. If $op_2$ has a timestamp greater than that of $op_1$, there is a write operation $op_3$ whose timestamp is equal to $op_2$'s because only write operations can increment  timestamps.   Thus  $op_3$'s timestamp is greater than that of $op_1$'s.  The construction orders $op_3$ after $op_1$ and $op_2$ after $op_3$.
Thus \AlgName{} ensures atomicity.
}
  
 \remove{\faith{MORE EXPLANATION IS NEEDED TO JUSTIFY THIS SENTENCE: Sapta: Consult again}
Since each operation consists of a read phase followed by a write phase,
it suffices to show that $ts^{wp}(op_1) \leq ts^{rp}(op_2)$ \faith{ SAPTA:  i.e. the write phase of $op_1$ executes Line~\ref{line:bcast update1} of Algorithm~\ref{algo:rw_client} before the if condition on Line~\ref{line:quorum reached} of Algorithm~\ref{algo:rw_client} for the read phase of $op_2$ evaluates to true.}}

\remove{Let $p_1$ be the node that invokes $op_1$, let $w$ denote the write phase of $op_1$,
let $t_w$ be the start time of $w$, and let $\tau_w = ts^{wp}(op_1) = ts^{t_w}_{p_1}$.
Similarly, let $p_2$ be the node that invokes $op_2$, let $r$ denote the read phase of $op_2$,
let $t_r$ be the start time of $r$, and let $\tau_w = ts^{rp}(op_2) = ts^{t_r}_{p_2}$.
Let $Q_w$ be the set of nodes that $p_1$ hears from during $w$
(i.e. that sent messages causing $p_1$ to increment $heard\_from$
on line~\ref{line:inc heard from ack} of Algorithm~\ref{algo:rw_client})
and $Q_r$ be the set of nodes that $p_2$ hears from during $r$
(i.e. that sent messages causing $p_2$ to increment $heard\_from$
on line~\ref{line:inc heard from resp} of Algorithm~\ref{algo:rw_client}).
Let $P_w 
= |Present_{p_1}^{t_w}| 
$ and $M_w 
 = |Members_{p_1}^{t_w}|
 $ be the sizes
of the {\em Present} and {\em Members} sets 
belonging to 
$p_1$ at time $t_w$, and
$P_r 
= |Present_{p_2}^{t_r}| 
$ and $M_r 
= |Members_{p_2}^{t_r}| 
$ be the sizes of the {\em Present} and {\em Members} sets
belonging to 
$p_2$ at time $t_r$.

\medskip

\noindent
{\bf Case I:} $t_r > t_w + 2D$.

We start by showing there exists a node $q$ in $Q_r$ such that $t_q^j \leq
t_r-2D
$.
Each node $q \in Q_r$ receives and responds to $r$'s query,
so it
joined 
by time $t_r+D$.
By Theorem~\ref{thm:joins},
the number of nodes that can join during
$(t_r - 2D, t_r + D]$ 
is at most the
number of nodes that can enter in $(\max\{0,t_r - 4D\}, t_r + D]$.
By Lemma~\ref{lem:size}, the number of nodes that can enter during
$(\max\{0,t_r - 4D\}, t_r + D]$ is at most
$((1 + \alpha)^5 - 1) \cdot N(\max\{0,t_r - 4D\})$.
By Lemma~\ref{lem:members-2D}, $N(\max\{0,t_\faith{r} - 4D\}) \le M_r/(1-\alpha)^4$.
From the code and Assumption~(\ref{parameters:E}), it follows that
$|Q_r| \ge \beta M_r  > M_r (1 + \alpha)^5 - 1)/(1 - \alpha)^4 \geq (1 + \alpha)^5 - 1) \cdot N(\max\{0,t_r - 4D\})$, 
which is at most
the number of nodes that can enter in $(\max\{0,t_r - 4D\}, t_r + D]$.
Thus, there is a node $q \in Q_r$ that joins by time $t_r-2D$.

Suppose $q$ receives $r$'s query message at time $t' \geq t_r 
\geq t_w + 2D $ 
If $q \in S_0$, then $t_q^j = 0 \leq t_w$, so, by Lemma~\ref{lem:lin1},
$ts_q^{t'} \ge 
ts^{wp}(op_1) = 
 \tau_w$.
Otherwise, 
 $q \not\in S_0$, so $0 < t_q^j \leq t_r - 2D < t'$.
Since $t_w+2D < t_r \leq t'$,
Lemma~\ref{lem:lin4}
implies that $ts_q^{t'} \ge  
ts^{wp}(op_1) = 
 \tau_w$.
In either case, 
$q$ responds to $r$'s query
message with a timestamp at least as large as $\tau_w$ and,
hence,
$\tau_r \geq \tau_w$.
}

\remove{
\medskip

\noindent
{\bf Case II:} $t_r \le t_w + 2D$.

Let 
$J = \{ p~|~t_p^j < t_r$ and $p$ is active at time $t_r\} \cup \{p~|~ t_r \leq t_p^j \leq t_r+D\}$,
which contains the set of all nodes that reply to $r$'s query. 
By Theorem~\ref{thm:joins}, all nodes that are present at time
$\max\{0,t_r-2D\}$ join by time $t_r$ if they remain active.
Therefore all nodes in $J$ are either active at time $\max\{0,t_r-2D\}$
or enter during $(\max\{0,t_r-2D\},t_r+D]$.
By Lemma~\ref{lem:size}, $|J| \leq (1+\alpha)^3N(\max\{0,t_r-2D\})$.


Let $K$ be the set of all nodes that are present at time $\max\{0,t_r-2D\}$ and do not leave or crash during $(\max\{0,t_r-2D\},t_r+D]$.
\remove{\faith{I DON'T SEE WHY THE FOLLOWING SENTENCE IS TRUE. WHAT ABOUT A NODE THAT ENTERS AND JOINS
JUST AFTER $t_r$?  AND WHY DOES $p \in Q_w$ IMPLY THAT $ts_p^{t_r} \geq \tau_w$?  SAPTA: This is because, $op_1$ finishes before $op_2$ starts and so all nodes that are in $Q_w$ have to be joined in the system before $t_r$ i.e. before $op_2$ starts. Since $op_2$ starts after $op_1$ finishes, all $p \in Q_w$ will have a timestamp at least as big as that of $op_1$'s write phase. ALSO CONSULT DR. WELCH
}}
Note that $K$ contains all the nodes in $Q_w$ that do not leave or
crash 
during $[t_w, t_r + D] \subseteq [\max\{0,t_r -2D\},t_r+D]$.
By Lemma~\ref{lem:left2} and Assumption~(\ref{parameters:G}), at most $(1-(1-\alpha)^3)N(\max\{0,t_r-2D\})$ nodes leave during 
$[\max\{0,t_r -2D\},t_r+D]$. 
By Lemma~\ref{lem:size}, at most $ ((1+\alpha)^3-1) N(\max\{0,t_r-2D\})$
nodes
enter during 
$[\max\{0,t_r -2D\},t_r+D]$. 
So, 
at most $ \Delta(1+\alpha)^3  N(\max\{0,t_r-2D\})$ nodes are crashed 
at $t_r+D$. 
}
\remove{
From the code, $|Q_r| \geq \beta M_r $ and, by Lemma~\ref{lem:members-2D},
$M_r \geq (1-\alpha)^4  N(\max\{0,t_r-4D\})$. So,
\begin{align*}
|Q_r| \geq \beta (1-\alpha)^4  N(\max\{0,t_r-4D\}).
\end{align*}
Similarly,
\begin{align*}
|Q_w| \geq \beta M_w  \geq \beta (1-\alpha)^4  N(\max\{0,t_w-4D\}).
\end{align*}
Therefore, by the definition of $K$:
\remove{\faith{IT IS NOT THE DEFINITION OF $K$, BUT THE SENTENCE FOLLOWING THE DEFINITION OF $K$.
THIS COMMENT WOULD BE FINE IF $K$ WAS DEFINED TO BE THE SET OF ALL NODES THAT ARE PRESENT AT TIME
$\max\{0,t_r-2D\}$ AND DO NOT LEAVE OR CRASH DURING $(\max\{0,t_r-2D\},t_r+D]$.}}
\begin{align*}
|K| &\ge |Q_w| -(1-(1-\alpha)^3 + \Delta  (1+\alpha)^3)N ( \max\{0,t_r-2D\})\\
    &\ge (\beta (1-\alpha)^4 N(\max\{0,t_w-4D\})) - (1-(1-\alpha)^3 + \Delta (1+\alpha)^3) N ( \max\{0,t_r-2D\}).
\end{align*}
Since $t_w < t_r \leq t_w + 2D$, it implies that 
$t_w-4D < t_r -4D\leq t_w + 2D-4D$. 
Since $t_r-4D$ is in the interval $(t_w-4D, t_w-2D]$,  the system size at $t_r-4D$, $N(\max\{0,t_r-4D\}) $ is upper bounded by the maximum system size at $t_w-2D$.
By Lemma~\ref{lem:size}, we know that the system size at $t_w-2D$ is upper bounded by the inequality $N(\max\{0,t_w-2D\}) \leq (1+\alpha)^2 N(\max\{0,t_w-4D\}) $. 
Thus, using the above two relations we get $N(\max\{0,t_w-4D\}) \geq (1+\alpha)^{-2}N(\max\{0,t_r-4D\})$.  Thus,
\remove{
Note that $K,Q_r \subseteq J$.
If we show that $|K|+|Q_r| > |J|$, then $K$ and $Q_r$ intersect.
For each node $p$ in the intersection, $ts_p \ge \tau_w$
when $p$ sends its response to $r$, and thus $\tau_r \ge \tau_w$.
Thus, it suffices to show that
\begin{multline*}
\left(\beta (1-\alpha)^4 ((1-\alpha)^{-2}N(\max\{0,t_r-4D\}\right) \\
  + N(\max\{0,t_r-4D\})  -(1-(1-\alpha)^3)N(\max\{0,t_r-2D\}) \\
>(1+\alpha)^3N(\max\{0,t_r-2D\})
\end{multline*}
or, equivalently,
\begin{multline*}
(\beta (1-\alpha)^4 (1+(1-\alpha)^{-2})N(\max\{0,t_r-4D\})\\
 > (1-(1-\alpha)^3+(1+\alpha)^3)N(\max\{0,t_r-2D\})
\end{multline*}
Hence, it is sufficient to have
\begin{multline*}
(\beta (1-\alpha)^2 (1+(1-\alpha)^{-2})N(\max\{0,t_r-2D\})\\
  > (1-(1-\alpha)^3+(1+\alpha)^3)N(\max\{0,t_r-2D\})
\end{multline*}
or, equivalently, $\beta > (1+6\alpha + 2\alpha^3)/(2-2\alpha +\alpha^2)$
(Assumption~(\ref{parameters:F})).
} 
\begin{align*}
|Q_r|+ |K|
&\geq
    \beta (1-\alpha)^4 N(\max\{0,t_r-4D\}) \\
&\hspace*{.3in} +\faith{ \beta (1-\alpha)^4 (1+\alpha)^{-2}}
     N(\max\{0,t_r-4D\}) \\
&\hspace*{.3in}- (1-(1-\alpha)^3+ \Delta(1+\alpha)^3)N(\max\{0,t_r-2D\})\\
&= \beta \faith{(1-\alpha)^4 (1+\alpha)^{-2} (2+2\alpha+\alpha^2)} N(\max\{0,t_r-4D\}) \\
&\hspace*{.3in}-(\Delta(1+\alpha)^3 - (1-\alpha)^3 + 1 )N(\max\{0,t_r-2D\}).
\end{align*}
 By Lemma~\ref{lem:size},
$N(\max\{0,t_r-4D\}) \geq 
(1-\alpha)^{-2}N(\max\{0,t_r-2D\})$. Thus,
\begin{align*}
|Q_r|+ |K|
&\geq \beta \faith{(1-\alpha)^2 (1+\alpha)^{-2} (2+2\alpha+\alpha^2) }N(\max\{0,t_r-2D\})\\
&\hspace*{.3in}-(\Delta(1+\alpha)^3 - (1-\alpha)^3 + 1)N(\max\{0,t_r-2D\})\\
&= (\beta  \faith{(1-\alpha)^2 (1+\alpha)^{-2} (2+2\alpha+\alpha^2)} -(\Delta(1+\alpha)^3 - (1-\alpha)^3 + 1))N(\max\{0,t_r-2D\}) \\
\end{align*}
By Assumption~(\ref{parameters:F}),
\faith{$\beta(1-\alpha)^2 (1+\alpha)^{-2} (2+2\alpha+\alpha^2) > (1+\Delta)(1+\alpha)^3 - (1-\alpha)^3 + 1$, so}
\begin{align*}
{\rm  \hspace*{.1in}  } |Q_r| + |K|
&>  (((1+\Delta)(1+\alpha)^3 - (1-\alpha)^3 + 1)\\
&\hspace*{.4in}-(\Delta(1+\alpha)^3 -  (1-\alpha)^3 + 1))\cdot N(\max\{0,t_r-2D\})\\
&=  (1 + \alpha)^3N(\max\{0,t_r-2D\})\\
&\ge |J|.
\end{align*}
This implies that $K$ and $Q_r$ intersect, since $K,Q_r \subseteq J$.
For each node $p$ in the intersection, $ts_p \ge \tau_w$
when $p$ sends its response to $r$ and, thus, $\tau_r \ge \tau_w$.
}
\end{proof}


The {\sc \AlgName{}} algorithm violates atomicity if our churn assumption $\alpha$ is violated.  This is demonstrated by the following
execution, in which large numbers of nodes enter and leave very quickly.

Let $|S_0| = n$ and let $p$ be some node in $S_0$.  Suppose the
following sequence of events occur before time $D$.  First, a set of
nodes, denoted $S_{new}$, enter the system, with $|S_{new}| = m \gg n$.
All join-related messages between $S_0 - \{p\}$ and $S_{new} \cup \{p\} $ take $D$ time, while the rest are very fast.  As a result, $p$
is the first joined node that nodes in $S_{new}$ hear from and they
use $p$'s estimate of the system size as being $n$ to calculate the
number of messages they should hear from before joining.  Thus all
nodes in $S_{new}$ join before time $D$ but no node in $S_0$ other
than $p$ knows about $S_{new}$ so far.

Second, immediately after joining, some node $q$ in $S_{new}$ invokes
{\em write}$(1)$.  All write-related messages between $S_0$ and
$S_{new}$ take $D$ time, while the rest are very fast.  $S_{new}$ is
sufficiently large that the write protocol completes for $q$ based
solely on hearing from nodes in $S_{new}$.  Thus the write completes
before time $D$ but no node in $S_0$ knows about the enters or the
write so far.

Third, immediately after the write finishes, all the nodes in
$S_{new}$ leave.  All leave-related messages between $S_0$ and
$S_{new}$ take $D$ time, while the rest are very fast.  Thus no node
in $S_0$ knows about the enters, the write, or the leaves so far.

Finally, immediately after the leaves, node $p' \ne p$ in $S_0$
invokes a read.  All read-related messages are very fast.  Node $p'$
uses its estimate of the system size as $n$ to decide how many
messages to wait for and is able to complete its 
read before time $D$
by hearing only from nodes in $S_0 - \{p\}$.  Since none of these
nodes knows anything about the write, the read returns 0, which
violates atomicity (as well as regularity).

\remove{
The following two lemmas prove that the failure detection protocol in the system, successfully detects failed nodes withing two rounds of failure detection from the time of a failure.

\begin{lemma}
\label{lem:fd1}
For every correct node $p$, if $suspect_p[q]$ is true after a round of failure detection, then $q$ has crashed.
\end{lemma}

\begin{proof}

If a node which is in the system, does not leave and fails to respond to a failure detection $ping$ message within $2D$ time, it has surely crashed. 

From Lemma~\ref{lem:present-2D}, we know
that irrespective of the churn,
 the number of nodes in the system will not exceed $k \cdot |Present|_p^t$ (where $k = (1+\alpha)^2/(1-\alpha)^2 $).
So, the maximum number of changes that can happen in the interval $t,t+2D$ is $((1+\alpha)^2 - 1) \cdot k \cdot |Present|_p^t$. Since the failure detection module counts these many changes(lines 86 and 93 of Algorithm~\ref{algo:FailDet}), we can safely say that at least $2D$ time has elapsed from the start to the end of the round and nodes that failed to reply and didn’t leave must surely have crashed.

 \end{proof}

\begin{lemma}
\label{lem:fd2}
If node $q$ crashes at time $t$, then every live server $p$, suspects $q$ at all times $t'$ after two rounds of failure detection have elapsed since $t$.

\end{lemma}

\begin{proof}

Let live node $p$, start a failure detection round at time $t$. Here we consider the case when a node $q$ crashes immediately after sending a $ping\_ack$ corresponding to a $ping$ from $p$. Then, $p$ does not suspect $q$ in the current round. Instead $p$ has to wait for the next round to complete in order to suspect $q$. Since each round is an over estimate of maximum number of changes in $2D$ time, $p$ suspects $q$ after at most $((1+\alpha)^4 - 1) \cdot k\cdot |Present|_p^t $ where $(k= (1 + \alpha)^2/(1 - \alpha)^2 )$.  If a node crashes before it receives a $ping$ message, it is unable to send a $ping\_ack$ and is detected in the same round.

In all cases, a crashed node is detected within either the same round of failure detection, where it crashed or the next one, thus the above bound holds.

 \end{proof}
}

\remove{

In the proof of Lemma 5, we want to use Lemma 4 to show that
{\em SysInfo}$^{(\max\{0,t'-D\},t_p^e]} \subseteq$ {\em Changes}$_p^t$.

We ``call'' Lemma 4 with these ``parameters'': \\
$Q$ is replaced with $q$ \\
$T''$ is replaced with $t''$ \\
$T'$ is replaced with $t'$ \\
$T$ is replaced with $t_p^e$ \\
$U$ is replaced with  $t'-D$

The hypothesis now becomes:  ``
Suppose a node $p \notin S_0$ receives an enter-echo message at time
$t''$ from a node $q$ that sends it at time $t'$ in response to an enter
message from $p$.
Let $t_p^e$ be such that $\max\{0,t''-2D\} \le t_p^e \le t_p^e$.
Suppose $p$ is active at time $t_p^e + 2D$ and
$q$ is active throughout $[t'-D,t_p^e + D]$, where
$t'-D \le \max\{0,t''-2D\}$.''
Check that all these statements are true. ***

If so, we can conclude:
{\em SysInfo}$^{(\max\{0,t'-D\},t_p^e]} \subseteq$ Changes$_p^{t_p^e + 2D}$.

************

Lemma 6 states that if $p$ joins at some $t_p^j$ and is active at
some $t \ge t_p^j$, then {\em SysInfo}$^{[0,\max\{0,t-2D\}]}
\subseteq$ {\em Changes}$_p^t$.
The proof is by induction on the order in which nodes join.

The case when $t \ge t_p + 2D$ follows from Lemma 5.

Then the case when $t < t_p^e + 2D$ and $t \le 2D$ is easily dealt with.

We are left with the case when $t < t_p^e + 2D$ and $t > 2D$:

$q$ is identified as the first joined node from which $p$
receives an enter-echo message in response to its ($p$'s) enter message.
We then argue that there must exist a (possibly different) node $q'$
that is active throughout $[\max\{0,t'-2D],t'+D]$, where $t'$ is the time
when $q$ sends the enter-echo to $p$.

The inductive hypothesis shows that
{\em SysInfo}$^{[0,\max\{0,t'-2D\}]} \subseteq$ {\em Changes}$_p^t$.
We are left with the gap between $\max\{0,t'-2D\}$ and $t-2D$.
We want to use Lemma 4 to show that
{\em SysInfo}$^{(\max\{0,t'-2D\},t-2D]} \subseteq$ {\em Changes}$_p^t$.

We ``call'' Lemma 4 with these ``parameters'': \\
$Q$ is replaced with $q'$\\
$T''$ is replaced with $\tau''$ (the time when $p$ receives enter-echo
   from $q'$)\\
$T'$ is replaced with $\tau'$ (the time when $q'$ sends enter-echo to $p$)\\
$T$ is replaced with $t-2D$ \\
$U$ is replaced with $\max\{0,t'-2D\}$\\

The hypothesis now becomes ``
Suppose a node $p \notin S_0$ receives an enter-echo message at time
$\tau''$ from a node $q'$ that sends it at time $\tau'$
in response to an enter message from $p$.
Let $t-2D$ be any time such that $\max\{0,\tau''-2D\} \le t-2D \le t_p^e$.
Suppose $p$ is active at time $t - 2D + 2D$ and
$q'$ is active throughout $[\max\{0,t'-2D\},t-2D+D]$, where
$\max\{0,t'-2D\} \le \max\{0,\tau''-2D\}$.''
Check that all these statements are true.***

If so, we can conclude
{\em SysInfo}$^{(\max\{0,t'-2D\},t-2D]} \subseteq$ {\em Changes}$_p^t$.
}

\remove{
\begin{lemma}
\label{lem:gen1}
Suppose a node $p \not\in S_0$ receives an enter-echo message at time $t''$ from
a node $q$ that sent it at time $t'$ in response to an enter message from $p$.
If  $p$ is active at time $t+2D$ and
$q$ is active throughout
$[\max\{0,t'-2D\},t + D]$,
where $\max\{0,t''-2D\}\leq t \leq t_p^e$,
then \knows{p}{t+2D}{(\max\{0,t'-2D\},t]}.
\end{lemma}


\begin{proof}
Consider any node $r$ that enters, joins, or leaves at time $\hat{t}$, where $\max\{0,t'-2D\} < \hat{t} \leq t$.
If $q$ receives the message about this
change
from $r$ before the enter message from $p$, then
the change
is in $\mbox{\it Changes}_p^{t''} \subseteq \mbox{\it Changes}_p^{t+2D}$.
Otherwise,  $q$ receives the
message from $r$ after the enter message from $p$ and sends an echo message
in response by time $\hat{t}+D$.
Since $p$ receives this message from $q$ by time $\hat{t}+2D \leq t+2D$, it follows that
the change
is in $\mbox{\it Changes}_p^{t+2D}$.
Thus, \knows{p}{t+2D}{(\max\{0,t'-2D\},t]}.
\end{proof}
}

\remove{ 
\begin{lemma}
\label{lem:M_bound}
For every node $p \in S_0$ and for every time $t \geq 0$ at which $p$ is active,
$$|\mbox{\it Members}_p^t| \geq |\mbox{\it Present}_p^t| \cdot \left( 1- \frac{\alpha(\alpha^2-\alpha+3)}{(1-\alpha)^3} \right ) - f.$$
\end{lemma}

\begin{proof}
Since $p \in S_0$,  $\mbox{\it Members}_p^0 = \mbox{\it Present}_p^0$.
Now suppose that $t > 0$.
By Observation~\ref{obs:S2}, $\mbox{\it Members}_p^t \subseteq \mbox{\it Present}_p^t$.
Consider any  node $q  \in \mbox{\it Present}_p^t - \mbox{\it Members}_p^t$.
Then, at time $t$, $q$ has entered, but either $q$ crashed before joining, $q$ has not yet joined,
or $q$ has joined, but $p$ has not yet received $q$'s join message.
If $q$ does not crash, then, by
Theorem~\ref{thm:joins}, $q$ joins or leaves by time $t_q^e+2D$ and its join or leave message
is received by $p$ by time $t_q^e+3D$ if $p$ is still active at this time.
Since $p$ is active at time $t$, it follows that $t_q^e+3D>t$ and $q$ entered during $(\max\{0,t-3D\},t]$.


By Lemma~\ref{lem:size},
at most $\alpha N(\max\{0,t-D\})$ nodes entered during $(\max\{0,t-D\},t]$,
at most $((1+ \alpha)^2-1)N(\max\{0,t-3D\}$
nodes entered during $(\max\{0,t-3D\},\max\{0,t-D\}]$, and $N(\max\{0,t-3D\}) \leq (1-\alpha)^{-2} N(\max\{0,t-D\})$.
Also, from Lemma~\ref{lem:present-D}, $|\mbox{\it Present}_p^t| \geq (1-\alpha) N(\max\{0,t-D\})$.
Thus
\begin{eqnarray*}
&&|\mbox{\it Present}_p^t| - |\mbox{\it Members}_p^t|\\
&&\leq \alpha  N(\max\{0,t-D\}) + ((1+ \alpha)^2-1) N(\max\{0,t-3D\}) + f\\
&&\leq \alpha  N(\max\{0,t-D\}) + ((1+ \alpha)^2-1)(1-\alpha)^{-2}]  N(\max\{0,t-D\}) + f\\
&&= [\alpha + ((1+ \alpha)^2-1)(1-\alpha)^{-2}] N(\max\{0,t-D\}) + f \\
&& \leq [\alpha + ((1+ \alpha)^2-1)(1-\alpha)^{-2}] (1-\alpha)^{-1} |\mbox{\it Present}_p^t| +f\\
&&= [\alpha(1-\alpha)^{2} + \alpha^2+2\alpha] (1-\alpha)^{-3} |\mbox{\it Present}_p^t| +f
\end{eqnarray*}
Hence, $|\mbox{\it Members}_p^t| \geq |\mbox{\it Present}_p^t|
 - |\mbox{\it Present}_p^t| \cdot \frac{ \alpha(1-\alpha)^2 + \alpha^2+2\alpha}{(1-\alpha)^3} - f
 = |\mbox{\it Present}_p^t| \cdot  \left ( 1 - \frac{\alpha(\alpha^2-\alpha +3)}{(1-\alpha)^3}\right ) - f$.
\end{proof}

The next lemma shows a lower bound on the number of nodes that will
reply to an operation's query or update message.

\begin{lemma}
\label{lem:M_bound2}
For every node $p \not\in S_0$ and for every time $t \geq t_p^j$ at which $p$ is active,
$$|\mbox{\it Members}_p^t| \geq |\mbox{\it Present}_p^t| \cdot
\left ( 1 - \frac{\alpha(\alpha+2)((1-\alpha)^2 +1)}{(1-\alpha)^4}\right ) - f.$$
\end{lemma}

\begin{proof}
By Observation~\ref{obs:S2}, $\mbox{\it Members}_p^t \subseteq \mbox{\it Present}_p^t$.
Consider any  node $q  \in \mbox{\it Present}_p^t - \mbox{\it Members}_p^t$.
Then, at time $t$, $q$ has entered, but either $q$ crashed before joining, $q$ has not yet joined,
or $q$ has joined, but $p$ has not yet received $q$'s join message.
By Lemma~\ref{lem:knowswhenjoined}, \knows{p}{t}{[0,\max\{0,t-2D\}]}.

At most $f$ nodes crash.
If $q$ joins, Theorem~\ref{thm:joins} implies $t_q^j \leq t_q^e+2D$.
Thus, if $q$ has not joined or left, it entered during $(\max\{0,t-2D\},t]$ and,
if $q$ joins, it joined during $(\max\{0,t-2D\},t]$ and, hence, entered
during $(\max\{0,t-4D\},t]$.

By Lemma~\ref{lem:size},
at most $((1+ \alpha)^2-1) N(\max\{0,t-2D\})$ nodes entered during $(\max\{0,t-2D\},t]$,
at most $((1+ \alpha)^2-1)N(\max\{0,t-4D\})$
nodes entered during $(\max\{0,t-4D\},\max\{0,t-2D\}]$,
and $N(\max\{0,t-4D\}) \leq (1-\alpha)^{-2} N(\max\{0,t-2D\})$.
Also, from Lemma~\ref{lem:present-2D}, $|\mbox{\it Present}_p^t| \geq (1-\alpha)^2 N(\max\{0,t-2D\})$.
Thus
\begin{eqnarray*}
&&|\mbox{\it Present}_p^t| - |\mbox{\it Members}_p^t|\\
&&\leq ((1+ \alpha)^2-1) N(\max\{0,t-2D\}) + ((1+ \alpha)^2-1) N(\max\{0,t-4D\}) + f\\
&&\leq ((1+ \alpha)^2-1) N(\max\{0,t-2D\}) + ((1+ \alpha)^2-1)(1-\alpha)^{-2}]  N(\max\{0,t-2D\}) + f\\
&&= ((1+ \alpha)^2-1)(1+(1-\alpha)^{-2}) N(\max\{0,t-D\}) + f \\
&& \leq ((1+ \alpha)^2-1)(1+(1-\alpha)^{-2}) (1-\alpha)^{-2}|\mbox{\it Present}_p^t| +f\\
&&= \alpha(\alpha+2)((1-\alpha)^{2}+1)(1-\alpha)^{-4} |\mbox{\it Present}_p^t| +f
\end{eqnarray*}
Hence, $|\mbox{\it Members}_p^t| \geq |\mbox{\it Present}_p^t|
 - |\mbox{\it Present}_p^t| \cdot \frac{ \alpha(\alpha+2)((1-\alpha)^2 +1)}{(1-\alpha)^4} - f
 = |\mbox{\it Present}_p^t| \cdot  \left ( 1 - \frac{\alpha(\alpha+2)((1-\alpha)^2 +1)}{(1-\alpha)^4}\right ) - f$.
\end{proof}
} 

\remove{
We say that the write event $write(v,s,i)$ occurs when node $i$ performs
a broadcast on line 33 during an execution of BeginWritePhase($v,s,i$)
that was called when its local variable $write\_pending$ was $true$.
DO WE WANT TO INCLUDE write EVENTS IN CHANGES?

Define the {\em timestamp of a write phase} of an operation to be the value
$(s,i)$ that is broadcast in the phase (cf.\ Line 33 of Algorithm 2).
Define the {\em timestamp of a read phase} of an operation to be ($s,i)$,
where $s$ is the value of the node's $seq$ variable and $i$ is the value
of the node's $id$ variable at the end of the read phase (i.e., when the
conditional in Line 10 of Algorithm 2 evaluates to true).
The {\em timestamp of a write operation} is the timestamp of the write
phase of the operation, and the {\em timestamp of a read operation}
is the timestamp of the read phase of the operation.

We say that node $p$ {\em knows about a write phase $w$} at time $t$
(*or should we use the phrase ``knows about $w$ or a later write phase''?*)
if the timestamp $(s,i)$ stored in $p$'s $seq$ and $id$ variables at
time $t$ is greater than or equal to $w$'s timestamp.

DO WE CARE ABOUT $p$ KNOWING ABOUT write PHASES
OR JUST write EVENTS?

Some facts about propagation of write phase information%

\begin{itemize}

\item (Observation Lin1) For every node $p \in S_0$ that is active at
  time $t \ge 0$, $p$ knows about write phase $w$ at time $t$ if $w$'s
  update message is broadcast in $[0,\max\{0,t-D\}]$. (*used to prove
  case 1 of linearization proof*)\\
 THIS IS A SPECIAL CASE OF Observation~\ref{obs:S1} IF WE ADD write EVENTS TO CHANGES.

\item (Lemma Lin2) For every node $p$ that is active at time $t \ge
  t_p^e + 2D$, $p$ knows about write phase $w$ at time $t$ if $w$'s
  update message is broadcast in $[0,\max\{0,t-D\}]$. (*might not be needed*)\\
  THIS IS  A SPECIAL CASE OF Lemma~\ref{lem:gen2}, IF WE ADD write EVENTS TO CHANGES.

\item (Lemma Lin3) For every node $p \notin S_0$ that is active at
  time $t \ge t_p^j$, $p$ knows about write phase $w$ at time $t$ if
  $w$'s update message is broadcast in $[0,\max\{0,t-2D\}]$.
   (*used to prove case 1 of linearization proof*)\\
   THIS IS  A SPECIAL CASE OF Lemma~\ref{lem:knowswhenjoined}, IF WE ADD write EVENTS TO CHANGES.
\end{itemize}
} 

\remove{
PERHAPS A BETTER CALCULATION:
We start by showing there exists a node $q$ in $Q_r$ such that $t_q^e \leq \max\{0,t_r-2D\}$.
Note that each node $q \in Q_r$ directly receives $r$’s query, so it entered by time $t_r+D$.
By Lemma~\ref{lem:size}, the  number of nodes that entered in $(\max\{0,t_r-2D\}, t_r+D]$ is at most  $((1 + \alpha)^3 - 1) \cdot N(\max\{0,t_r - 2D\})$.
By Lemma~\ref{lem:present-2D}, $N(\max\{0,t - 2D\}) \le P_r/(1-\alpha)^2$.
From the code, $|Q_r| \ge \beta M_r + f/2$, which is larger than
$\beta M_r \geq  \beta \delta P_r  \geq  ((1 + \alpha)^3 - 1) \cdot N(\max\{0,t - 2D\})$.
Thus  $|Q_r|$ is larger than the number of nodes that  enter in $(\max\{0,t_r-2D\}, t_r+D]$.
Suppose $q$ receives $r$’s query message at time $t’ \geq t_r$.
If $q \in S_0$, then $t_q^j = 0$ and, by Lemma~\ref{lem:lin31},
$ts_q^{t’} \ge \tau_w$.
So, suppose $q \not\in S_0$. Then $0 < t_q^e \leq t_r - 2D$.
By Theorem~\ref{thm:joins},
$t_q^j \leq t_q^e + 2D \le t_r$.
Since $t_w+2D < t_r \leq t’$,
Lemma~\ref{lem:lin4}
implies that $ts_q^{t’} \ge \tau_w$.
Thus, $q$ responds to $r$'s query
message with a timestamp at least as large as $\tau_w$ and,
as a result, $\tau_r \geq \tau_w$.
}

\remove{
PREVIOUS VERSION
{\bf Case I:} $t_r > t_w + 2D$.
Suppose we know that at least one node $q$ in $Q_r$ is in $S_0$ or
enters at least $2D$ time before $t_r$ (i.e., $t_q^e \le t_r - 2D$).

If $q \in S_0$, then Lemma~\ref{lem:lin3} implies that
$ts_q \ge \tau_w$
when $q$ receives $r$'s query message.
If $q \notin S_0$, then Lemma~\ref{lem:lin4}
implies that $ts_q \ge \tau_w$
when $q$ receives $r$'s query message.
The reason is that $q$ is active at time $t_r$, which is greater
than $t_q^j$, since by Theorem~\ref{thm:joins}
$q$ joins by time $t_q^e + 2D \le t_r$,
and is also greater than $t_w + 2D$.

By Theorem~\ref{thm:joins},
$q$ joins by time $t_r$ and thus responds to $r$'s query
message with a timestamp at least as large as $w$'s.
As a result, $r$'s timestamp is at least as large as $w$'s timestamp
and $r$'s operation is placed after $w$'s operation in the ordering.

We now show that the desired $q$ exists.
The strategy is to show that $Q_r$ is larger than the number of nodes
that could possibly enter after
$t_r - 2D$ 
and still respond to $r$'s query
message.  Thus, at least one member of $Q_r$ would have to enter by
$t_r - 2D$. 

The maximum number of nodes that can join in $[t_r - 2D, t_r + D]$ is the
maximum number of nodes that can enter in $[\max\{0,t_r - 4D\}, t_r + D]$.
By Lemma~\ref{lem:size}, the maximum number of nodes that can enter in
$[\max\{t_r - 4D\}, t_r + D]$ is
$((1 + \alpha)^5 - 1) \cdot N(\max\{0,t_r - 4D\})$.
By Lemma~\ref{lem:members-2D}, $N(\max\{0,t - 4D\}) \le |M_r|/(1-\alpha)^4$.
So the maximum number of nodes that can enter in $[\max\{t_r - 4D\},
t_r + D]$ is
$|M_r| \cdot ((1 + \alpha)^5 - 1)/(1 - \alpha)^4$.

We need $|Q_r|$
to be bigger than the quantity in the previous line.
By the code, $|Q_r| \ge \beta \cdot |M_r| + f/2$, which is larger than
$\beta \cdot |M_r|$.

By Assumption~(\ref{parameters:E}), it follows that
$\beta \cdot |M_r| > |M_r| \cdot ((1 + \alpha)^5 - 1)/(1 - \alpha)^4$.
} 

\remove{

\begin{align*}
|Q_r| &\geq \beta |M_r| + \frac{f}{2} \qquad \text{by the code} \\
      &\ge \beta\left(|P_r|(1 - \frac{\alpha(\alpha+2)((1-\alpha)^2 + 1)}{(1-\alpha)^4} \right) - f + \frac{f}{2} \qquad \text{by Lemma~\ref{lem:M_bound2}}
\end{align*}

The maximum number of nodes that can {\em join} in $[t_r - 2D, t_r + D]$
(and thus potentially respond to $r$'s query message) is the maximum number
of nodes that can {\em enter} in $[\max\{0,t_r - 4D, t_r + D]$, by
Theorem~\ref{thm:joins}.

By Lemma~\ref{lem:size},
the number of nodes that can enter in $(t_r - 2D, t_r + D]$
is at most $((1 + \alpha)^3-1) \cdot N(t_r - 2D)$.
Also by Lemma~\ref{lem:size},
the number of nodes that can enter in $(t_r - 4D, t_r - 2D]$
is at most $((1 + \alpha)^2 -1)\cdot N(t_r - 4D)$, which by Lemma~\ref{lem:size}
again, is at most
\begin{align*}
((1 + \alpha)^2-1) \left( \frac{N(t_r - 2D)}{(1-\alpha)^2} \right).
\end{align*}
Summing these two expressions, we get that the number of nodes
that can enter in $(t_r - 4D, t_r + D]$ is at most
\begin{align*}
N(t_r - 2D) \left((1 + \alpha)^3-1 + \frac{(1 + \alpha)^2-1}{(1 - \alpha)^2} \right)
\le \frac{|P_r|}{(1 - \alpha)^2} \left((1 + \alpha)^3 -1 + \frac{(1 + \alpha)^2-1}{(1 - \alpha)^2} \right) \qquad \text{by Lemma~\ref{lem:present-2D}.}
\end{align*}
To ensure that
\begin{align*}
\beta\left(|P_r|(1 - \frac{\alpha(\alpha+2)((1-\alpha)^2 + 1)}{(1-\alpha)^4} \right) - f + \frac{f}{2}
>  
\frac{|P_r|}{(1 - \alpha)^2} \left((1 + \alpha)^3 -1 + \frac{(1 + \alpha)^2-1}{(1 - \alpha)^2} \right),
\end{align*}
it suffices that
$$\delta = \beta[(1-\alpha)^4-\alpha(\alpha+2)((1-\alpha)^2 + 1)] - ((1+\alpha)^3-1)(1-\alpha)^2 - (1+\alpha)^2+1 > 0$$
and $$N_{min} \geq \frac{f}{2\delta(1-\alpha)^2},$$
since, by Lemma~\ref{lem:present-2D}, $|P_r| \geq (1-\alpha)^2 N(t_r-2D) \geq (1-\alpha)^2 N_min$.
} 

\remove{
it is sufficient for $|P_r|$ to be at least $f(\beta - 1/2)/(\beta c_1 - c_2)$,
where $c_1 = (1 - \alpha(\alpha + 2)(1 - \alpha)^2 + 1)/(1 - \alpha)^4$
and $c_2 = ((1 + \alpha)^3 + (1 + \alpha)^2)/(1 - \alpha)^2)/(1 - \alpha)^2$.

(*Presumably for this to be true, we need certain bounds on $\alpha$, $\beta$
and $N_{min}$?*)
} 

\remove{
\begin{lemma}
\label{lem:beforejoining}
For every node $p \not\in S_0$,\\
before $p$ joins, it receives an enter-echo message from a node that
is active throughout $[t_p^e-2D,t_p^e +D]$.
\end{lemma}
} 


\section{Related Work} \label{section:related}

A simple simulation of a single-writer, multi-reader register
in an asynchronous static network was presented by 
Attiya, Bar-Noy and Dolev~\cite{AttiyaBD1995}, henceforth called the ABD simulation.
Their paper also shows that it is impossible to simulate an atomic register in an asynchronous system if 
at least half
of the nodes in the system can be faulty.
It was followed by extensions that
reduce complexity~\cite{Attiya2000,DuttaGLC2004,GuerraouiL2004,GuerraouiM2007},
support multiple writers~\cite{LynchS1997},
or tolerate Byzantine failures~\cite{AbrahamCKM2006,AiyerAB2007,MartinAD2002,MalkhiR1998}.
To optimize load and resilience, the simple majority quorums
used in these papers can be replaced by other, more complicated,
quorum systems (e.g.,~\cite{MalkhiRWW2001,Vukolic2012}).

\remove{[[FE: Longer, more understandable descriptions of RAMBO and DynaStore are needed for the journal version. For example, we should explain what a system configuration is and what reconfiguration means. It would also be helpful to give some description of how they and [7] work.]]}

A survey of simulations of a atomic multi-writer, multi-reader 
register in a dynamic system with churn appears in  \cite{MusialNS2014CACM}.
The first such simulation was RAMBO~\cite{LynchS2002}.
Here, the notion of churn is abstracted by defining a sequence of \textit{quorum configurations}. Each quorum configuration consists of a set $S$ of nodes (which are called members) plus sets of read-quorums and write-quorums, each of which is a subset of $S$. 
The system supports \textit{reconfiguration}, 
in which an older quorum configuration is replaced by a newer one. 
RAMBO consists of three protocols:  \textit{ joiner}, \textit{ reader-writer}, and \textit{ recon}. The joiner protocol handles the joining of new nodes. The reader-writer protocol is responsible for executing the read and write operations using the read-quorums and write-quorums, as well as for performing garbage collection to remove old quorum configurations. The reads and writes are similar to the ABD simulation. 
 The recon protocol handles quorum configuration changes to install new quorum systems. Reconfiguration is done in two parts:  first, a member proposes a new quorum configuration. Second, these proposed configurations are reconciled by running an eventually-terminating distributed consensus algorithm (a version of the Paxos algorithm) among the members of the current quorum configuration. RAMBO requires intermittent periods of synchrony for the consensus to terminate. Reconfigurations can occur concurrently with  reads and writes. The model does not differentiate between nodes that crash and nodes that leave the system. 
 The algorithm guarantees atomicity of operations for all executions, even when there are arbitrary crashes (or leaves) and message loss.
  However, liveness is only ensured during periods when the system is sufficiently well-behaved with respect to synchrony, message loss, and churn.


DynaStore~\cite{AguileraKMS2011}  simulates an atomic  multi-writer, multi-reader  register in a dynamic system, without using consensus. 
The set of nodes that are in the system is called a {\em view}. 
The nodes start with some default initial \textit{view}.
The nodes in the current view can propose the addition
and removal of other nodes. 
\remove{\faith{For each view, there is a single-writer snapshot object that the nodes in that view use to announce proposed updates
to the current view.}
\faith{FAITH: WHY IS IT IMPORTANT TO MENTION THE SNAPSHOT OBJECT?}}
 The algorithm supports three types of operations: \textit{read}, \textit{write} and \textit{reconfig}.
Reads and writes are similar to the ABD simulation, with a read-phase
 followed by 
 a write-phase. 
The reconfig operation 
starts with a phase in which information about the new view is sent 
to a majority of nodes in the old view.
Then a read phase and a write phase are performed using the old view.
DynaStore ensures atomicity for all executions. To ensure liveness of operations, the algorithm makes two assumptions. First, at any point in time, the number of crashed nodes and the number of nodes  whose removal is pending (via reconfig) is a minority of the current view and of any pending future views. 
Second, it assumes that only a finite number of reconfiguration requests occur (i.e., churn is 
eventually 
 \textit{quiescent}).
Baldoni et al.~\cite{BaldoniBKR2009} proposed a model in which
the system size varies within a known range, that is, the nodes know
an upper and a lower bound on the system size. 
Their churn model has a similar flavor to ours, since churn never stops and
at most a \emph{constant} fraction of nodes enter and leave periodically. 
The authors implement a regular register in an eventually synchronous system. 
The algorithm has three protocols: 
\textit{ join\_register}, \textit{read}, and \textit{ write}. 
The join\_register module ensures that nodes join the system with sufficient knowledge about the system. 
The read and write protocols are similar to the ABD simulation.  
The simulation violates regularity if the churn assumption is violated. 
This can be shown with an argument similar to the one presented at 
the end of Section~\ref{section:proof}.


\remove{
in that at most a \emph{constant} fraction of nodes enter and leave periodically.
and there is an unknown upper bound on message delay.
However, in their model, the system size is assumed to be constant (and known to the nodes),
i.e., the number of nodes entering is the same as
the number of nodes leaving at each point in time.
Our model is more general, as we do not require that the system size is
always the same.
Instead, in our model, the system can grow, shrink,
or alternately grow and shrink.}

Baldoni et al. \cite{BaldoniBKR2009} also prove that it is impossible to
simulate a regular
register when there is no upper bound on message delay.
In this case, it does not help for nodes to announce when they are leaving,
since messages containing such announcements can be delayed for an arbitrarily long time.
Thus, {\em a node leaving is essentially the same as a crash.}
Their proof works by considering scenarios in which
at least half of the nodes fail. 
Then they invoke the lower bound in \cite{AttiyaBD1995},
which shows that simulating a register is impossible unless fewer than half the nodes are
faulty.
Their proof can be adapted to hold when there is an unknown upper bound, $D$,
on the message delay and half the nodes can be replaced during any time interval of length $D$,
provided that nodes are not required to announce when they leave.
Thus, in our model, 
there must be an upper bound on the fraction of nodes that can crash 
during any time interval of length $D$.
Also, it is necessary
that either nodes announce when they leave 
or there is an upper bound on the fraction of nodes 
that can leave during this time interval.

\remove{
Since regularity is weaker than atomicity, it might appear that this
result contradicts ours.
However, they assume that leaves are essentially the same as crashes,
in that they are unannounced, while we allow
a leaving node to notify the system that it is leaving.
Their proof works by considering scenarios in which
at least half of the nodes fail or leave.
Then they invoke the lower bound in \cite{AttiyaBD1995},
which shows the problem
 is impossible unless fewer than half the nodes are faulty.
(Their proof can be adapted to hold when there is an upper bound on
message delay, but it is unknown to the nodes.)
}

\remove{[[FE: I think we should talk more about these results and how their models differ from ours.
There is a commented out example of  why a bound on churn is needed that we might want to
rephrase to make more general and then include.]]}

\remove{[[FE: We should be more specific about what we mean by safety here,
which is not violating atomicity. We need a discussion of the correctness
of a simulation of a register (or registers) in the model section. Because our simulation is atomic, 
we can simulate multiple registers via simple composition. We should discuss this issue
somewhere, probably the model or discussion section.]]}

\remove{Suppose the  maximum message delay(unknown) is $D = 10$. Initially at time $t=0$, the set of nodes in the system (denoted by the set $S_0$)  contains $n$ nodes. The initial value of the register is $0$.
Suppose that at time $t=1$ a large number $m \gg n$ of new nodes enter and
at time $2$ each of them get responses from the same subset $S_{response}$ of $S_0$, where $|S_{response}|$ = $quorum\_size \cdot n$ (Note: the fraction $quorum\_size$ needs to be calculated carefully so that it is neither too big to compromise termination, not too small to compromise atomicity).(/****Should this be a footnote?****/)
At time $3$ they all join.
At time $4$ one of them starts a  $write(1)$ operation, which is completed by time $5$
after getting responses and acks from sufficiently many new nodes.
None of these messages reach any node in $S_0$ until time $10$.
At time $6$, one of the nodes in $S_0 - S$ starts a read, which is completed by time $7$
after getting responses and acks from sufficiently many  nodes in $S_0$
It returns $0$, violating atomicity.}

We summarize the  results of \cite{LynchS2002}, \cite{AguileraKMS2011} and \cite{BaldoniBKR2009}  and compare them with our algorithm  in Table~\ref{table:comparison}.
{In~\cite{SpiegelmanK2015} and~\cite{AguileraKMS2011}, it is claimed that termination of operations cannot be guaranteed unless the churn eventually stops. This claim does not contradict our result due to differences in the churn models.}
One of the contributions of this paper is to point out that by
making different, yet still reasonable, assumptions on churn it is possible
to get a solution with different, yet still reasonable, properties
and, in particular, to overcome the prior constraint that churn
must stop to ensure termination of operations.  That is, we are suggesting a
different point in the solution space.

\sapta{
\begin{table}[htbp!]
\resizebox{\textwidth}{!}{
    \begin{tabular}{|c|c|c|c|c|c|c|}
        \hline
       Algorithm  & Consistency &Synchrony & Consensus &Algorithm-Independent& Tolerates
       & Failure/Leave                 \\
       & Condition&Level & used  & Churn Assumption                        & Continuous Churn  
       &Model  \\\hline
         RAMBO~\cite{LynchS2002} &Atomic  & Intermittently    & Yes  &No&No, needs periods& 
         Leaves are  same                  \\
      &  &  Asynchronous&&& of quiescence &              as crashes \\
       & & and Synchronous  & &&&                       \\\hline
        DynaStore~\cite{AguileraKMS2011}&Atomic &  Asynchronous   & No&No & No, needs periods &
         Leaves are different     \\
    &        &      &&& of    quiescence   &     from crashes \\ \hline

        Baldoni      &Regular &  Eventually& No  &Yes      &     Yes    & 
                     Leaves are  same    \\
           et al.~\cite{BaldoniBKR2009} &  &  Synchronous                  &              &    &&  as crashes\\ \hline
\AlgName{}  &Atomic &  Asynchronous& No &Yes        & Yes    &
 Leaves are different            \\
 &  & & & &&    from crashes          \\
        \hline
    \end{tabular}
    }\caption{Comparison of our algorithm to RAMBO~\cite{LynchS2002}, DynaStore~\cite{AguileraKMS2011}, and Baldoni et al.~\cite{BaldoniBKR2009} }\label{table:comparison}
\end{table}
}

\section{Discussion}

We have shown how to simulate an atomic read/write register in a 
crash-prone asynchronous system where nodes can enter and leave.
The only assumptions are that the number of nodes entering and leaving
during each
time interval of length $D$
is at most a constant fraction of the current system size and the number of failures at any given time is a constant fraction of the system size.

It would be nice to improve the constants for
the churn rate and the maximum fraction of faulty nodes,
perhaps with a tighter analysis.Proving lower bounds or tradeoffs on these parameters is an interesting
avenue for future work.
In fact, it might be possible to completely avoid the bound $\alpha$
on the churn rate, by spreading out the handling of node joins and leaves:
To ensure a minimal number of nonfaulty nodes,
a node might need to obtain permission before leaving,
similarly to joins.
This may allow the algorithm to maintain safety even when
the churn is high.

Because of 
the bounded churn, it may be possible to implement a failure detector to get rid of crashed nodes, rather than relying on an external component for this.

\AlgName{} sends increasingly large Changes sets.
The amount of information communicated
might be reduced by sending only recent events,
or by removing very old events.
Another interesting research direction is to extend \AlgName{}
to tolerate more severe kinds of failures.

\bibliography{references}
\bibliographystyle{abbrv}

\end{document}